\documentclass[twoside,11pt]{article}
\usepackage{amsmath,amsthm,amssymb,bm,epic}
\usepackage[reset,a4paper,vmargin=3truecm,hmargin=2truecm]{geometry}
\usepackage{amsmath,amsthm,amssymb,amscd,epic}


\usepackage[all]{xy}
\usepackage{amssymb, amsmath,bm}

\numberwithin{equation}{section}

\theoremstyle{plain}
\newtheorem{thm}{Theorem}[section]
\newtheorem{lem}[thm]{Lemma}
\newtheorem{prop}[thm]{Proposition}
\newtheorem{cor}[thm]{Corollary}
\newtheorem{conjecture}[thm]{Conjecture}

\theoremstyle{definition}

\newtheorem{prob}{Problem}[section]

\theoremstyle{remark}
\newtheorem{rem}{Remark}[section]

\newcommand{\Z}{\mathbb{Z}} 
\newcommand{\R}{\mathbb{R}} 
\newcommand{\C}{\mathbb{C}} 

\newcommand{\kakko}[1]{\left(#1\right)}


\newcommand{\set}[3][big]
{\csname#1l\endcsname\{#2\,\csname#1\endcsname|\,#3\csname#1r\endcsname\}}
\newcommand{\Set}[2]{
\setbox1=\hbox{$\displaystyle\left\{#1\right\}$}
\setbox2=\hbox{$\displaystyle\left\{#2\right\}$}
\left\{
#1
\ \vrule width .5pt height \ifdim\ht1>\ht2 \ht1 \else \ht2 \fi\
#2
\right\}}


\DeclareMathOperator{\rank}{rank}















\title{Symmetry of asymmetric quantum Rabi models}

\author{Masato Wakayama
}

\pagestyle{myheadings}
\markboth
{M.~Wakayama}{Symmetry of asymmetric quantum Rabi models}

\begin{document}

\maketitle

\abstract
The aim of this paper is a better understanding for the eigenstates of the asymmetric quantum Rabi model by Lie algebra representations of $\mathfrak{sl}_2$. We define a second order element of the universal enveloping algebra $\mathcal{U}(\mathfrak{sl}_2)$ of $\mathfrak{sl}_2(\R)$, which,  through the action of  a certain infinite dimensional representation of $\mathfrak{sl}_2(\mathbb{R})$, provides a picture of the asymmetric quantum Rabi model equivalent to the one drawn by confluent Heun ordinary differential equations. Using this description, we prove the existence of level crossings in the spectral graph of the asymmetric quantum Rabi model when the symmetry-breaking parameter $\epsilon$ is equal to $\frac12$, and conjecture a formula  that ensures likewise the presence of level crossings for  general $\epsilon \in \frac12\Z$. This result on level crossings was demonstrated numerically by Li and Batchelor in 2015, investigating an earlier empirical observation by Braak (2011). The first analysis of
the degenerate spectrum was given for the symmetric quantum Rabi model by Ku\'s in 1985. In our picture, we find a certain reciprocity (or $\Z_2$-symmetry) for $\epsilon \in \frac12\Z$ if the spectrum is described by representations of $\frak{sl}_2$. We further discuss briefly the non-degenerate part of the exceptional  spectrum from the viewpoint of infinite dimensional representations  of $\mathfrak{sl}_2(\mathbb{R})$ having lowest weight vectors.  

\;
\noindent
{\bf 2010 Mathematics Subject Classification:} 
{\it Primary} 34L40, {\it Secondary} 81Q10, 34M05, 81S05.

\;
\noindent
{\bf Keywords and phrases:}
{quantum Rabi models, level crossings, confluent Heun differential equations, exceptional spectrum, irreducible representations.}

\,

\section{Introduction}

The quantum Rabi model (QRM) is the fully quantized version \cite{JC1963} of the original Rabi model (for a review see \cite{bcbs2016}) and is known to be the simplest model used in quantum optics to describe the interaction of light and matter. As such, it appears ubiquitously in various quantum systems including cavity and circuit quantum electrodynamics, quantum dots etc. (see e.g \cite{HR2008}).  The Hamiltonian of the quantum Rabi model ($\hbar=1$) reads 
\begin{equation}
H_{\rm{Rabi}}=\omega a^\dag a+\Delta \sigma_z +g\sigma_x(a^\dag+a),
\label{QRM}
\end{equation}
with $a^\dag = (x-\partial_x)/\sqrt2, a= (x+\partial_x)/\sqrt2\, (\partial_x:=\frac{d}{dx})$
and  $\sigma_x = \begin{bmatrix}
 0 & 1  \\
 1 &  0 
\end{bmatrix}, 
\sigma_z= \begin{bmatrix}
 1 & 0  \\
 0 &  -1
\end{bmatrix}$ 
are the Pauli matrices, $2\Delta$ is the energy difference between the
two levels, and $g$ denotes the coupling strength between the two-level system and the bosonic mode with frequency $\omega$. It is well-known that the Jaynes-Cummings model \cite{JC1963} (see \cite{HR2008} for a comprehensive overview and recent applications), the RWA (rotating-wave approximation) of the quantum Rabi model,  has a $U(1)$-symmetry and is known to be integrable. Although the quantum Rabi model has no such $U(1)$-symmetry, it has a $\Z_2$-symmetry (parity). Using this $\Z_2$-symmetry, D. Braak \cite{B2011PRL} has shown the integrability of the quantum Rabi model in 2011.  
In the present paper, we study the spectrum of the following asymmetric quantum Rabi model \cite{Le2016} (called ``generalized"  quantum Rabi model in \cite{B2011PRL,LB2015JPA}, ``biased'' in \cite{BLZ2015JPALetter} and  ``driven" in \cite{La2013}) with broken $\Z_2$-symmetry. This asymmetric model provides actually a more realistic description of circuit QED experiments employing flux qubits than the QRM itself \cite{Ni2010}: 
\begin{equation}
H_{\rm{Rabi}}^\epsilon=\omega a^\dag a+\Delta \sigma_z +g\sigma_x(a^\dag+a) + \epsilon \sigma_x.
\label{aH}
\end{equation}
We set $\omega=1$ in the following.
There are several studies on the spectrum of the asymmetric quantum Rabi model \cite{B2013AP,B2013MfI, BLZ2015JPALetter, LB2015JPA, LB2016JPA}. Here we are particularly interested in the level crossings visible in the spectral graph, i.e. the set of discrete eigenvalues of $H_{\rm{Rabi}}^\epsilon$ as function of the coupling strength $g$. The presence of level crossings in the asymmetric model is highly non-trivial because 
the additional term $\epsilon \sigma_x$ breaks the $\Z_2$-symmetry which couples the bosonic mode and the two-level system by allowing spontaneous tunneling between the two atomic states. Although the $\Z_2$-symmetry may be recovered by ``integrable embedding'' \cite{B2011PRL}, so that the same solution method as used for the ``symmetric" QRM can be applied to the asymmetric model as well  \cite{B2013AP}, the broken $\Z_2$-symmetry removes all level crossings from the spectral graph for general $\epsilon$, rendering the model non-integrable according to the criterion for quantum integrability proposed in \cite{B2011PRL}. Without the symmetry, there seem to be no invariant subspaces anymore whose respective spectral graphs may intersect to create ``accidental'' degeneracies in the spectrum for specific values of the coupling. 
 Surprisingly, the crossings have been numerically confirmed in \cite{LB2015JPA} when the parameter $\epsilon$ is a half-integer, hinting at a hidden symmetry present in this case. 

In general, these level crossings(or degeneracies of eigenstates) have been studied also in pure mathematics for different reasons \cite{H2009IUMJ, HH2012, HS2013B}. Especially, the ``gamma factor" (a terminology used in number theory) of the spectral zeta function \cite{Sugi2016} for the quantum Rabi model  may correspond to the so-called exceptional spectrum of the model \cite{B2011PRL, W2014MfI}. It would be very interesting if this gamma factor (exceptional eigenstates) could be interpreted as, e.g. a topological property of the states as in the case of the Selberg zeta functions for discontinuous groups (see, e.g \cite{Hej1976}). Also, we expect that representation theoretic treatment of the lever crossings developed here may help the future study of monodoromy problems for the confluent Heun ODE.

In the following we shall employ the theory worked out previously 
\cite{WY2014JPA}. Using this description, we prove the presence of level crossings of the asymmetric quantum Rabi model for $\epsilon= \frac12$.  This result has been demonstrated numerically by Li and Batchelor \cite{LB2015JPA}. 
As a subsequent observation, we give a representation theoretic description of such crossings for general $\epsilon \in \frac12\Z$ by means of finite dimensional representations of $\mathfrak{sl}_2$. Thus we may observe a certain reciprocity (or a $\Z_2$-symmetry) behind this structure. We leave the proof of the crossings for general $\epsilon \in \frac12\Z$ and a detailed mathematical investigation for future study but discuss briefly the non-degenerate part of the exceptional  spectrum of the model from the  viewpoint of  infinite dimensional irreducible submodules (or subquotients) of the non-unitary principal series such as the (holomorphic) discrete series representations of $\mathfrak{sl}_2(\mathbb{R})$.  

\section{Confluent Heun's picture of the model}\label{CHP}

In this section, we recall the confluent Heun picture of the asymmetric quantum Rabi model. 
We may assume $\omega=1$ without loss of generality in the Hamiltonian $H_{\rm{Rabi}}^\epsilon =H_{\rm{Rabi}}^\epsilon(g, \Delta, \omega)$ of the asymmetric quantum Rabi model. 
In what follows, we will use the Bargmann representation of the boson operators \cite{Bar1961}. Here $a^\dag$ and $a$ 
are realized as the multiplication and differentiation operators over the complex variable:   
$a^\dag = (x-\partial_x)/\sqrt2 \to z$ and $a= (x+\partial_x)/\sqrt2 \to \partial_z:=\frac{d}{dz}$. These operators act on the Hilbert 
space $\mathcal{B}$ of entire functions equipped with the inner product 
$$
(f|g)= \frac1\pi \int_{\C} \overline{f(z)}g(z)e^{-|z|^2}d({\rm{Re}}(z))d({\rm{Im}}(z)).
$$
In this Bargmann picture, the Hamiltonian 
$$H_{\rm{Rabi}}^\epsilon \rightarrow
\tilde{H}_{\rm{Rabi}}^\epsilon = z\frac{d}{dz}I+ g\big(z+\frac{d}{dz}\big)\sigma_x+\Delta \sigma_z+\epsilon\sigma_x.
$$ 
Therefore, by the standard procedure (see e.g. \cite{B2013AP, LB2015JPA}), we observe that the Schr\"odinger equation 
$H_{\rm{Rabi}}^\epsilon\varphi=\lambda \varphi$  is equivalent to the system of first order differential equations 
\begin{equation*}
\tilde{H}_{\rm{Rabi}}^\epsilon\psi=\lambda \psi, \quad 
\psi=  \begin{bmatrix}
 \psi_{1}(z) \\
 \psi_{2}(z)
\end{bmatrix}.
\end{equation*}
Hence, in order to have an eigenstate of $H_{\rm{Rabi}}^\epsilon$, it is sufficient to obtain an eigenstate
$\psi \in \mathcal{B}$, that is, $\mathbf{BI}$: $(\psi_i|\psi_i) <\infty $, and $\mathbf{BII}$: $\psi_i$ are holomorphic everywhere in the whole complex plane $\C$ for $i=1,2$.

Let $W_\lambda$ be the eigenspace of $H_{\rm{Rabi}}^\epsilon$, whence of $\tilde{H}_{\rm{Rabi}}^\epsilon$, corresponding to the  eigenvalue $\lambda$. It is obvious that $\dim W_\lambda \leq 2$. 
Upon writing $f_\pm:= \psi_1\pm \psi_2$, we have 
\begin{equation}
\begin{cases}\label{SDE1}
(z+g)\frac{d}{dz} f_++ (gz+\epsilon-\lambda)f_+ +\Delta f_- =0,\\
(z-g)\frac{d}{dz}f_- - (gz+\epsilon+\lambda)f_- +\Delta f_+=0.
\end{cases}
\end{equation}
It is elementary to see that this system of first order equations is regular singular at $z=\pm g$, irregular singular at $\infty$, and no other singularities. However, since the irregular singular point at $\infty$  has rank $1$ \cite{Ince1956}, we observe that  the asymptotic expansions of $f_\pm\,(i=1,2)$ for $z\to \infty$ reads $f_{\pm}(z)= e^{cz} z^\rho(c_0+ c_1/z+\cdots)$ with some constants $c_\pm, \rho_\pm, c_{\pm,1}, c_{\pm,2},\ldots$, whence  the condition $\mathbf{BI}$ is always satisfied, because all functions growing like $e^{cz}$ for $|z| \to \infty$ satisfy $\mathbf{BI}$. This fact implies that  the analyticity of $\psi$, resp. $f_\pm$ at the regular singular points $z=\pm g$, corresponding to condition $\mathbf{BII}$, is sufficient for $f_\pm\in {\mathcal B}$. We leave the detailed discussion to \cite{B2013AP}. 

\;

Now, from these equations, we can get two sets of solutions for $f_+(z)$ and $f_-(z)$ as follows:

\noindent
$\bullet$ Substitute $\phi_{1,\pm}(z):=e^{gz}f_\pm(z)$ into the equations and eliminate $\phi_{1,-}(z)$ from these equations. Further, put $\phi_1(x)=\phi_{1,+}(z)$, where $x:=(g+z)/2g$. Then we have 
$\mathcal{H}_1^\epsilon(\lambda) \phi_1(x)=0$, where
\begin{align*}
\mathcal{H}_1^\epsilon(\lambda):=
 \frac{d^2}{dx^2} + & \Big\{-4g^2+\frac{1-(\lambda+g^2)+\epsilon}{x}+\frac{1-(\lambda+g^2+1)-\epsilon}{x-1}\Big\}\frac{d}{dx}\\
& +\frac{4g^2(\lambda+g^2-\epsilon)x+\mu+4\epsilon g^2-\epsilon^2}{x(x-1)}.
\end{align*}
\smallskip

\noindent
$\bullet$ Substitute $\phi_{2,\pm}(z):=e^{-gz}f_\pm(z)$ into the equations and eliminate $\phi_{2,-}(z)$ from these equations. Further, put $\phi_2(x)=\phi_{2,+}(z)$, where $x:=(g-z)/2g$. Then we have 
$\mathcal{H}_2^\epsilon(\lambda) \phi_2(x)=0$, where
\begin{align*}
\mathcal{H}_2^\epsilon(\lambda):=
\frac{d^2}{dx^2}+ & \Big\{-4g^2+\frac{1-(\lambda+g^2+1)-\epsilon}{x}+\frac{1-(\lambda+g^2)+\epsilon}{x-1}\Big\}\frac{d}{dx}\\
& +\frac{4g^2(\lambda+g^2-1+\epsilon)x+\mu-4\epsilon g^2-\epsilon^2}{x(x-1)}.
\end{align*}

\medskip
\noindent
Here  the constant $\mu$ is defined by 
\begin{equation} \label{Mu}
\mu := (\lambda+g^2)^2-4g^2(\lambda+g^2)-\Delta^2. 
\end{equation}
and contributes the accessory parameter of each equation $\mathcal{H}_j^\epsilon(\lambda) \phi_j(x)=0\;(j=1,2)$ (e.g. \cite{SL2000}).
Notice that these $\mathcal{H}_j^\epsilon(\lambda)\, (j=1,2)$ are the confluent Heun differential operators and give regular singularities at $x=0, 1$ and irregular one at $x=\infty$ to the corresponding equations (the Heun picture of asymmetric quantum Rabi models). 
Since the procedure of derivation of these equations for $\phi_j(x)\, (j=1,2)$ respectively is standard, we leave its details to the reader. We close the section by observing the exponents of these equations. 
\begin{lem}\label{Exponents}
The exponents $\rho$ of each regular singular points for the equations 
$\mathcal{H}_j^\epsilon(\lambda) \phi_j(x)=0\,(j=1,2)$ are respectively given as follows: 
\begin{align*}
\mathcal{H}_1^\epsilon(\lambda)\,:\; & \rho= 0, \lambda+g^2-\epsilon \; (x=0), \quad \rho= 0, \lambda+g^2+1+\epsilon \; (x=1),\\
\mathcal{H}_2^\epsilon(\lambda)\,:\; & \rho= 0, \lambda+g^2+1+\epsilon \; (x=0), \quad \rho=  0, \lambda+g^2-\epsilon \; (x=1).
\end{align*}
In particular, both exponents at $x=0,1$ of $\mathcal{H}_1^\epsilon(\lambda)$ (resp. $\mathcal{H}_2^\epsilon(\lambda)$) are integers iff one of two cases, i.e. either $\lambda+g^2, \epsilon \in \Z$ or 
$\lambda+g^2, \epsilon \in \Z+\frac12$ holds. \qed 
\end{lem}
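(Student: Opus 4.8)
The plan is to apply the standard Frobenius/indicial-equation analysis to each operator at its two regular singular points. Writing a second-order operator in the normalized form $\frac{d^2}{dx^2}+P(x)\frac{d}{dx}+Q(x)$, the exponents at a regular singular point $x_0$ are the two roots of the indicial equation $\rho(\rho-1)+p_0\rho+q_0=0$, where $p_0$ is the residue of $P$ at $x_0$ (the coefficient of $(x-x_0)^{-1}$) and $q_0$ is the coefficient of $(x-x_0)^{-2}$ in $Q$.

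The crucial structural observation, which I would record first, is that in both $\mathcal{H}_1^\epsilon(\lambda)$ and $\mathcal{H}_2^\epsilon(\lambda)$ the zeroth-order coefficient has the form $Q(x)=\frac{(\text{affine in }x)}{x(x-1)}$ and hence carries only \emph{simple} poles at $x=0$ and $x=1$. Consequently $q_0=0$ at both points, the indicial equation factors as $\rho\bigl(\rho-(1-p_0)\bigr)=0$, and one exponent is always $\rho=0$. The remaining exponent $\rho=1-p_0$ is then determined solely by the residue of the first-order coefficient $P$, which can be read off directly from the partial-fraction terms displayed in the definitions of $\mathcal{H}_j^\epsilon(\lambda)$. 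Carrying this out: for $\mathcal{H}_1^\epsilon(\lambda)$ the residue of $P$ at $x=0$ is $1-(\lambda+g^2)+\epsilon$, so $1-p_0=\lambda+g^2-\epsilon$; at $x=1$ the residue is $1-(\lambda+g^2+1)-\epsilon$, so $1-p_0=\lambda+g^2+1+\epsilon$. The computation for $\mathcal{H}_2^\epsilon(\lambda)$ is identical with the two residues interchanged, which yields the same pair of nonzero exponents but with the roles of $x=0$ and $x=1$ swapped, exactly as stated.

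For the final ``in particular'' clause I would argue through the parity of the two nonzero exponents of $\mathcal{H}_1^\epsilon(\lambda)$, namely $\lambda+g^2-\epsilon$ and $\lambda+g^2+1+\epsilon$. Both are integers iff $\lambda+g^2-\epsilon\in\Z$ and $\lambda+g^2+\epsilon\in\Z$; writing $\lambda+g^2=\tfrac12\bigl((\lambda+g^2-\epsilon)+(\lambda+g^2+\epsilon)\bigr)$ and $\epsilon=\tfrac12\bigl((\lambda+g^2+\epsilon)-(\lambda+g^2-\epsilon)\bigr)$ as half the sum and half the difference of two integers forces $\lambda+g^2,\epsilon\in\tfrac12\Z$. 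The dichotomy is then automatic: equal parity of the two integers gives $\lambda+g^2,\epsilon\in\Z$, while opposite parity gives $\lambda+g^2,\epsilon\in\Z+\tfrac12$; conversely each of these two cases plainly returns $\lambda+g^2\pm\epsilon\in\Z$. I do not expect any genuine obstacle here—the whole argument is elementary once the simple-pole structure of $Q$ is noticed—so the only point deserving care is the verification that $Q$ has no double pole at $x=0,1$, which is exactly what forces one exponent to vanish and collapses the quadratic indicial equation to a product with an automatic root at $0$.
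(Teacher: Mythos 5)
Your proof is correct and follows exactly the standard indicial-equation computation that the paper implicitly relies on (the lemma is stated with no written proof, being an immediate consequence of the displayed partial-fraction form of the coefficients and the fact that the zeroth-order term has only simple poles at $x=0,1$). Both the exponent calculation and the parity argument for the ``in particular'' clause check out.
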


\section{Representations of $\mathfrak{sl}_2$}

We recall briefly the representation theoretic setting for the Lie algebra $\mathfrak{sl}_2$ in order to elucidate the  symmetry behind the asymmetric quantum Rabi model. 

Let $H, E$ and $F$ be the standard generators of $\mathfrak{sl}_2$ defined by 
\begin{align*}
H= \begin{bmatrix}
 1 & 0  \\
 0 &  -1 
\end{bmatrix},\quad
E= \begin{bmatrix}
 0 & 1  \\
 0 &  0 
\end{bmatrix},\quad
F= \begin{bmatrix}
 0 & 0  \\
 1 &  0 
\end{bmatrix}. 
\end{align*}
These  satisfy the commutation relations
$$
[H,\, E] = 2E, \,\, [H,\, F] = -2F, \,\,  [E,\, F] = H. 
$$
Let  $a\in \C$.  Put $\mathbf{V}_{1}:=x^{-\frac14} \mathbb{C}[x, x^{-1}]$
and $\mathbf{V}_{2}:=x^{\frac14} \mathbb{C}[x, x^{-1}]$. 
Consider the following algebraic actions of $\mathfrak{sl}_2$ on  $\mathbf{V}_{j},\, (j=1,2)$ defined by 
\begin{align*}
\varpi_a(H) :=2x\partial_x+\frac12,\quad 
\varpi_a(E) :=x^2\partial_x+\frac12(a+\frac12)x,\quad
\varpi_a(F) := -\partial_x+\frac12(a-\frac12)x^{-1}. 
\end{align*}
These operators indeed act on the spaces  $\mathbf{V}_j,\, (j=1,2)$
and define infinite dimensional representations (non-unitary principal series representations) of $\mathfrak{sl}_2$. Write  $\varpi_{j,a}=\varpi_a|_{\mathbf{V}_{j}}$ and put $e_{1,n}:=x^{n-\frac14}, \, e_{2,n}:=x^{n+\frac14}$. Then we have

\smallskip
\noindent
$\bullet$ {\it the spherical principal series}: on $\mathbf{V}_{1,a}:=\mathbf{V}_{1}=\oplus_{n\in \Z}\,\C\cdot e_{1,n} $ 
\begin{equation*}
\begin{cases}
\varpi_{1,a}(H)e_{1,n} = 2ne_{1,n},\\
\varpi_{1,a}(E)e_{1,n} = \big(n+\frac{a}{2}\big)e_{1,n+1},\\
\varpi_{1,a}(F)e_{1,n} = \big(-n+\frac{a}{2}\big)e_{1,n-1}.
\end{cases}
\end{equation*}
and 

\smallskip
\noindent
$\bullet$ {\it the non-spherical principal series}: on $\mathbf{V}_{2,a}:=\mathbf{V}_{2}=\oplus_{n\in \Z} \, \C \cdot e_{2,n}$
\begin{equation*}
\begin{cases} 
\varpi_{2,a}(H)e_{2,n} &= (2n+1)e_{2,n},\\
\varpi_{2,a}(E)e_{2,n} &= \big(n+\frac{a+1}{2}\big)e_{2,n+1},\\
\varpi_{2,a}(F)e_{2,n} &= \big(-n+\frac{a-1}{2}\big)e_{2,n-1}.
\end{cases}
\end{equation*}

Note that $(\varpi_{1,a},\mathbf{V}_{1})$ (resp. $(\varpi_{2,a},\mathbf{V}_{2}$) is irreducible when $a\not\in 2\mathbb{Z}$ (resp. $a\not\in 2\mathbb{Z}-1$) and there is an equivalence between $\varpi_{j,a}$ and $\varpi{j,2-a}$ under the same condition. 
 
For a non-negative integer $m$, define subspaces 
$\mathbf{D}^{\pm}_{2m},\mathbf{F}_{2m-1}$ of $\mathbf{V}_{1,2m}(=\mathbf{V}_{1})$, 
and  $\mathbf{D}^{\pm}_{2m+1},\mathbf{F}_{2m}$ of $\mathbf{V}_{2,2m+1}(=\mathbf{V}_{2})$ respectively by 
\begin{equation*}
\mathbf{D}^{\pm}_{2m}:=\bigoplus_{n\geq  m}\mathbb{C}\cdot e_{1,\pm n},\:\:\:
\mathbf{F}_{2m-1}:=\bigoplus_{-m+1\leq n\leq m-1}\mathbb{C}\cdot e_{1,n},
\end{equation*}
\begin{equation*}
\mathbf{D}^{-}_{2m+1}:=\bigoplus_{n\geq m+1}\mathbb{C}\cdot e_{2,-n},\:\:\:
\mathbf{D}^{+}_{2m+1}:=\bigoplus_{n\geq m}\mathbb{C}\cdot e_{2,n},\:\:\:
\mathbf{F}_{2m}:=\bigoplus_{-m\leq n\leq m-1}\mathbb{C}\cdot e_{2,n}.
\end{equation*}
The spaces $\mathbf{D}^{\pm}_{2m}$ (resp. $\mathbf{D}^{\pm}_{2m+1}$) are invariant  under the action $\varpi_{1,2m}(X)$ (resp. $\varpi_{1,2m+1}(X))$ $(X\in \mathfrak{sl}_2)$, and define irreducible representations known to be equivalent to (holomorphic and anti-holomorphic) discrete series for $m>0$ of $\mathfrak{sl}_2(\mathbb{R})$  (see e.g. \cite{L, HT1992}). Moreover,  the finite dimensional space $\mathbf{F}_{m}$ ($\dim_{\mathbb{C}} \mathbf{F}_{m}=m$),  is invariant and defines irreducible representation of $\mathfrak{sl}_2$ for $a=2-2m$ when $j=1$ and $a=1-2m$ when $j=2$, respectively. 

\bigskip
\medskip

\begin{figure}[http]
\quad 
$\mathbf{V}_{1,2m\, (m\in\Z)}: \;$
\begin{picture}(50,10)
\dottedline{4}(0,0)(70,0)
\dottedline{4}(120,0)(170,0)
\dottedline{4}(190,0)(240,0)
\dottedline{4}(270,0)(340,0)
\thicklines
\jput(90,0){\makebox(0,0){$\bullet$}}
\jput(120,0){\makebox(0,0){$\bullet$}}
\jput(180,0){\makebox(0,0){$\bullet$}}
\jput(240,0){\makebox(0,0){$\bullet$}}
\jput(270,0){\makebox(0,0){$\bullet$}}
\drawline(10,15)(80,15)(90,0)
\drawline(120,0)(130,15)(230,15)(240,0)
\drawline(270,0)(280,15)(350,15)
\put(20,20){$\mathbf{D}_{2m}^{-}$}
\put(170,20){$\mathbf{F}_{2m-1}$}
\put(310,20){$\mathbf{D}_{2m}^{+}$}
\put(70,-15){$-2m$}
\put(115,-15){$-2m+2$}
\put(180,-15){$0$}
\put(210,-15){$2m-2$}
\put(270,-15){$2m$}
\end{picture}
\end{figure}

\bigskip
\bigskip

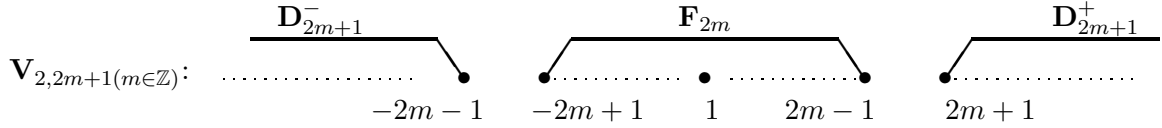
\begin{figure}[http]
\quad 
$\mathbf{V}_{2,2m+1(m\in\Z)}$: \;
\begin{picture}(50,10)
\dottedline{4}(0,0)(70,0)
\dottedline{4}(120,0)(170,0)
\dottedline{4}(190,0)(240,0)
\dottedline{4}(270,0)(340,0)
\thicklines
\jput(90,0){\makebox(0,0){$\bullet$}}
\jput(120,0){\makebox(0,0){$\bullet$}}
\jput(180,0){\makebox(0,0){$\bullet$}}
\jput(240,0){\makebox(0,0){$\bullet$}}
\jput(270,0){\makebox(0,0){$\bullet$}}
\drawline(10,15)(80,15)(90,0)
\drawline(120,0)(130,15)(230,15)(240,0)
\drawline(270,0)(280,15)(350,15)
\put(20,20){$\mathbf{D}_{2m+1}^{-}$}
\put(170,20){$\mathbf{F}_{2m}$}
\put(310,20){$\mathbf{D}_{2m+1}^{+}$}
\put(55,-15){$-2m-1$}
\put(115,-15){$-2m+1$}
\put(180,-15){$1$}
\put(210,-15){$2m-1$}
\put(270,-15){$2m+1$}
\end{picture}
\bigskip
\caption{Weight decompositions of $\mathbf{V}_{j,m}$ for $m\in \mathbb{Z}$}
\end{figure}

For $a\in \mathbb{Z}$, summarizing the actions $\varpi_{j, a}$ of $\mathfrak{sl}_2$, we recall  the following irreducible decomposition of $(\varpi_a, \, \mathbf{V}_{j,a})\, (a=m\in \Z)$. 
\begin{lem} \label{Reducible}
Let $m\in \mathbb{Z}_{\geq 0}$.  
\begin{enumerate}
\item The subspaces $\mathbf{D}^{\pm}_{2m}$ are irreducible submodules of $\mathbf{V}_{1,2m}$ under the action $\varpi_{1, 2m}$ and $\mathbf{F}_{2m-1}$ is an irreducible submodule of $\mathbf{V}_{1,2-2m}$ under $\varpi_{1, 2-2m}$. In the former case, the finite dimensional irreducible representation $\mathbf{F}_{2m-1}$ can be obtained as the subquotient as $\mathbf{V}_{1,2m}/\mathbf{D}^{-}_{2m}\oplus \mathbf{D}^{+}_{2m}\cong   \mathbf{F}_{2m-1} $. In the latter case, the discrete series $\mathbf{D}^{\pm}_{2m}$ can be realized as  
the irreducible components of the subquotient representation as 
$\mathbf{V}_{1,2-2m}/\mathbf{F}_{2m-1} \cong   \mathbf{D}^{-}_{2m}\oplus \mathbf{D}^{+}_{2m}$.
\item  The subspaces $\mathbf{D}^{\pm}_{2m+1}$ are irreducible submodule of $\mathbf{V}_{2,2m+1}$ under the action $\varpi_{2, 2m+1}$ and $\mathbf{F}_{2m}$ is an irreducible submodule of $\mathbf{V}_{2,1-2m}$ under $\varpi_{2, 1-2m}$. In the former case, the finite dimensional irreducible representation $\mathbf{F}_{2m}$ can be obtained as the subquotient as $\mathbf{V}_{2,2m+1}/\mathbf{D}^{-}_{2m+1}\oplus \mathbf{D}^{+}_{2m+1}\cong   \mathbf{F}_{2m}$, while in the latter case, the discrete series $\mathbf{D}^{\pm}_{2m+1}$ can be realized as  the irreducible components of the subquotient representation as 
$\mathbf{V}_{2,1-2m}/\mathbf{F}_{2m} \cong   \mathbf{D}^{-}_{2m+1}\oplus \mathbf{D}^{+}_{2m+1}$.
\item The space $\mathbf{V}_{2,1}$ is decomposed as the irreducible sum:  
$\mathbf{V}_{2,1}=\mathbf{D}^{-}_{1}\oplus \mathbf{D}^{+}_{1}$. 
\end{enumerate}
Moreover, the spaces of irreducible submodules $\mathbf{D}^{\pm}_{m} (\subset \mathbf{V}_{j,m})$, $\mathbf{F}_{m} (\subset \mathbf{V}_{j,1-m})$ and 
the direct sum $\mathbf{D}^{+}_{m}\oplus \mathbf{D}^{-}_{m} (\subset\mathbf{V}_{j,m})$ above are the only non-trivial invariant subspaces of $\mathbf{V}_{j,m}$ for $j=1$ (resp. $j=2$) when $m$ is even (resp. odd) under the action of $\mathcal{U}(\mathfrak{sl}_2)$,
the universal enveloping algebra of $\mathfrak{sl}_2$. \qed
\end{lem}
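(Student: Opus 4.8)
The plan is to reduce the entire classification to combinatorics on the weight lattice, exploiting that $H$ acts semisimply with simple spectrum. Since $\varpi_{1,a}(H)e_{1,n}=2n\,e_{1,n}$ and $\varpi_{2,a}(H)e_{2,n}=(2n+1)e_{2,n}$, the weights are pairwise distinct, so every subspace stable under $H$ — in particular every $\mathcal{U}(\mathfrak{sl}_2)$-submodule — is a sum of one-dimensional weight spaces, i.e. of the form $\bigoplus_{n\in S}\C\cdot e_{j,n}$ for some $S\subseteq\Z$. Thus an invariant subspace is exactly an index set $S$ closed under the shifts $n\mapsto n+1$ and $n\mapsto n-1$ wherever the corresponding coefficient of $\varpi(E)$, resp. $\varpi(F)$, is nonzero. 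The first step is therefore to record these coefficients and their vanishing loci: for $\varpi_{1,2m}$ one has $\varpi(E)e_{1,n}=(n+m)e_{1,n+1}$ and $\varpi(F)e_{1,n}=(m-n)e_{1,n-1}$, so $E$ annihilates exactly $e_{1,-m}$ and $F$ annihilates exactly $e_{1,m}$; the analogous computation for $\varpi_{2,2m+1}$, $\varpi_{1,2-2m}$ and $\varpi_{2,1-2m}$ pins down the remaining vanishing indices.

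Second, I would read invariance of the listed subspaces directly off these loci. For $\varpi_{1,2m}$ the break $\varpi(E)e_{1,-m}=0$ seals $\{n\le -m\}$ from above and $\varpi(F)e_{1,m}=0$ seals $\{n\ge m\}$ from below, giving the submodules $\mathbf{D}^-_{2m}$, $\mathbf{D}^+_{2m}$ and their direct sum; each is irreducible because, away from the sealing weight, both shift coefficients are nonzero throughout the relevant range, so from any single basis vector one reaches every other by applying $E$ and $F$, and any nonzero submodule contains a weight vector by the first step. The finite block $\mathbf{F}_{2m-1}$ is handled identically inside $\mathbf{V}_{1,2-2m}$, where the pattern $\varpi(E)e_{1,m-1}=0$, $\varpi(F)e_{1,-m+1}=0$ isolates exactly the middle indices $-m+1\le n\le m-1$. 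The subquotient identifications $\mathbf{V}_{1,2m}/(\mathbf{D}^-_{2m}\oplus\mathbf{D}^+_{2m})\cong\mathbf{F}_{2m-1}$ and $\mathbf{V}_{1,2-2m}/\mathbf{F}_{2m-1}\cong\mathbf{D}^-_{2m}\oplus\mathbf{D}^+_{2m}$ then follow by comparing the induced shift actions on the quotient bases. Part (3), $a=1$, is the degenerate instance in which the two seals $\varpi(E)e_{2,-1}=0$ and $\varpi(F)e_{2,0}=0$ are adjacent, so no middle block survives and $\mathbf{V}_{2,1}=\mathbf{D}^-_1\oplus\mathbf{D}^+_1$; the computations for $j=2$ throughout are the same after the index shift $n\mapsto n+\tfrac12$.

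The last and genuinely substantive step is the completeness claim for $\mathbf{V}_{j,m}$ with $m$ the positive integer parameter. Working with the closed index set $S$: if $S$ meets $\{n\ge m\}$ at some $n_1$, then since the $E$-coefficient $n+m$ is positive for all $n\ge m$ and the $F$-coefficient $m-n$ vanishes only at $n=m$, closure forces $S\supseteq\{n\ge m\}$ while forcing nothing below $m$; symmetrically for $\{n\le -m\}$. The crucial point is that $S$ cannot contain a single middle weight $n_0$ with $-m+1\le n_0\le m-1$ without being everything: applying $E$ repeatedly the coefficients $n+m$ remain nonzero for all $n\ge n_0$, so $S\supseteq\{n\ge n_0\}$, and applying $F$ the coefficients $m-n$ remain nonzero for all $n\le n_0$, so $S\supseteq\{n\le n_0\}$, whence $S=\Z$. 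Combining the surviving possibilities yields exactly $\mathbf{D}^+_{m}$, $\mathbf{D}^-_{m}$ and $\mathbf{D}^+_{m}\oplus\mathbf{D}^-_{m}$. I expect this connectivity bookkeeping to be the main obstacle, precisely because it is where the reducible principal series $\mathbf{V}_{j,m}$ differs from its finite-dimensional conjugate $\mathbf{V}_{j,1-m}$: in the latter the vanishing pattern decouples the middle weights into $\mathbf{F}$, whereas in the former a single middle weight is genuinely linked to both half-lines; verifying that this linkage holds (and fails in the conjugate case) is the heart of the matter, the rest being routine tracking of the linear coefficients.
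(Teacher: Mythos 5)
Your argument is correct and complete, and in fact it supplies a proof where the paper gives none: Lemma \ref{Reducible} is stated with an immediate \qed as recalled standard material (with pointers to Lang and Howe--Tan), so there is no in-paper proof to compare against. Your route --- simple $H$-spectrum forces every $\mathcal{U}(\mathfrak{sl}_2)$-invariant subspace to be of the form $\bigoplus_{n\in S}\C\cdot e_{j,n}$, after which invariance and irreducibility reduce to tracking the vanishing loci of the $E$- and $F$-shift coefficients --- is the standard weight-lattice argument, and all of your coefficient computations check out (e.g.\ for $\varpi_{1,2m}$ the seals at $n=-m$ for $E$ and $n=m$ for $F$, and for $\varpi_{2,1}$ the adjacent seals at $n=-1$ and $n=0$ giving part (3)). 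Two small caveats worth flagging. First, the case $m=0$ of part (1) is degenerate: $\mathbf{D}^{+}_{0}$ and $\mathbf{D}^{-}_{0}$ share the weight vector $e_{1,0}$, which spans a trivial submodule, so they are invariant but not irreducible; the paper itself only claims the discrete-series identification for $m>0$, and your connectivity argument silently assumes $m>0$ (your ``$n+m>0$ for $n\ge m$'' step). Second, your completeness argument is confined to $\mathbf{V}_{j,m}$ with positive parameter $m$, which is the right call: in the conjugate module $\mathbf{V}_{j,1-m}$ the same closure computation produces additional \emph{reducible} invariant subspaces (the middle block together with one half-line), so the lemma's final sentence can only be read as classifying the invariant subspaces of the positive-parameter module plus the unique irreducible submodule of the conjugate one --- exactly the reading your proof establishes.
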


In the lemma above, we notice that each of the following short exact sequences of $\frak{sl}_2$-modules for $m>0$ is  not split.
\begin{align}\label{non-split_SES}
& 0 \longrightarrow  \mathbf{D}^{+}_{2m}\oplus \mathbf{D}^{-}_{2m}  \longrightarrow \mathbf{V}_{1,2m}  \longrightarrow \mathbf{F}_{2m-1}  \longrightarrow 0,\\
& 0 \longrightarrow  \mathbf{D}^{+}_{2m+1}\oplus \mathbf{D}^{-}_{2m+1}  \longrightarrow \mathbf{V}_{2,2m+1}  \longrightarrow \mathbf{F}_{2m}  \longrightarrow 0.
\end{align}
\begin{rem}
The irreducible representations $\mathbf{D}^{-}_{1}, \, \mathbf{D}^{+}_{1}$ are called the (infinitsimal version of) limit of discrete series of $\frak{sl}_2(\R)$.
\end{rem}

\section{Lie theoretical approach to the eigenspectrum}

We recall first the result in \cite{WY2014JPA}. 
Let  $(\alpha, \beta, \gamma, C) \in \R^4$.  Define a second order element 
${\mathbb{K}}=\mathbb{K}(\alpha, \beta, \gamma; C) \in \mathcal{U}(\mathfrak{sl}_{2})$ and a constant $\lambda_a=\lambda_a(\alpha, \beta, \gamma)$ depending on the representation $\varpi_a$ as follows:  
\begin{align*}\label{K}
\mathbb{K}(\alpha, \beta, \gamma; C):= & \left[\frac{1}{2}H-E+\alpha \right]\left(F+\beta\right)
+ \gamma\left[H-\frac{1}{2}\right]+C,\\
\lambda_a(\alpha, \beta, \gamma):= & \beta\left(\frac12a +\alpha\right)+\gamma\left(a-\frac12\right). 
\end{align*}
By the definition of $\varpi_a$ we observe $(\partial_x:=\frac{d}{dx})$ that 
$$
\varpi_a(\mathbb{K})
=  \Big\{x\partial_x+\frac14-\big(x^2\partial_x +\frac12(a+\frac12)x\big)+\alpha\Big\}
\Big\{-\partial_x+\frac12(a-\frac12)x^{-1}+\beta\Big\} + 2\gamma x\partial_x + C.
$$
Noticing $x^{-\frac12(a-\frac12)}\,x\partial_x \,x^{\frac12(a-\frac12)}= x\partial_x+ \frac12(a-\frac12)$, we have the following lemma (\cite{WY2014JPA}). 
\begin{lem}\label{K-element}
We have the following expression. 
\begin{align*}
&\frac{x^{-\frac12(a-\frac12)}\varpi_a(\mathbb{K}(\alpha, \beta, \gamma; C))x^{\frac12(a-\frac12)}}{x(x-1)}\\
= &\frac{d^2}{dx^2} +\Big\{-\beta + \frac{\frac12a+\alpha}{x} + \frac{\frac12a+2\gamma-\alpha}{x-1} \Big\}\frac{d}{dx}
+  \frac{-a\beta x+\lambda_a(\alpha, \beta, \gamma)+C}{x(x-1)}. \qed
\end{align*}
\end{lem}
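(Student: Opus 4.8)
The statement is an identity between differential operators, so the plan is a direct but carefully organized computation. Write $\nu := \frac12(a-\frac12)$ for the conjugating exponent. First I would read off from the definitions of $\varpi_a$ that the two factors of $\mathbb{K}$ become first-order operators of a convenient shape: the bracket $\frac12 H - E + \alpha$ acts as
\[
A := -x(x-1)\partial_x + P(x), \qquad P(x) = \tfrac14 + \alpha - \tfrac12(a+\tfrac12)x,
\]
while $F + \beta$ acts as $B := -\partial_x + Q(x)$ with $Q(x) = \nu x^{-1} + \beta$. Writing $A$ with the factor $x(x-1)$ made explicit is the point of the whole argument: it is exactly the quantity divided out at the end, so this rewriting explains the shape of the claimed formula in advance.

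Next I would expand the product $AB$ as a genuine second-order operator. Using the operator identity $\partial_x Q = Q\partial_x + Q'$ one gets
\[
AB = x(x-1)\partial_x^2 + \bigl[-x(x-1)Q - P\bigr]\partial_x + \bigl[-x(x-1)Q' + PQ\bigr],
\]
after which $\varpi_a(\mathbb{K}) = AB + 2\gamma x\partial_x + C$ simply adds $2\gamma x$ to the first-order coefficient and $C$ to the zeroth-order one. The conjugation by $x^{\nu}$ is then handled by the displayed identity preceding the lemma, which amounts to the substitution $\partial_x \mapsto \partial_x + \nu x^{-1}$ (functions of $x$ are unaffected); concretely $x^{-\nu}\partial_x^2 x^{\nu} = \partial_x^2 + 2\nu x^{-1}\partial_x + (\nu^2-\nu)x^{-2}$. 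Carrying out this substitution and collecting by order in $\partial_x$ produces a second-order operator whose leading coefficient is still $x(x-1)$, so dividing by $x(x-1)$ leaves $\partial_x^2$ untouched.

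For the first-order coefficient the work is routine: after substitution the numerator is the quadratic $2\nu(x-1) - x(x-1)Q - P + 2\gamma x$, and a partial-fraction decomposition over $x(x-1)$ reproduces $-\beta + \tfrac{a/2+\alpha}{x} + \tfrac{a/2+2\gamma-\alpha}{x-1}$ once $\nu = \tfrac12(a-\tfrac12)$ is inserted. The delicate step — and the one I expect to be the main obstacle — is the zeroth-order coefficient, where several $x^{-1}$ contributions appear at once: from $Q$ and $Q'$, from $PQ$, and from the $(\nu^2-\nu)x^{-2}$ term produced by conjugating $\partial_x^2$. For the asserted formula to hold these simple poles at $x=0$ must all cancel, leaving a numerator that is merely \emph{linear} in $x$. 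I would therefore isolate the total coefficient of $x^{-1}$ and check that it equals $-(\nu^2-\nu) + \nu(\tfrac{a}{2}-\alpha-\tfrac12) - \nu + (\tfrac14+\alpha)\nu = -\nu^2 + \nu(\tfrac{a}{2}-\tfrac14)$, which vanishes precisely because $\tfrac{a}{2}-\tfrac14 = \nu$. This vanishing is exactly what forces the value $\nu = \tfrac12(a-\tfrac12)$ and is the heart of the lemma.

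Once the pole cancellation is secured the rest is bookkeeping. Collecting the surviving linear-in-$x$ terms gives $-\beta\bigl(\nu + \tfrac12(a+\tfrac12)\bigr)x = -a\beta x$, matching the stated numerator, and collecting the constant terms yields, after substituting $\nu^2$ and simplifying, $\gamma(a-\tfrac12) + \beta(\tfrac{a}{2}+\alpha) + C$, which is exactly $\lambda_a(\alpha,\beta,\gamma) + C$ by the definition of $\lambda_a$. Dividing this linear numerator by $x(x-1)$ then produces the last term of the asserted expression and completes the identification.
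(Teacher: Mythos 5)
Your computation is correct and follows exactly the route the paper intends: the paper simply displays $\varpi_a(\mathbb{K})$ as the product of the two first-order operators you call $A$ and $B$, notes that conjugation by $x^{\frac12(a-\frac12)}$ shifts $\partial_x$ to $\partial_x+\frac12(a-\frac12)x^{-1}$, and leaves the remaining expansion (which you carry out, including the cancellation of the $x^{-1}$ poles and the identification of the constant term with $\lambda_a(\alpha,\beta,\gamma)+C$) as a direct verification cited to the earlier work. I checked your first-order coefficient, the vanishing of the simple pole, the linear term $-a\beta x$, and the constant term, and they all come out as you claim.
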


We construct two second order elements $\mathcal{K}$ and $\tilde{\mathcal{K}}$ $\in \mathcal{U}(\mathfrak{sl}_2)$  from ${\mathbb{K}}=\mathbb{K}(\alpha, \beta, \gamma; C)$ by suitable choices of the parameters $(\alpha, \beta, \gamma; C)$. As in the case of the quantum Rabi model in \cite{WY2014JPA}, we show that these elements provide the confluent Heun picture of the asymmetric quantum Rabi model under the image of the representations $(\varpi_{j,a},\mathbf{V}_{j,a})\,(j=1,2)$.

We now capture the confluent Heun operators $\mathcal{H}_j^{\rm{Rabi}}(\lambda)\, (j=1,2)$ through  $\mathbb{K}\in \mathcal{U}(\mathfrak{sl}_2)$. The proof of the following proposition is simple and done by the same way in \cite{WY2014JPA}. 

\begin{prop} \label{RedEigenProblem}
Let $\lambda$ be the eigenvalue of $H_{\rm{Rabi}}^\epsilon$. We have the following expressions. 
\begin{enumerate}
\item Suppose $a=-(\lambda+g^2-\epsilon)$. Define 
\begin{align*}
& {\mathcal{K}}:=  \mathbb{K}\Big(1-\frac{\lambda+g^2-\epsilon}2, 4g^2, \frac12-\frac{\lambda+g^2+\epsilon}2\,;\, \mu+4\epsilon g^2 -\epsilon^2\Big) \in  \mathcal{U}(\mathfrak{sl}_2), \\
& \Lambda_a:=  \lambda_a\Big(1-\frac{\lambda+g^2-\epsilon}2, 4g^2, \frac12-\frac{\lambda+g^2+\epsilon}2\Big).
\end{align*}
Then
\begin{equation}
x(x-1)\mathcal{H}_1^{\epsilon}(\lambda)= x^{-\frac12(a-\frac12)}(\varpi_a(\mathcal{K})-\Lambda_a)x^{\frac12(a-\frac12)}.
\end{equation}
\item Suppose $a=-(\lambda+g^2-1+\epsilon)$. Define
\begin{align*}
&\tilde{\mathcal{K}} :=  \mathbb{K}\Big(-\frac12-\frac{\lambda+g^2+\epsilon}2, 4g^2, -\frac{\lambda+g^2-\epsilon}2\,;\, \mu-4\epsilon g^2 -\epsilon^2\Big) \in  \mathcal{U}(\mathfrak{sl}_2), \\
&\tilde{\Lambda}_a:=  \lambda_a\Big(-\frac12-\frac{\lambda+g^2+\epsilon}2, 4g^2, -\frac{\lambda+g^2-\epsilon}2\Big).
\end{align*}
Then
\begin{equation}
x(x-1)\mathcal{H}_2^{\epsilon}(\lambda)= x^{-\frac12(a-\frac12)}(\varpi_a(\tilde{\mathcal{K}})-\tilde{\Lambda}_a)x^{\frac12(a-\frac12)}. \qed
\end{equation}
\end{enumerate}
\end{prop}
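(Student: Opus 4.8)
The plan is to specialize Lemma~\ref{K-element} to the two parameter choices prescribed in the statement and then match coefficients against the explicit confluent Heun operators $\mathcal{H}_j^\epsilon(\lambda)$ displayed in Section~\ref{CHP}. Both sides are second order differential operators of the common shape $\frac{d^2}{dx^2}+(\cdots)\frac{d}{dx}+(\cdots)$, so the identity reduces to a finite verification: the leading term agrees automatically, and it remains to check the coefficient of $\frac{d}{dx}$ (three numbers: the constant, the residue at $x=0$, and the residue at $x=1$) together with the zeroth order term (two numbers: the coefficient of $x$ and the constant in the numerator).

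For part (1) I would set $a=-(\lambda+g^2-\epsilon)$, $\alpha=1-\frac{\lambda+g^2-\epsilon}{2}$, $\beta=4g^2$, and $\gamma=\frac12-\frac{\lambda+g^2+\epsilon}{2}$, and substitute into the right-hand side of Lemma~\ref{K-element}. First I would verify the first order coefficient: its constant part is $-\beta=-4g^2$, its residue at $x=0$ is $\tfrac12 a+\alpha=1-(\lambda+g^2)+\epsilon$, and its residue at $x=1$ is $\tfrac12 a+2\gamma-\alpha=-(\lambda+g^2)-\epsilon$, all reproducing the bracketed coefficient of $\frac{d}{dx}$ in $\mathcal{H}_1^\epsilon(\lambda)$. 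Next I would check the zeroth order term: the coefficient of $x$ in the numerator is $-a\beta=4g^2(\lambda+g^2-\epsilon)$, matching, while the constant in the numerator is $\lambda_a(\alpha,\beta,\gamma)+C$. After multiplying the whole expression by $x(x-1)$, which clears the denominator, the two operators differ only by the scalar $\lambda_a(\alpha,\beta,\gamma)$.

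This is precisely where the constant $\Lambda_a$ enters. Since $\Lambda_a=\lambda_a(\alpha,\beta,\gamma)$ for the chosen parameters, and conjugation by $x^{\frac12(a-\frac12)}$ fixes scalars, subtracting $\Lambda_a$ cancels the unwanted $\lambda_a$ and leaves the numerator constant equal to $C=\mu+4\epsilon g^2-\epsilon^2$, the accessory parameter of $\mathcal{H}_1^\epsilon(\lambda)$. This closes part (1). Part (2) is handled identically with $a=-(\lambda+g^2-1+\epsilon)$, $\alpha=-\frac12-\frac{\lambda+g^2+\epsilon}{2}$, $\gamma=-\frac{\lambda+g^2-\epsilon}{2}$, and $C=\mu-4\epsilon g^2-\epsilon^2$; the same five coefficient comparisons reproduce $\mathcal{H}_2^\epsilon(\lambda)$, with the exponents at $x=0$ and $x=1$ interchanged as recorded in Lemma~\ref{Exponents}.

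There is no conceptual obstacle, the proposition being proved, as the author notes, "the same way" as in \cite{WY2014JPA}. The only point demanding care is the bookkeeping of the constant terms: one must remember that Lemma~\ref{K-element} produces $\lambda_a+C$ rather than $C$ in the numerator, so that the subtraction of $\Lambda_a$ is not cosmetic but is exactly what converts the representation theoretic constant into the accessory parameter $\mu\pm 4\epsilon g^2-\epsilon^2$. Among the individual checks, the residue at $x=1$, namely $\tfrac12 a+2\gamma-\alpha$, is the most error-prone, since it is the only coefficient in which all three of $a,\alpha,\gamma$ occur simultaneously.
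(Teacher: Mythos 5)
Your proposal is correct and is exactly the argument the paper intends: the author gives no written proof beyond remarking that it is "simple and done by the same way in [WY2014JPA]," which amounts precisely to substituting the stated parameters into Lemma \ref{K-element} and matching the five coefficients against $\mathcal{H}_j^\epsilon(\lambda)$, with the subtraction of $\Lambda_a$ converting $\lambda_a+C$ into the accessory constant $C=\mu\pm4\epsilon g^2-\epsilon^2$. All of your individual coefficient checks (including the residue $\tfrac12 a+2\gamma-\alpha$ at $x=1$) are accurate.
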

 
\begin{rem}
The choice of $\beta$ as $\beta=4g^2$ is necessary (and unique) if we  identify  the operator 
$\frac{x^{-\frac12(a-\frac12)}\varpi_a(\mathbb{K}(\alpha, \beta, \gamma; C))x^{\frac12(a-\frac12)}}{x(x-1)}$
with $\mathcal{H}_{j}^\epsilon(\lambda)$. 
 \end{rem}
\begin{rem}\label{intertwiner}
In Proposition \ref{RedEigenProblem}, the range of $a$ is bounded (looks almost negative). However, since there is an equivalence between $\varpi_a$ and $\varpi_{2-a}$ when $a\not\in \Z$, no such restriction applies. Actually, e.g. for the spherical case, the linear isomorphism $A:={\rm Diag}(\cdots,c_{-n},\cdots,c_0,\cdots,c_n,\cdots)$, where 
$$
c_n= c_0\prod_{k=1}^{|n|} \frac{k-\frac{a}2}{k-1+\frac{a}2} \quad (c_0\not=0),
$$
intertwines two representations $(\varpi_{1,a},\mathbf{V}_{1})$ and $(\varpi_{1,2-a},\mathbf{V}_{1})$, i.e. $A\varpi_{1,a}(X)=\varpi_{1,2-a}(X)A$ holds for any $X\in \frak{sl}_2$.  However, when $a\in 2\Z$, since  there is no such intertwiner (because there is no equivalent irreducible submodules between $\mathbf{V}_{1,2m}$ and $\mathbf{V}_{1,2-2m}$ (see Lemma \ref{Reducible})), it is neccessary to consider another approach to relating a representation $\varpi_{N}\, (N>0)$. The investigation of the relation  between discrete series representations and the spectrum of the asymmetric quantum Rabi models for a general $\epsilon\in \frac12\Z$, we will leave the details to another occasion.
\end{rem}


\begin{rem}
We have a mathematical model called  the non-commutative harmonic oscillator (NcHO) introduced in  \cite{PW2001,PW2002} (see   \cite{P2010} in detail) as 
\begin{align*}
Q=Q_{\alpha,\beta}
=\begin{pmatrix}\alpha & 0 \\ 0 & \beta\end{pmatrix}\kakko{-\frac12\frac{d^2}{dx^2}+\frac12x^2}
+\begin{pmatrix}0 & -1 \\ 1 & 0\end{pmatrix}\kakko{x\frac{d}{dx}+\frac12}, \; (\alpha\beta>1, \alpha, \beta>0).
\end{align*}
which is a parity preserving (i.e. $\mathbb{Z}_2$-symmetry) self-adjoint ordinary differential operator and a generalization of the quantum harmonic oscillator having an interaction term. We observed in \cite{W2015IMRN} the close connection between NcHO and the quantum Rabi model (QRM). The QRM is obtained by a second order element $\mathcal{R} \in \mathcal{U}(\mathfrak{sl}_2)$, which is arising from NcHO through the oscillator representation. Precisely, the Heun picture of the QRM is obtained by a confluent Heun equation derived from the Heun operator defined by  $\mathcal{R}$ under a non-unitary principal series representation of $\mathfrak{sl}_2$ and confluent procedure by a finite regular singular point with $\infty$ \cite{W2015IMRN}:  
\begin{equation*}
\text{NcHO}\; {\gets}\;
{\mathcal {R}} 
\underset{\varpi_a}{\to} \text{Heun ODE} \overset{\text{non-unitary principal series}}{\longrightarrow}\underset{\text{confluent process}}{\longrightarrow}\;
\begin{matrix}
\text{Confluent Heun picture} \\ 
\text{of the QRM}
\end{matrix}
\end{equation*}
The element $\mathcal{R} \in \mathcal{U}(\mathfrak{sl}_2)$ is given as 
\begin{equation}\label{R-operator}
\mathcal{R}= 
\Big[E-F-\frac{\sqrt{a^2+1}}{a} H + \frac{\nu}{a}\Big](H-\nu)+ (\varepsilon\nu)^2,
\end{equation}
depending on non-zero parameters $(a, \nu, \varepsilon)\in \mathbb{R}^3$.
For a suitable choice of  these parameters depending on $\alpha, \beta$, $\mathcal{R} \in \mathcal{U}(\mathfrak{sl}_2)$ gives the NcHO through the oscillator representations (\cite{W2014MfI}).
It would be interesting to find a $\Z_2$-symmetry broken generalization of NcHO, which can give the asymmetric quantum Rabi model by the same confluent procedure.  Additionally, it would be desirable if there is a similar symmetry, as we will see in the subsequent sections on the asymmetric quantum Rabi model, especially in relation with the monodoromy problems for Heun ordinary equations.  Moreover, we may construct a ``$G$-function" for NcHO similar to the one defined by Braak \cite{B2011PRL}. It would be quite interesting to examine the relation between those two $G$-functions.    
\end{rem}

\section{Degeneracy of eigenstates of the asymmetric quantum Rabi model for $\epsilon \in \frac12\Z$}
In this section, we study the level crossing of the asymmetric quantum Rabi model when  $\epsilon \in \frac12\Z$ in terms of the finite dimensional irreducible representations of  $\mathfrak{sl}_2$. We first consider  the general case. 

\subsection{$\lambda= 2m-g^2+\epsilon$ ($ \mathbf{F}_{2m+1}$: spherical $\varpi_{1,-2m}(\mathcal{K})$-eigenproblems)}
Let $a=-(\lambda+g^2-\epsilon)=-2m\; (m\in \Z_{>0})$. Recall the notation in Proposition \ref{RedEigenProblem}. We take 
\begin{equation*}
\begin{cases}
\alpha= 1- \frac{\lambda+g^2-\epsilon}2=1-m,\\
\beta=4g^2,\\
\gamma=\frac12- \frac{\lambda+g^2+\epsilon}2=\frac12-m-\epsilon,\\
C=\mu+4\epsilon g^2 -\epsilon^2. 
\end{cases} 
\end{equation*}

We have the following lemma. 
\begin{lem} Let  $\nu=\nu_{2m+1}:= \sum_{n=-m}^{m} a_n e_{1,n} \in \mathbf{F}_{2m+1}$. Then the equation $(\varpi_{1,-2m}(\mathcal{K})-\Lambda_{-2m})\nu=0$ is equivalent to the following
$$
\beta_n^+ a_{n+1} +\alpha_n^+ a_n + \gamma_n^+ a_{n-1}=0,
$$
where
\begin{equation*}
\begin{cases}
\alpha_n^+= (m-n)^2-4g^2(m-n)-\Delta^2+2\epsilon(m-n),\\
\beta_n^+=(m+n+1)(m-n-1),\\
\gamma_n^+=4g^2(m-n+1).
\end{cases}
\end{equation*}
\end{lem}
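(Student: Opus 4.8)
The plan is to compute the action of $\varpi_{1,-2m}(\mathcal{K})$ directly on the weight basis $\{e_{1,n}\}$ and read off the recurrence from the vanishing of the coefficient of each $e_{1,n}$ in $(\varpi_{1,-2m}(\mathcal{K})-\Lambda_{-2m})\nu$. Since $a=-2m$ forces $\tfrac{a}{2}=-m$, the representation formulas of Section~3 specialize to $\varpi_{1,-2m}(H)e_{1,n}=2n\,e_{1,n}$, $\varpi_{1,-2m}(E)e_{1,n}=(n-m)\,e_{1,n+1}$ and $\varpi_{1,-2m}(F)e_{1,n}=-(n+m)\,e_{1,n-1}$. This is a purely algebraic computation and it is cleaner to work on the weight basis than through the differential-operator form of Lemma~\ref{K-element}, although the two routes are equivalent.

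First I would evaluate the factored expression $\mathcal{K}=\left[\tfrac12 H-E+\alpha\right](F+\beta)+\gamma\left[H-\tfrac12\right]+C$, respecting the order in $\mathcal{U}(\mathfrak{sl}_2)$: apply $(F+\beta)$ first and then $\left[\tfrac12 H-E+\alpha\right]$. A short computation yields
\[
\varpi_{1,-2m}(\mathcal{K})e_{1,n}=-(n+m)(n-1+\alpha)\,e_{1,n-1}+D_n\,e_{1,n}-\beta(n-m)\,e_{1,n+1},
\]
where $D_n=(n+m)(n-1-m)+\beta(n+\alpha)+\gamma(2n-\tfrac12)+C$ is the diagonal coefficient. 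Collecting the coefficient of a fixed $e_{1,n}$ in $\varpi_{1,-2m}(\mathcal{K})\nu=\sum_k a_k\,\varpi_{1,-2m}(\mathcal{K})e_{1,k}$ then produces exactly three contributions: from $a_{n-1}$ through the raising term of $\mathcal{K}e_{1,n-1}$, from $a_n$ through $D_n$, and from $a_{n+1}$ through the lowering term of $\mathcal{K}e_{1,n+1}$.

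Substituting $\alpha=1-m$ and $\beta=4g^2$ identifies the off-diagonal coefficients at once: the $a_{n-1}$-coefficient is $-\beta(n-1-m)=4g^2(m-n+1)=\gamma_n^+$, and the $a_{n+1}$-coefficient is $-(n+1+m)(n+\alpha)=m^2-(n+1)^2=(m+n+1)(m-n-1)=\beta_n^+$. For the diagonal $a_n$-coefficient I would substitute $\gamma=\tfrac12-m-\epsilon$ together with the relation $\lambda+g^2=2m+\epsilon$ (equivalent to $a=-2m$) into $C=\mu+4\epsilon g^2-\epsilon^2$ and into $\Lambda_{-2m}=\beta\big(\tfrac12 a+\alpha\big)+\gamma\big(a-\tfrac12\big)$; after cancellation of the $\epsilon^2$ and $g^2\epsilon$ terms in $C$, the quantity $D_n-\Lambda_{-2m}$ reorganizes into $(m-n)^2-4g^2(m-n)+2\epsilon(m-n)-\Delta^2=\alpha_n^+$, matching the claimed expression.

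Finally I would record the truncation: since $\mathbf{F}_{2m+1}$ is a $\varpi_{1,-2m}$-submodule of $\mathbf{V}_{1,-2m}$ (Lemma~\ref{Reducible}(1) with $m$ replaced by $m+1$), the vector $(\varpi_{1,-2m}(\mathcal{K})-\Lambda_{-2m})\nu$ again lies in $\mathbf{F}_{2m+1}$, so its vanishing is equivalent to the $2m+1$ equations for $n=-m,\dots,m$. The recurrence closes off consistently because the raising coefficient $-\beta(n-m)$ vanishes at $n=m$ and the lowering coefficient $-(n+m)(n-1+\alpha)$ vanishes at $n=-m$, so no out-of-range terms $a_{\pm(m+1)}$ appear. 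The computation is routine; the only points needing care are the non-commutativity in the ordering of $\left[\tfrac12 H-E+\alpha\right]$ and $(F+\beta)$, and the index shifts $n\mapsto n\pm1$ in passing from $\varpi_{1,-2m}(\mathcal{K})e_{1,k}$ to the coefficient of a fixed $e_{1,n}$, and I expect the bookkeeping of the $C$, $\mu$ and $\Lambda_{-2m}$ substitutions in the diagonal term to be the main (if only mild) obstacle.
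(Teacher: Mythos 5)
Your proposal is correct and follows essentially the same route as the paper: both compute $\varpi_{1,-2m}(\mathcal{K})e_{1,n}$ on the weight basis (respecting the ordering of the factors in $\mathbb{K}$), collect the coefficients of each $e_{1,n}$ in $\varpi_{1,-2m}(\mathcal{K})\nu$, and identify the diagonal term with $\alpha_n^+$ after substituting $\mu$, $C$ and $\Lambda_{-2m}$. Your explicit check that the boundary coefficients vanish at $n=\pm m$ (so the recurrence closes within $\mathbf{F}_{2m+1}$) is a point the paper leaves implicit, but otherwise the arguments coincide.
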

\begin{proof}
The proof can be obtained in a similar way  to that in \cite{WY2014JPA}. In fact, we observe 
\begin{align*}
\varpi_{1,-2m}(\mathcal{K})e_{1,n}
=& (m+n)(m-n)e_{1,n-1}\\
+& \Big\{(m+n)(n-1-m)+4g^2(n+1-m)+(\frac12-m-\epsilon)(2n-\frac12)+\mu+4\epsilon g^2-\epsilon^2)\Big\}e_{1,n}\\
+& 4g^2(m-n)e_{1,n+1}.
\end{align*}
It follows that 
\begin{align*}
\varpi_{1,-2m}(\mathcal{K})\nu
&= \sum_{n=-m}^{m-1} (m+n+1)(m-n-1)a_{n+1}e_{1.n}\\
&+ \sum_{n=-m}^m\Big\{-(m+n)(m-n+1)-4g^2(m-n-1)+ (\frac12-m-\epsilon)(2n-\frac12)+\mu + 4\epsilon g^2-\epsilon^2\Big\}a_n e_{1,n}\\
&+4g^2 \sum_{n=-m+1}^m(m-n+1) a_{n-1}e_{1,n}\in \mathbf{F}_{2m+1}. 
\end{align*}
Since $C=\mu+4\epsilon g^2 -\epsilon^2$, where $\mu= (\lambda+g^2)^2-4g^2(\lambda+g^2)-\Delta^2$ and 
\begin{align*}
\Lambda_a = \Lambda_{-2m}
&= \beta(\frac12a +\alpha) + \gamma(a-\frac12)\\
&= 4g^2(-2m+1)+(\frac12-m-\epsilon)(-2m-\frac12),
\end{align*}
we observe that the eigenvalue equation $(\varpi_{1,-2m}(\mathcal{K})-\Lambda_{-2m})\nu=0$ is equivalent to the stated recurrence equation. 
\end{proof}

Define now a matrix $M^{(2m,\epsilon)}_k=M_k^{(2m,\epsilon)}((2g)^2), \Delta)$ $(k=0,1,2,\cdots,2m)$ by 
\begin{align*}
M^{(2m,\epsilon)}_k=
\begin{bmatrix}
\alpha_{m}^+ & \gamma_{m}^+ &0&0& \cdots& \cdots & 0 &0 \\
\beta_{m-1}^+& \alpha_{m-1}^+ & \gamma_{m-1}^+ &0& \cdots&\cdots & \cdot & \cdot \\
0& \beta_{m-2}^+ & \alpha_{m-2}^+ & \gamma_{m-2}^+ &0& \cdots &\cdot & \cdot \\
\cdot& \cdots&\ddots& \ddots & \ddots& \cdots & 0& \cdot\\
\cdot& \cdots&\cdots& \ddots & \ddots& \gamma_{m-k+3}^+& 0& \cdot\\
\cdot& \cdots&\cdots& 0& \beta_{m-k+2}^+ & \alpha_{m-k+2}^+ & \gamma_{m-k+2}^+ & 0\\
0 & \cdots&\cdots& \cdots & 0& \beta_{m-k+1}^+ & \alpha_{m-k+1}^+ & \gamma_{m-k+1}^+ \\
0 & \cdots &\cdots& \cdots & 0 & 0 & \beta_{m-k}^+ & \alpha_{m-k}^+
\end{bmatrix}.
\end{align*}
Then the \textit{continuant} $\{\det M^{(2m,\epsilon)}_k\}_{0\leq k\leq 2m}$ (i.e., a multivariate polynomial representing the determinant of a tridiagonal matrix)  
satisfies the recurrence relation
\begin{align*}
\det M^{(2m,\epsilon)}_k 
=  \alpha_{m-k}^+ \det M^{(2m,\epsilon)}_{k-1}-\gamma_{m-k+1}^+\beta_{m-k}^+\det M^{(2m,\epsilon)}_{k-2}
\end{align*}
with initial values 
\begin{equation*}
\begin{cases}
\det M^{(2m,\epsilon)}_{0}=\alpha_m^+=-\Delta^2, \\
\det M^{(2m,\epsilon)}_{1}=\alpha_m^+\alpha_{m-1}^+=-\Delta^2(1-4g^2-\Delta^2+2\epsilon).
\end{cases}
\end{equation*}
Noticing that 
\begin{equation*}
\begin{cases}
\alpha_{m-k}^+=k^2-4g^2k-\Delta^2+2\epsilon k,\\
\beta_{m-k}^+=(2m-k+1)(k-1)\; (=(N-k+1)(k-1)) \quad \text{if} \; N=2m, \\
\gamma_{m-k+1}^+=4kg^2.
\end{cases}
\end{equation*}

\subsection{$\lambda= 2m-g^2-1+\epsilon$ ($\mathbf{F}_{2m}$: non-spherical $\varpi_{2,1-2m}(\mathcal{K})$-eigenproblems)}
Let $a=-(\lambda+g^2-\epsilon)=1-2m\; (m\in \Z_{>0})$. We take 
\begin{equation*}
\begin{cases}
\alpha= 1- \frac{\lambda+g^2-\epsilon}2=\frac32-m,\\
\beta=4g^2,\\
\gamma=\frac12- \frac{\lambda+g^2+\epsilon}2=1-m-\epsilon\\
C=\mu+4\epsilon g^2 -\epsilon^2.
\end{cases} 
\end{equation*}
\begin{lem} Let  $\nu=\nu_{2m} :=\sum_{n=-m}^{m-1} b_n e_{2,n} \in \mathbf{F}_{2m}$. Then the equation $[(\varpi_{2,1-2m}(\mathcal{K})-\Lambda_{1-2m}]\nu=0$ is equivalent to the following
$$
\beta_n^- b_{n+1} +\alpha_n ^-b_n + \gamma_n^- b_{n-1}=0,
$$
where
\begin{equation*}
\begin{cases}
\alpha_n^-= (m-n-1)^2-4g^2(m-n-1-\Delta^2-2\epsilon(m-n-1),\\
\beta_n^-=(m+n+1)(m-n-2),\\
\gamma_n^-=4g^2(m-n). \qed
\end{cases}
\end{equation*}
\end{lem}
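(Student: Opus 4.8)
The plan is to reproduce, for the non-spherical series, the computation carried out for the spherical case in the previous lemma (and originally in \cite{WY2014JPA}): evaluate $\varpi_{2,1-2m}(\mathcal{K})$ on each weight vector $e_{2,n}$, expand $\varpi_{2,1-2m}(\mathcal{K})\nu$, and read off the coefficient of each $e_{2,n}$, which must vanish once $\Lambda_{1-2m}$ is subtracted. Since the element $\mathcal{K}$ and the index $a=-(\lambda+g^2-\epsilon)=1-2m$ are as fixed in this subsection, this is the analogue of Proposition \ref{RedEigenProblem} seen through $\mathbf{V}_{2,1-2m}$, where the relevant finite-dimensional piece is $\mathbf{F}_{2m}$.

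First I would record the action for $a=1-2m$, namely $\varpi_{2,a}(H)e_{2,n}=(2n+1)e_{2,n}$, $\varpi_{2,a}(E)e_{2,n}=(n+1-m)e_{2,n+1}$, and $\varpi_{2,a}(F)e_{2,n}=-(n+m)e_{2,n-1}$, and then apply the two factors of $\mathcal{K}=\left[\tfrac12H-E+\alpha\right](F+\beta)+\gamma\left(H-\tfrac12\right)+C$ in order — first $(F+\beta)$, then $\left[\tfrac12H-E+\alpha\right]$ — with the chosen parameters $\alpha=\tfrac32-m$, $\beta=4g^2$, $\gamma=1-m-\epsilon$, $C=\mu+4\epsilon g^2-\epsilon^2$. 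This yields a three-term expression
\begin{equation*}
\varpi_{2,1-2m}(\mathcal{K})e_{2,n}=-(n+m)(n+1-m)\,e_{2,n-1}+D_n\,e_{2,n}-4g^2(n+1-m)\,e_{2,n+1},
\end{equation*}
where $D_n=(n^2-m^2)+\beta\!\left(n+\tfrac12+\alpha\right)+\gamma\!\left(2n+\tfrac12\right)+C$ is the diagonal term. Substituting this into $\nu=\sum_{n=-m}^{m-1}b_ne_{2,n}$ and shifting the summation index so that everything is collected as a single sum over $e_{2,n}$ produces the off-diagonal coefficients $\beta_n^-=(m+n+1)(m-n-2)$ (the lowering coefficient evaluated at index $n+1$) and $\gamma_n^-=4g^2(m-n)$ (the raising coefficient evaluated at index $n-1$), exactly as stated.

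It then remains to collapse $D_n-\Lambda_{1-2m}$ into $\alpha_n^-$, where $\Lambda_{1-2m}=\lambda_{1-2m}(\alpha,\beta,\gamma)$. Using \eqref{Mu} with $\lambda+g^2-\epsilon=2m-1$, so that $\lambda+g^2=2m-1+\epsilon$, the various $g^2$-terms (from $D_n$, from $4\epsilon g^2$, and from the $-4g^2(\lambda+g^2)$ inside $\mu$) combine to $-4g^2(m-n-1)$, the $-\Delta^2$ descends from $\mu$, and the remaining terms simplify to $(m-n-1)^2+2\epsilon(m-n-1)$; hence $\alpha_n^-=(m-n-1)^2-4g^2(m-n-1)-\Delta^2+2\epsilon(m-n-1)$, the exact analogue of $\alpha_n^+$ with $m-n$ replaced by $m-n-1$. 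The eigenvalue equation $\bigl(\varpi_{2,1-2m}(\mathcal{K})-\Lambda_{1-2m}\bigr)\nu=0$ is then equivalent to the vanishing of every such coefficient, i.e. to the stated recurrence.

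The point requiring genuine care — as opposed to routine bookkeeping — is the behaviour at the two ends $n=-m$ and $n=m-1$: one must check that $\varpi_{2,1-2m}(\mathcal{K})$ does not carry $\nu$ out of $\mathbf{F}_{2m}$. This is precisely the assertion that $\mathbf{F}_{2m}$ is a $\mathcal{U}(\mathfrak{sl}_2)$-submodule of $\mathbf{V}_{2,1-2m}$ (Lemma \ref{Reducible}); concretely, the raising coefficient $-4g^2(n+1-m)$ vanishes at $n=m-1$ and the lowering coefficient $-(n+m)(n+1-m)$ vanishes at $n=-m$, so no $e_{2,m}$ or $e_{2,-m-1}$ appears and the three-term recurrence holds for all $-m\le n\le m-1$ under the convention $b_{-m-1}=b_m=0$. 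Apart from this closure check, the only novelty relative to the spherical computation is the uniform shift $2n\mapsto 2n+1$ in the weights together with the $\pm\tfrac12$ shifts in the $E,F$-coefficients, which I expect to present no conceptual difficulty.
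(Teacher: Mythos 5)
Your computation is correct and follows exactly the route the paper intends: the paper states this lemma with no written proof, deferring to the identical weight-vector computation carried out for the spherical case in the preceding lemma, and your adaptation to $\varpi_{2,1-2m}$ (including the closure check at $n=-m$ and $n=m-1$) reproduces it faithfully. Note that your $\alpha_n^-=(m-n-1)^2-4g^2(m-n-1)-\Delta^2+2\epsilon(m-n-1)$ carries a $+2\epsilon$ where the printed statement (with its garbled parentheses) shows $-2\epsilon$; your sign is the correct one, as confirmed by the paper's own subsequent formulas $\alpha_{m-k-1}^-=k^2-4g^2k-\Delta^2+2\epsilon k$ and $\det M^{(2m-1,\epsilon)}_{1}=-\Delta^2(1-4g^2-\Delta^2+2\epsilon)$, so the statement as printed contains a typo rather than your derivation an error.
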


Similarly to the case of $M^{(2m,\epsilon)}_k$, we define a matrix $M^{(2m-1,\epsilon)}_k=M_k^{(2m-1,\epsilon)}((2g)^2), \Delta)$ $(k=0,1,\cdots,2m-1)$ by 
\begin{align*}
M^{(2m-1,\epsilon)}_k=
\begin{bmatrix}
\alpha_{m-1}^- & \gamma_{m-1}^- &0&0& \cdots& \cdots & 0 &0 \\
\beta_{m-2}^-& \alpha_{m-2}^- & \gamma_{m-2}^- &0& \cdots&\cdots & \cdot & \cdot \\
0& \beta_{m-3}^-& \alpha_{m-3}^- & \gamma_{m-3}^- &0& \cdots &\cdot & \cdot \\
\cdot& \cdots&\ddots& \ddots & \ddots& \cdots & 0& \cdot\\
\cdot& \cdots&\cdots& \ddots & \ddots& \gamma_{m-k+2}^- & 0& \cdot\\
\cdot& \cdots&\cdots& 0& \beta_{m-k+1}^- & \alpha_{m-k+1}^- & \gamma_{m-k+1}^-& 0\\
0 & \cdots&\cdots& \cdots & 0& \beta_{m-k}^- & \alpha_{m-k}^- & \gamma_{m-k}^-\\
0 & \cdots &\cdots& \cdots & 0 & 0 & \beta_{m-k-1}^- & \alpha_{m-k-1}^-
\end{bmatrix}.
\end{align*}
Then the continuant $\{\det M^{(2m-1,\epsilon)}_k\}_{0\leq k\leq 2m-1}$  satisfies the recurrence relation
\begin{align*}
\det M^{(2m-1,\epsilon)}_k =  \alpha_{m-k-1}^- \det M^{(2m-1,\epsilon)}_{k-1}-\gamma_{m-k}^-\beta_{m-k-1}^-\det M^{(2m-1,\epsilon)}_{k-2}
\end{align*}
with initial values 
\begin{equation*}
\begin{cases}
\det M^{(2m-1,\epsilon)}_{0}=\alpha_{m-1}^-=-\Delta^2, \\
\det M^{(2m-1,\epsilon)}_{1}=\alpha_{m-1}^-\alpha_{m-2}^-=-\Delta^2(1-4g^2-\Delta^2+2\epsilon).
\end{cases}
\end{equation*}
Noticing that 
\begin{equation*}
\begin{cases}
\alpha_{m-k-1}^-=k^2-4g^2k-\Delta^2+2\epsilon k,\\
\beta_{m-k-1}^-=(2m-k)(k-1)\; (=(N-k+1)(k-1)) \quad \text{if} \; N=2m-1, \\
\gamma_{m-k}^-=4kg^2.
\end{cases}
\end{equation*}

\bigskip

Now we define 
$$
P^{(N,\epsilon)}_k= P^{(N,\epsilon)}_k((2g)^2, \Delta^2)):= (-1)^k \det M_{k}^{(N,\epsilon)}/(-\Delta^2).
$$
Then the following lemma is obvious from the recurrence equations for $M^{(2m,\epsilon)}_k$ and 
$M^{(2m-1,\epsilon)}_k$. 
\begin{prop}
Let $x=(2g)^2$. Then $P^{(N,\epsilon)}_k(x)=P^{(N,\epsilon)}_k(x,\Delta^2)\;(k=0,1,\cdots,N)$ satisfies the following recursion formula. 
\begin{equation}
\begin{cases}
P^{(N,\epsilon)}_0=1, \quad P^{(N,\epsilon)}_1=x+\Delta^2-1-2\epsilon,\\
P^{(N,\epsilon)}_k=[kx+\Delta^2-k^2-2k\epsilon]P^{(N,\epsilon)}_{k-1} -k(k-1)(N-k+1)xP^{(N,\epsilon)}_{k-2}.
\end{cases}
\end{equation}
In particular, ${\rm deg}P^{(N,\epsilon)}_k=k$ as a polynomial in $x$. \qed
\end{prop}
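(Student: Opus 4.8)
The plan is to prove the Proposition by direct substitution of the definition $P^{(N,\epsilon)}_k = (-1)^k \det M^{(N,\epsilon)}_k/(-\Delta^2)$ into the two continuant recurrences already recorded for the tridiagonal matrices, handling the even case $N=2m$ and the odd case $N=2m-1$ in parallel. The key observation that makes a single computation suffice is that, although the matrices $M^{(2m,\epsilon)}_k$ and $M^{(2m-1,\epsilon)}_k$ come from the two different representations $\varpi_{1,-2m}$ and $\varpi_{2,1-2m}$, the re-indexed entries governing their recursions, namely $\alpha^+_{m-k},\beta^+_{m-k},\gamma^+_{m-k+1}$ in the even case and $\alpha^-_{m-k-1},\beta^-_{m-k-1},\gamma^-_{m-k}$ in the odd case, share the identical closed forms
$$k^2-4g^2k-\Delta^2+2\epsilon k,\qquad (N-k+1)(k-1),\qquad 4kg^2,$$
once one substitutes $N=2m$ or $N=2m-1$ respectively into the bracketed expressions displayed just above the statement. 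Thus both families collapse onto the same three coefficients in $k$, $x=(2g)^2$, $\Delta^2$, $\epsilon$, and $N$.

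First I would record the initial values directly from the stated determinant seeds: $P^{(N,\epsilon)}_0 = (-\Delta^2)/(-\Delta^2)=1$, and $P^{(N,\epsilon)}_1 = -\det M^{(N,\epsilon)}_1/(-\Delta^2) = -(1-4g^2-\Delta^2+2\epsilon) = x+\Delta^2-1-2\epsilon$, matching the claim. Next I would insert $\det M^{(N,\epsilon)}_k = (-1)^k(-\Delta^2)\,P^{(N,\epsilon)}_k$ into the displayed recurrence $\det M_k = \alpha\,\det M_{k-1}-\gamma\beta\,\det M_{k-2}$ and divide through by $(-1)^k(-\Delta^2)$. The only delicate point is the sign bookkeeping: the ratio $(-1)^k/(-1)^{k-1}=-1$ flips the sign of the $\det M_{k-1}$ term, turning $\alpha$ into $-\alpha = kx+\Delta^2-k^2-2k\epsilon$ (using $x=4g^2$), whereas $(-1)^k/(-1)^{k-2}=1$ leaves the $\det M_{k-2}$ term intact, so that the product $\gamma_{m-k+1}\beta_{m-k}=4kg^2\cdot(N-k+1)(k-1)=k(k-1)(N-k+1)x$ survives with its preceding minus sign. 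This yields exactly the stated recursion in both cases.

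Finally, for the degree assertion I would argue by induction on $k$, using the recursion just verified. The base cases $\deg P^{(N,\epsilon)}_0=0$ and $\deg P^{(N,\epsilon)}_1=1$ are immediate from the initial values. Assuming $\deg P^{(N,\epsilon)}_{k-1}=k-1$ and $\deg P^{(N,\epsilon)}_{k-2}=k-2$, the first term $[kx+\Delta^2-k^2-2k\epsilon]P^{(N,\epsilon)}_{k-1}$ has degree $k$ while the second term $k(k-1)(N-k+1)x\,P^{(N,\epsilon)}_{k-2}$ has degree $k-1$; since these differ, the top-degree contribution cannot cancel and $\deg P^{(N,\epsilon)}_k=k$. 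Tracking the leading coefficient along the way shows it equals $k!$, in particular nonzero, which closes the induction.

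There is no genuine obstacle here: once the closed forms of the matrix entries are in hand, the entire statement is a bookkeeping exercise. The only places demanding care are the sign tracking through the substitution $\det M\mapsto(-1)^k(-\Delta^2)P$, and the verification that the even and odd families really reduce to the same three coefficients — the crucial identity being $\beta^+_{m-k}=(2m-k+1)(k-1)$ with $N=2m$ versus $\beta^-_{m-k-1}=(2m-k)(k-1)$ with $N=2m-1$, both equal to $(N-k+1)(k-1)$.
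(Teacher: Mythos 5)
Your proof is correct and takes essentially the same route as the paper, which simply declares the proposition ``obvious'' from the continuant recurrences together with the closed forms $\alpha^{+}_{m-k},\beta^{+}_{m-k},\gamma^{+}_{m-k+1}$ (resp.\ $\alpha^{-}_{m-k-1},\beta^{-}_{m-k-1},\gamma^{-}_{m-k}$) displayed just before the statement. You have merely made explicit the sign bookkeeping under the normalization $(-1)^k\det M^{(N,\epsilon)}_k/(-\Delta^2)$, the unification of the even and odd cases via $(N-k+1)(k-1)$, and the inductive degree count with leading coefficient $k!$, all of which the paper leaves to the reader.
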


We call $P^{(N,\epsilon)}_N(x)$ the {\it constraint polynomial}. It is obvious that if $x>0$ is a root of the constraint polynomial $P^{(N,\epsilon)}_N(x)$, then there is a non-zero eigenvector $\nu (\in \mathbf{F}_{N}$) of the eigenvalue equation $(\varpi_{1,a}(\mathcal{K})\nu=\Lambda_a\nu$, that is, $\nu$ gives rise  an eigenvector of the asymmetric quantum Rabi model through Proposition \ref{RedEigenProblem}.

\subsection{$\lambda= 2m-g^2-\epsilon$ ($\mathbf{F}_{2m}$: non-spherical $\varpi_{2,1-2m}(\tilde{\mathcal{K}})$-eigenproblems)}
Let $a=-(\lambda+g^2-1+\epsilon)=1-2m\; (m\in \Z_{>0})$. We take 
\begin{equation*}
\begin{cases}
\alpha= -\frac12- \frac{\lambda+g^2+\epsilon}2=-\frac12-m,\\
\beta=4g^2,\\
\gamma=- \frac{\lambda+g^2-\epsilon}2=-m+\epsilon\\
C=\mu-4\epsilon g^2 -\epsilon^2.
\end{cases} 
\end{equation*}

\begin{lem} Let  $\tilde\nu=\tilde\nu_{2m}:= \sum_{n=-m}^{m-1} c_n e_{2,n} \,(c_{m-1}\not=0)\in \mathbf{F}_{2m}$. Then the equation $(\varpi_{2,1-2m}(\tilde{\mathcal{K}})-\tilde{\Lambda}_{1-2m})\tilde\nu=0$ is equivalent to the following
$$
\tilde\beta_n^- c_{n+1} +\tilde\alpha_n^- c_n + \tilde\gamma_n^- c_{n-1}=0,
$$
where
\begin{equation*}
\begin{cases}
\tilde\alpha_n^-= (m-n)^2-4g^2(m-n)-\Delta^2-2\epsilon(m-n),\\
\tilde\beta_n^-=(m+n+1)(m-n),\\
\tilde\gamma_n^-=4g^2(m-n). \qed
\end{cases}
\end{equation*}
\end{lem}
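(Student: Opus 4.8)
The proof follows the same template as the two preceding lemmas, so the plan is a direct computation of the action of $\varpi_{2,1-2m}(\tilde{\mathcal{K}})$ on the weight basis. First I would record the specialized parameters: since $\lambda = 2m - g^2 - \epsilon$ gives $\lambda + g^2 = 2m - \epsilon$, Proposition \ref{RedEigenProblem}(2) yields $a = 1 - 2m$ together with $\alpha = -\frac12 - m$, $\beta = 4g^2$, $\gamma = -m + \epsilon$ and $C = \mu - 4\epsilon g^2 - \epsilon^2$. I would then specialize the non-spherical principal series formulas at $a = 1 - 2m$, so that $\varpi_{2,1-2m}(H) e_{2,n} = (2n+1) e_{2,n}$, $\varpi_{2,1-2m}(E) e_{2,n} = (n+1-m) e_{2,n+1}$ and $\varpi_{2,1-2m}(F) e_{2,n} = (-n-m) e_{2,n-1}$. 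In particular $E$ annihilates the top vector $e_{2,m-1}$ and $F$ annihilates the bottom vector $e_{2,-m}$, so $\tilde{\mathcal{K}}$ preserves $\mathbf{F}_{2m}$ by Lemma \ref{Reducible}(2); this is what makes the finite three-term recurrence well posed.

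The core step is to expand $\tilde{\mathcal{K}} = [\frac12 H - E + \alpha](F + \beta) + \gamma(H - \frac12) + C$ and apply it to a generic basis vector $e_{2,n}$, reading off the coefficients of $e_{2,n-1}$, $e_{2,n}$, and $e_{2,n+1}$. Applying $F + \beta$ first and then $\frac12 H - E + \alpha$ produces a tridiagonal action; collecting terms gives a lowering coefficient $(-n-m)(n-1-m)$, a raising coefficient $-4g^2(n+1-m)$, and a diagonal coefficient $n^2 - m^2 + 4g^2(n-m) + (-m+\epsilon)(2n+\frac12) + \mu - 4\epsilon g^2 - \epsilon^2$. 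For $\tilde\nu = \sum_n c_n e_{2,n}$, setting the coefficient of each $e_{2,n}$ in $(\varpi_{2,1-2m}(\tilde{\mathcal{K}}) - \tilde{\Lambda}_{1-2m})\tilde\nu$ to zero then reads off $\tilde\beta_n^-$ as the lowering coefficient evaluated at $n+1$, namely $-(n+1+m)(n-m) = (m+n+1)(m-n)$, and $\tilde\gamma_n^-$ as the raising coefficient evaluated at $n-1$, namely $-4g^2(n-m) = 4g^2(m-n)$, exactly as stated.

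It remains to identify $\tilde\alpha_n^-$ as the diagonal coefficient minus $\tilde{\Lambda}_{1-2m}$. I would compute $\tilde{\Lambda}_{1-2m} = \lambda_{1-2m}(\alpha,\beta,\gamma) = \beta(\frac12 a + \alpha) + \gamma(a - \frac12) = -8mg^2 + (-m+\epsilon)(\frac12 - 2m)$, and substitute $\mu = (2m-\epsilon)^2 - 4g^2(2m-\epsilon) - \Delta^2$ from \eqref{Mu}. After this substitution the terms $\pm\epsilon^2$ and $\pm 4\epsilon g^2$ cancel, and the remaining quadratic in $n$ and $m$ collapses, via $n^2 + m^2 - 2mn = (m-n)^2$, to $\tilde\alpha_n^- = (m-n)^2 - 4g^2(m-n) - \Delta^2 - 2\epsilon(m-n)$. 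The only real obstacle is the bookkeeping in this last simplification: one must track the cross terms $mn$, $\epsilon n$, $mg^2$ and $m\epsilon$ with their signs and check that the pieces of the diagonal that are constant in $n$ cancel exactly against $\tilde{\Lambda}_{1-2m}$. This is routine but error-prone; conceptually there is nothing beyond the computation already carried out in detail for the spherical $\mathbf{F}_{2m+1}$ case.
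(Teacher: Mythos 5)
Your computation is correct and follows exactly the route the paper takes: the paper omits the proof of this lemma but proves the analogous spherical lemma in \S5.1 by the same direct expansion of $\varpi(\mathbb{K})$ on the weight basis, and your specialized parameters, the tridiagonal coefficients $(-n-m)(n-1-m)$ and $-4g^2(n+1-m)$, and the final simplification of the diagonal entry against $\tilde{\Lambda}_{1-2m}$ and $\mu=(2m-\epsilon)^2-4g^2(2m-\epsilon)-\Delta^2$ all check out, yielding $(m-n)^2-4g^2(m-n)-2\epsilon(m-n)-\Delta^2$ as claimed. Nothing further is needed.
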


Since $c_{m+1}=c_m=0$ and $c_{m-1}\not=0$ at the equation
$$
\tilde\beta_m^- c_{m+1} +\tilde\alpha_m^- c_m + \tilde\gamma_m^- c_{m-1}=0,
$$
we find that $\tilde\gamma_m^-=0$ and may take $\tilde\alpha_{m}^-=-\Delta^2$ without loss of generality.
We define a matrix $\tilde{M}^{(2m,\epsilon)}_k=\tilde{M}_k^{(2m,\epsilon)}((2g)^2), \Delta)$ $(k=0, 1,2,\cdots,2m)$ by 
\begin{align*}
\tilde{M}^{(2m,\epsilon)}_k=
\begin{bmatrix}
\tilde\alpha_{m}^- & \tilde\gamma_{m}^-(=0) &0&0& \cdots& \cdots & 0 &0 \\
\tilde\beta_{m-1}^-& \tilde\alpha_{m-1}^- & \tilde\gamma_{m-1}^- &0& \cdots&\cdots & \cdot & \cdot \\
0& \tilde\beta_{m-2}^-& \tilde\alpha_{m-2}^- & \tilde\gamma_{m-2}^- &0& \cdots &\cdot & \cdot \\
\cdot& \cdots&\ddots& \ddots & \ddots& \cdots & 0& \cdot\\
\cdot& \cdots&\cdots& \ddots & \ddots& \tilde\gamma_{m-k+3}^- & 0& \cdot\\
\cdot& \cdots&\cdots& 0& \tilde\beta_{m-k+2}^- & \tilde\alpha_{m-k+2}^- & \tilde\gamma_{m-k+2}^-& 0\\
0 & \cdots&\cdots& \cdots & 0& \tilde\beta_{m-k+1}^- & \tilde\alpha_{m-k+1}^- & \tilde\gamma_{m-k+1}^-\\
0 & \cdots &\cdots& \cdots & 0 & 0 & \tilde\beta_{m-k}^- & \tilde\alpha_{m-k}^-
\end{bmatrix}.
\end{align*}
Then the continuant $\{\det \tilde{M}^{(2m,\epsilon)}_k\}_{0\leq k\leq 2m}$  satisfies the recurrence relation
\begin{align*}
\det \tilde{M}^{(2m,\epsilon)}_k =  \tilde\alpha_{m-k}^- \det \tilde{M}^{(2m,\epsilon)}_{k-1}-\tilde\gamma_{m-k+1}^-\tilde\beta_{m-k}^-\det \tilde{M}^{(2m,\epsilon)}_{k-2}
\end{align*}
with initial values 
\begin{equation*}
\begin{cases}
\det \tilde{M}^{(2m,\epsilon)}_{0}=\tilde\alpha_{m}^-=-\Delta^2, \\
\det \tilde{M}^{(2m,\epsilon)}_{1}=\tilde\alpha_{m}^-\tilde\alpha_{m-1}^-
=-\Delta^2(1-4g^2-\Delta^2-2\epsilon).
\end{cases}
\end{equation*}
Noticing that 
\begin{equation*}
\begin{cases}
\tilde\alpha_{m-k}^-=k^2-4g^2k-\Delta^2-2\epsilon k,\\
\tilde\beta_{m-k}^-=(2m-k+1)k \;(=(N-k+1)k) \quad \text{if} \; N=2m, \\
\tilde\gamma_{m-k+1}^-=4(k-1)g^2.
\end{cases}
\end{equation*}

\subsection{$\lambda= 2m-g^2+1-\epsilon$ ($\mathbf{F}_{2m+1}$: spherical $\varpi_{1,-2m}(\tilde{\mathcal{K}})$-eigenproblems)}
Let $a=-(\lambda+g^2-1+\epsilon)=-2m\; (m\in \Z_{>0})$. We take 
\begin{equation*}
\begin{cases}
\alpha= -\frac12- \frac{\lambda+g^2+\epsilon}2=-1-m,\\
\beta=4g^2,\\
\gamma=- \frac{\lambda+g^2-\epsilon}2=-\frac12-m+\epsilon\\
C=\mu-4\epsilon g^2 -\epsilon^2.
\end{cases} 
\end{equation*}

\begin{lem} Let  $\tilde\nu= \tilde\nu_{2m+1}:=\sum_{n=-m}^{m} d_n e_{1,n} \, (d_n\not=0) \in \mathbf{F}_{2m+1}$. Then the equation $(\varpi_{1,-2m}(\tilde{\mathcal{K}})-\tilde{\Lambda}_{-2m})\tilde\nu=0$ is equivalent to the following
$$
\tilde\beta_n^+ d_{n+1} +\tilde\alpha_n^+ d_n + \tilde\gamma_n^+ d_{n-1}=0,
$$
where
\begin{equation*}
\begin{cases}
\tilde\alpha_n^+= (m-n+1)^2-4g^2(m-n+1)-\Delta^2-2\epsilon(m-n+1),\\
\tilde\beta_n^+=(m+n+1)(m-n+1),\\
\tilde\gamma_n^+=4g^2(m-n+1). \qed
\end{cases}
\end{equation*}
\end{lem}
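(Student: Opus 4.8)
The plan is to prove the stated equivalence by a direct computation, exactly paralleling the method used for the $\mathcal{K}$-eigenproblem on $\mathbf{F}_{2m+1}$ treated in the first subsection. First I would compute the action of $\tilde{\mathcal{K}}$ on a single weight vector $e_{1,n}$. Using the factored form $\tilde{\mathcal{K}} = [\tfrac12 H - E + \alpha](F+\beta) + \gamma[H - \tfrac12] + C$ with the parameters of this subsection, $(\alpha,\beta,\gamma,C) = (-1-m,\, 4g^2,\, -\tfrac12 - m + \epsilon,\, \mu - 4\epsilon g^2 - \epsilon^2)$, together with the explicit formulas for $\varpi_{1,-2m}(H),\varpi_{1,-2m}(E),\varpi_{1,-2m}(F)$ on $e_{1,n}$, I would apply the two factors in order: first $(F+\beta)$, producing a combination of $e_{1,n-1}$ and $e_{1,n}$, then $[\tfrac12 H - E + \alpha]$, and finally add the diagonal contribution $\gamma(2n - \tfrac12) + C$. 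The outcome is a three-term expression
\begin{equation*}
\varpi_{1,-2m}(\tilde{\mathcal{K}})e_{1,n} = (-n-m)(n-2-m)e_{1,n-1} + D_n\, e_{1,n} - 4g^2(n-m)e_{1,n+1},
\end{equation*}
with $D_n = 4g^2(n-1-m) + (n+m)(n-1-m) + \gamma(2n - \tfrac12) + C$.

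Next I would expand $\varpi_{1,-2m}(\tilde{\mathcal{K}})\tilde\nu = \sum_n d_n\, \varpi_{1,-2m}(\tilde{\mathcal{K}})e_{1,n}$ by linearity and collect the coefficient of each $e_{1,n}$, so that $(\varpi_{1,-2m}(\tilde{\mathcal{K}}) - \tilde\Lambda_{-2m})\tilde\nu = 0$ becomes a scalar three-term recurrence in the $d_n$. The two off-diagonal coefficients are immediate: the coefficient of $d_{n+1}$ is $(-(n+1)-m)(n-1-m)$ and that of $d_{n-1}$ is $-4g^2(n-1-m)$, and rewriting $n - 1 - m = -(m-n+1)$ and $n+1+m = m+n+1$ turns these into $\tilde\beta_n^+ = (m+n+1)(m-n+1)$ and $\tilde\gamma_n^+ = 4g^2(m-n+1)$, respectively.

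The substantive step is to verify that the diagonal coefficient $D_n - \tilde\Lambda_{-2m}$ equals $\tilde\alpha_n^+$. Here one must substitute the explicit values of $\gamma$ and $C$, in particular expanding the accessory parameter $\mu = (\lambda+g^2)^2 - 4g^2(\lambda+g^2) - \Delta^2$ with $\lambda + g^2 = 2m+1-\epsilon$, together with $\tilde\Lambda_{-2m} = \lambda_{-2m}(\alpha,\beta,\gamma)$ as defined in Proposition~\ref{RedEigenProblem}. Writing $p := m-n+1$ isolates the $p^2$, $4g^2 p$ and $2\epsilon p$ contributions cleanly; the $g^2$ pieces combine through $-4g^2(2m+1-\epsilon) - 4\epsilon g^2 = -4g^2(2m+1)$, and the $n$-independent remainder collapses, leaving exactly $-\Delta^2$. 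I expect this cancellation, namely checking that every term not appearing in $\tilde\alpha_n^+$ vanishes identically in $n$, to be the main (though routine) obstacle; it mirrors the analogous computation already carried out for $\mathcal{K}$ in the first subsection.

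Finally I would confirm that the recurrence closes on the finite range $-m \le n \le m$. Lemma~\ref{Reducible} guarantees that $\mathbf{F}_{2m+1}$ is $\varpi_{1,-2m}$-invariant, and consistently the $e_{1,n+1}$-coefficient $-4g^2(n-m)$ vanishes at $n = m$ while the $e_{1,n-1}$-coefficient $(-n-m)(n-2-m)$ vanishes at $n = -m$; hence, with the convention $d_{m+1} = d_{-m-1} = 0$, the recurrence holds for every $n$ in range, establishing the equivalence.
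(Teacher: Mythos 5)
Your proposal is correct and follows essentially the same route as the paper: compute $\varpi_{1,-2m}(\tilde{\mathcal{K}})e_{1,n}$ from the factored form of $\mathbb{K}$ with the parameters $(\alpha,\beta,\gamma;C)=(-1-m,4g^2,-\tfrac12-m+\epsilon,\mu-4\epsilon g^2-\epsilon^2)$, collect the coefficient of each $e_{1,n}$, and verify that the diagonal term reduces to $\tilde\alpha_n^+$ after substituting $\mu$ with $\lambda+g^2=2m+1-\epsilon$ and subtracting $\tilde\Lambda_{-2m}$ (the same computation carried out explicitly for $\mathcal{K}$ in \S5.1 and in \cite{WY2014JPA}). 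The off-diagonal coefficients, the $-\Delta^2$ cancellation, and the boundary vanishing at $n=\pm m$ all check out as you describe.
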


Since $d_{m+2}=d_{m+1}=0$ and $d_{m}\not=0$ at the equation
$$
\tilde\beta_{m+1}^+ d_{m+2} +\tilde\alpha_{m+1}^+ d_{m+1} + \tilde\gamma_{m+1}^+ d_{m}=0,
$$
we find that $ \tilde\gamma_{m+1}^+=0$ and may take $\tilde\alpha_{m+1}^+=-\Delta^2$ without loss of generality. 
We define a matrix $\tilde{M}^{(2m+1,\epsilon)}_k=\tilde{M}_k^{(2m+1,\epsilon)}((2g)^2), \Delta)$ $(k=0, 1,2,\cdots,2m+1)$ by 
\begin{align*}
\tilde{M}^{(2m+1,\epsilon)}_k=
\begin{bmatrix}
\tilde\alpha_{m+1}^+ & \tilde\gamma_{m+1}^+(=0) &0&0& \cdots& \cdots & 0 &0 \\
\tilde\beta_{m}^+& \tilde\alpha_{m}^+ & \tilde\gamma_{m}^+ &0& \cdots&\cdots & \cdot & \cdot \\
0& \tilde\beta_{m-1}^+& \tilde\alpha_{m-1}^+ & \tilde\gamma_{m-1}^+ &0& \cdots &\cdot & \cdot \\
\cdot& \cdots&\ddots& \ddots & \ddots& \cdots & 0& \cdot\\
\cdot& \cdots&\cdots& \ddots & \ddots& \tilde\gamma_{m-k+4}^- & 0& \cdot\\
\cdot& \cdots&\cdots& 0& \tilde\beta_{m-k+3}^+ & \tilde\alpha_{m-k+3}^-+& \tilde\gamma_{m-k+3}^+& 0\\
0 & \cdots&\cdots& \cdots & 0& \tilde\beta_{m-k+2}^+ & \tilde\alpha_{m-k+2}^+ & \tilde\gamma_{m-k+2}^+\\
0 & \cdots &\cdots& \cdots & 0 & 0 & \tilde\beta_{m-k+1}^+ & \tilde\alpha_{m-k+1}^+
\end{bmatrix}.
\end{align*}
Then the continuant $\{\det \tilde{M}^{(2m+1,\epsilon)}_k\}_{0\leq k\leq 2m+1}$  satisfies the recurrence relation
\begin{align*}
\det \tilde{M}^{(2m+1,\epsilon)}_k =  \tilde\alpha_{m-k+1}^+ \det \tilde{M}^{(2m+1,\epsilon)}_{k-1}-\tilde\gamma_{m-k+2}^+\tilde\beta_{m-k+1}^+\det \tilde{M}^{(2m+1,\epsilon)}_{k-2}
\end{align*}
with initial values 
\begin{equation*}
\begin{cases}
\det \tilde{M}^{(2m+1,\epsilon)}_{0}=\tilde\alpha_{m+1}^+=-\Delta^2, \\
\det \tilde{M}^{(2m+1,\epsilon)}_{1}=\tilde\alpha_{m+1}^+\tilde\alpha_{m}^+
=-\Delta^2(1-4g^2-\Delta^2-2\epsilon).
\end{cases}
\end{equation*}
Noticing that 
\begin{equation*}
\begin{cases}
\tilde\alpha_{m-k+1}^+=k^2-4g^2k-\Delta^2-2\epsilon k,\\
\tilde\beta_{m-k+1}^+=(2m+2-k)k \;  (=(N-k+1)k) \quad \text{if} \; N=2m+1, \\
\tilde\gamma_{m-k+2}^+=4(k-1)g^2.
\end{cases}
\end{equation*}

\bigskip

We now define the {\it constraint polynomial} $\tilde{P}^{(N,\epsilon)}_N$ similarly to $P^{(N,\epsilon)}_N$.
$$
\tilde{P}^{(N,\epsilon)}_k= \tilde{P}^{(N,\epsilon)}_k((2g)^2, \Delta^2)):
= (-1)^k \det \tilde{M}_{k}^{(N,\epsilon)}/(-\Delta^2).
$$
Then the following lemma is obvious from the recurrence equations for $\tilde{M}^{(2m,\epsilon)}_k$ and 
$\tilde{M}^{(2m+1,\epsilon)}_k$. 
\begin{prop}
Let $x=(2g)^2$. Then $\tilde{P}^{(N,\epsilon)}_k(x,\Delta^2)\;(k=0,1,\cdots,N)$ satisfies the following recursion formula. 
\begin{equation}
\begin{cases}
\tilde{P}^{(N,\epsilon)}_0=1, \quad \tilde{P}^{(N,\epsilon)}_1=x+\Delta^2-1+2\epsilon,\\
\tilde{P}^{(N,\epsilon)}_k=[kx+\Delta^2-k^2+2k\epsilon]\tilde{P}^{(N,\epsilon)}_{k-1} -k(k-1)(N-k+1)x\tilde{P}^{(N,\epsilon)}_{k-2}.
\end{cases}
\end{equation}
In particular, ${\rm deg}\tilde{P}^{(N,\epsilon)}_k=k$ as a polynomial in $x$. \qed
\end{prop}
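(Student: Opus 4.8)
The plan is to obtain the stated recursion directly from the two determinant recurrences already established above for the tridiagonal families $\tilde M^{(2m,\epsilon)}_k$ (the case $N=2m$) and $\tilde M^{(2m+1,\epsilon)}_k$ (the case $N=2m+1$), by first unifying the two parities into a single recurrence in the variable $x=(2g)^2$ and then applying the sign-normalization $\tilde P^{(N,\epsilon)}_k=(-1)^k\det\tilde M^{(N,\epsilon)}_k/(-\Delta^2)$.

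First I would unify the two cases. Both families satisfy a recurrence of the shape $\det\tilde M_k=\tilde\alpha\det\tilde M_{k-1}-\tilde\gamma\tilde\beta\det\tilde M_{k-2}$, and inserting the explicit entries listed beneath each matrix, together with $x=(2g)^2=4g^2$, I find the relevant quantities coincide in both parities: in each case $\tilde\alpha=k^2-xk-\Delta^2-2\epsilon k$, while $\tilde\gamma\tilde\beta=(k-1)x\cdot(N-k+1)k=k(k-1)(N-k+1)x$ (here one uses $\tilde\gamma=4(k-1)g^2=(k-1)x$ and $\tilde\beta=(N-k+1)k$, the forms valid for both $N=2m$ and $N=2m+1$). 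Hence the even and odd subcases collapse to the single recurrence
$$
\det\tilde M_k=(k^2-xk-\Delta^2-2\epsilon k)\det\tilde M_{k-1}-k(k-1)(N-k+1)x\,\det\tilde M_{k-2}.
$$

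Next I would push this through the normalization. By the definition of $\tilde P^{(N,\epsilon)}_k$ we have $\det\tilde M_k=(-1)^{k+1}\Delta^2\tilde P_k$; substituting this for each of the three determinants and dividing by $(-1)^{k+1}\Delta^2$ flips the sign of the $\tilde\alpha$-coefficient (so that it becomes $-\tilde\alpha=kx+\Delta^2-k^2+2k\epsilon$) and restores the correct sign of the two-step term, yielding exactly the asserted recurrence
$$
\tilde P_k=(kx+\Delta^2-k^2+2k\epsilon)\tilde P_{k-1}-k(k-1)(N-k+1)x\,\tilde P_{k-2}.
$$
The initial data then follow from $\det\tilde M_0=-\Delta^2$ and $\det\tilde M_1=-\Delta^2(1-x-\Delta^2-2\epsilon)$, which give $\tilde P_0=1$ and $\tilde P_1=x+\Delta^2-1+2\epsilon$. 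Finally, the degree claim is a one-line induction on $k$: assuming $\deg_x\tilde P_{k-1}=k-1$ with leading coefficient $(k-1)!$, the first term contributes $k!\,x^k$ while $k(k-1)(N-k+1)x\,\tilde P_{k-2}$ has degree $1+(k-2)=k-1$, so $\deg_x\tilde P_k=k$ with leading coefficient $k!$.

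There is no genuine analytic obstacle here, since the content is an algebraic substitution, exactly parallel to the proof of the earlier proposition for $P^{(N,\epsilon)}_k$. The only points demanding care are bookkeeping: verifying that the even ($N=2m$) and odd ($N=2m+1$) subcases really produce identical coefficients once everything is rewritten in terms of $x$ and $N$ — the index shifts in $\tilde\alpha,\tilde\beta,\tilde\gamma$ differ between the two matrices, so this equality is the one nontrivial check — and tracking the alternating signs $(-1)^{k+1}$ through the normalization so that the diagonal term acquires precisely the sign making $-\tilde\alpha=kx+\Delta^2-k^2+2k\epsilon$.
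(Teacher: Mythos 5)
Your proposal is correct and is exactly the verification the paper leaves implicit: the paper states the result as ``obvious from the recurrence equations for $\tilde{M}^{(2m,\epsilon)}_k$ and $\tilde{M}^{(2m+1,\epsilon)}_k$,'' and you carry out precisely that substitution --- unifying the two parities via $\tilde\alpha=k^2-xk-\Delta^2-2\epsilon k$ and $\tilde\gamma\tilde\beta=k(k-1)(N-k+1)x$, dividing through the continuant recurrence by $(-1)^{k+1}\Delta^2$, and checking the initial values and the degree by induction. No discrepancy with the paper's intended argument.
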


The following lemma is proved in \cite{LB2015JPA}  in the same way as in Ku\'s \cite{K1985JMP}.
\begin{lem} [\cite{LB2015JPA}] The constraint polynomial $P^{(N,\epsilon)}_N(x)$ (resp. $\tilde P^{(N,\epsilon)}_N(x)$) has $N-k$ positive distinct roots in the range $\sqrt{k^2+2k\epsilon}< \Delta <\sqrt{(k+1)^2+2(k+1)\epsilon}$ (with assuming $\omega=1$). \qed
\end{lem}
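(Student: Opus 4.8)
The plan is to fix $\epsilon$ and $N$, regard $P^{(N,\epsilon)}_N$ as a polynomial in $x=(2g)^2$ of degree $N$, and count its positive roots by a Sturm-type sign-variation argument applied to the whole family $P^{(N,\epsilon)}_0,\dots,P^{(N,\epsilon)}_N$. First I would record two boundary computations directly from the recursion. Setting $x=0$ collapses the recursion to $P^{(N,\epsilon)}_k(0)=(\Delta^2-k^2-2k\epsilon)P^{(N,\epsilon)}_{k-1}(0)$, so $P^{(N,\epsilon)}_N(0)=\prod_{j=1}^N(\Delta^2-j^2-2j\epsilon)$; in the window $\sqrt{k^2+2k\epsilon}<\Delta<\sqrt{(k+1)^2+2(k+1)\epsilon}$ the factors with $j\le k$ are positive and those with $j\ge k+1$ are negative. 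On the other hand, the recursion shows the leading coefficient of $P^{(N,\epsilon)}_k$ in $x$ is $k!$, so every $P^{(N,\epsilon)}_k(x)\to+\infty$ as $x\to+\infty$.

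Next I would verify that $(P^{(N,\epsilon)}_0,\dots,P^{(N,\epsilon)}_N)$ is a Sturm sequence on $x>0$. The only point to check is that at a zero $x_0>0$ of an intermediate $P^{(N,\epsilon)}_j$ the neighbours have strictly opposite signs: the recursion gives $P^{(N,\epsilon)}_{j+1}(x_0)=-\,j(j+1)(N-j)\,x_0\,P^{(N,\epsilon)}_{j-1}(x_0)$ with the coefficient strictly positive for $x_0>0$, and $P^{(N,\epsilon)}_{j-1}(x_0)\ne 0$ (otherwise descending the recursion forces $P^{(N,\epsilon)}_0=1$ to vanish). Consequently the number $V(x)$ of sign changes in the sequence is locally constant except at zeros of the top polynomial $P^{(N,\epsilon)}_N$, where it jumps by $\pm1$ at each sign change. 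Reading off the two boundary signatures gives $V(0^+)=N-k$ (the first $k+1$ entries are positive and the remaining ones alternate) and $V(+\infty)=0$. Hence $P^{(N,\epsilon)}_N$ changes sign at least $N-k$ times on $(0,\infty)$, producing at least $N-k$ distinct positive roots.

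The hard part is the matching upper bound together with simplicity, i.e.\ that there are \emph{exactly} $N-k$ positive roots and that they are distinct. This is equivalent to the orientation (third Sturm) condition $\sgn P^{(N,\epsilon)}_{N-1}(x_0)=\sgn\tfrac{d}{dx}P^{(N,\epsilon)}_N(x_0)$ at each positive root $x_0$, which forces $V$ to be monotonically non-increasing, so that each crossing lowers $V$ by exactly one and no even-order roots occur. I expect this to be the main obstacle: because the subdominant coefficient $j(j+1)(N-j)x$ in the recursion depends on $x$, the sequence is not a genuine orthogonal-polynomial system and the usual constant-sign Wronskian (Christoffel--Darboux) identity does not hold verbatim. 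I would attack it along one of two lines. The first uses the determinantal description: up to sign $P^{(N,\epsilon)}_N(x)=\det(\Delta^2 I-T(x))$, where $T(x)$ is the $N\times N$ tridiagonal matrix with diagonal $j(j+2\epsilon)-jx$ and off-diagonal products $j(j-1)(N-j+1)x>0$; for each $x>0$ this is an irreducible Jacobi matrix, hence has $N$ simple real eigenvalues $\theta_1(x)<\dots<\theta_N(x)$ with $\theta_i(0)=i^2+2i\epsilon$ and $\theta_i(x)\to-\infty$ as $x\to+\infty$. Positive roots of $P^{(N,\epsilon)}_N(\,\cdot\,,\Delta^2)$ are exactly the points where the horizontal level $\Delta^2$ meets these analytic, non-crossing branches, and the count is precisely $N-k$ once one shows each branch $\theta_i$ is strictly decreasing on $(0,\infty)$; establishing this monotonicity, for instance by a Hellmann--Feynman computation of $\theta_i'(x)$ controlled by the sign of $\theta_i$, is the crux. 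The alternative, following Ku\'s, is a deformation in $\Delta$: as $\Delta^2$ increases the roots move continuously, cannot escape to infinity (the leading coefficient $N!$ is constant), and cross $x=0$ only at the thresholds $\Delta^2=j^2+2j\epsilon$, one at a time; starting from $\Delta\to 0^+$, where the lower bound already forces all $N$ roots to be positive and simple, one then needs to exclude interior collisions of positive roots to conclude that exactly one root leaves $(0,\infty)$ at each threshold. In either approach the residual difficulty is the same simplicity/monotonicity statement. Finally, the assertion for $\tilde P^{(N,\epsilon)}_N$ follows verbatim, since its recursion is obtained from that of $P^{(N,\epsilon)}_N$ by the substitution $\epsilon\mapsto-\epsilon$, which merely relabels the thresholds.
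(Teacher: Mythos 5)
The paper does not actually prove this lemma: it is quoted from Li--Batchelor \cite{LB2015JPA}, with the remark that it is proved there ``in the same way as in Ku\'s \cite{K1985JMP}'', so there is no in-paper argument to compare against. Judged on its own terms, your proposal is correct as far as it goes but is explicitly incomplete, and the part you leave open is precisely the content of the lemma beyond the easy half. The boundary computations $P^{(N,\epsilon)}_N(0)=\prod_{j=1}^N(\Delta^2-j^2-2j\epsilon)$ and leading coefficient $N!$, the verification that consecutive members of the chain cannot vanish simultaneously for $x>0$ (so the triple $(P_{j-1},P_j,P_{j+1})$ has a fixed number of sign changes across a zero of $P_j$), and the counts $V(0^+)=N-k$, $V(+\infty)=0$ are all right, and they do yield \emph{at least} $N-k$ distinct positive roots of $P^{(N,\epsilon)}_N$. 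For the existence statements in \S 6 this lower bound is in fact what matters.

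But the lemma asserts the exact count $N-k$, and your argument never closes it. You correctly reduce the missing half to one of three equivalent statements --- the orientation condition $\sgn P^{(N,\epsilon)}_{N-1}(x_0)=\sgn \frac{d}{dx}P^{(N,\epsilon)}_N(x_0)$ at positive roots $x_0$, strict monotonicity of the eigenvalue branches $\theta_i(x)$ of the tridiagonal matrix $T(x)$, or the absence of interior root collisions under deformation in $\Delta$ --- and you prove none of them. This is a genuine gap rather than a routine verification: because the off-diagonal products $j(j-1)(N-j+1)x$ depend on $x$, the family is not an orthogonal-polynomial system, the Christoffel--Darboux positivity fails verbatim, and the Hellmann--Feynman derivative $\theta_i'(x)$ has no obvious sign. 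That monotonicity/simplicity step is exactly what Ku\'s-type arguments supply and is what must be added for the proof to stand. A secondary point: your final sentence, that the $\tilde P^{(N,\epsilon)}_N$ case ``follows verbatim'' under $\epsilon\mapsto-\epsilon$, moves the thresholds to $\Delta^2=j^2-2j\epsilon$, which is not the window written in the lemma; either the stated window for $\tilde P^{(N,\epsilon)}_N$ must have the sign of $\epsilon$ reversed, or the reduction needs a separate justification rather than a relabelling.
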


Moreover, the following result guaranties the crossing is ocurring at most two. 
\begin{prop}\label{Rank} 
Retain the notation and definition above (\S5.2-\S5.4). We have 
\begin{enumerate}
\item Let $x$ be a root of $P^{(N,\epsilon)}_N(x)=0$. Then the $(N+1)\times (N+1)$-matrix $M_N^{(N,\epsilon)}(x)$ has rank $N$, 
\item Let $x$ be a root of $\tilde P^{(N,\epsilon)}_N(x)=0$. Then the $(N+1)\times (N+1)$-matrix $\tilde M_N^{(N,\epsilon)}(x)$ has rank $N$. 
\end{enumerate}
\end{prop}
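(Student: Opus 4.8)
The plan is to pin down each rank by squeezing it between $N$ and $N$: since the full determinant vanishes, the rank is at most $N$, and it then suffices to exhibit a single nonvanishing $N\times N$ minor to get the rank is at least $N$. The upper bound is immediate from the definitions: because $P^{(N,\epsilon)}_N=(-1)^N\det M^{(N,\epsilon)}_N/(-\Delta^2)$ and $\tilde P^{(N,\epsilon)}_N=(-1)^N\det\tilde M^{(N,\epsilon)}_N/(-\Delta^2)$ with $\Delta\neq0$, a root $x$ of the relevant constraint polynomial forces $\det M^{(N,\epsilon)}_N(x)=0$ (resp. $\det\tilde M^{(N,\epsilon)}_N(x)=0$), so $\rank\le N$ in both cases and only the lower bound remains.

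For the lower bound I would use the two ``corner'' minors that are natural for a tridiagonal matrix. If $T=(T_{ij})_{0\le i,j\le N}$ is tridiagonal, then deleting its last row and first column leaves a lower-triangular $N\times N$ matrix whose diagonal is the superdiagonal $(T_{i,i+1})_{0\le i\le N-1}$, so that minor equals $\prod_{i=0}^{N-1}T_{i,i+1}$; symmetrically, deleting the first row and last column leaves an upper-triangular matrix of determinant $\prod_{i=0}^{N-1}T_{i+1,i}$, the product of the subdiagonal entries. Hence it is enough to verify that, in each family, at least one of these two triangular products is nonzero.

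Next I would simply read off the off-diagonal entries from \S5.1--\S5.4. For $M^{(N,\epsilon)}_N$ the superdiagonal entries are the $\gamma$'s, which by the coefficient formulas equal $x,2x,\dots,Nx$ (namely $\gamma^+_{m-k+1}=4kg^2=kx$ in \S5.1 and $\gamma^-_{m-k}=kx$ in \S5.2), so the superdiagonal minor is $N!\,x^N$, nonzero since $x=(2g)^2>0$ at any positive root. For $\tilde M^{(N,\epsilon)}_N$ the top superdiagonal entry vanishes ($\tilde\gamma^-_m=0$ in \S5.3, $\tilde\gamma^+_{m+1}=0$ in \S5.4), so that product is useless; instead I would use the subdiagonal product, whose entries $\tilde\beta$ equal $(N-k+1)k$ for $k=1,\dots,N$, giving the minor $\prod_{k=1}^N(N-k+1)k=(N!)^2\neq0$, independently of $x$. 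In each case $\rank\ge N$, and with the determinant vanishing we conclude $\rank=N$.

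The main obstacle is really the only subtlety: choosing the correct corner minor. The naive leading principal $N\times N$ minor need not help, and in fact one of the two corner minors is forced to vanish because a boundary off-diagonal entry is zero. This zero is exactly the one flagged just before the definition of $\tilde M$ (the relation $\tilde\gamma^-_m=0$ coming from $c_{m+1}=c_m=0$ with $c_{m-1}\neq0$, and likewise $\tilde\gamma^+_{m+1}=0$), reflecting that the recurrence lives on the finite-dimensional module $\mathbf{F}_N$. Once one observes that the surviving corner minor is a pure triangular product with no vanishing factor --- superdiagonal for $M$, subdiagonal for $\tilde M$ --- the argument closes, and the two parity cases ($N=2m$ and the odd cases of \S5.2, \S5.4) are handled identically because the coefficient formulas are uniform in $k$.
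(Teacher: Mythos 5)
Your proof is correct, and it reaches the conclusion by a route that is dual to, but noticeably different in execution from, the paper's. The paper argues on the kernel side: assuming $\rank M_N^{(N,\epsilon)}(x)<N$ it takes two independent kernel vectors, observes that a kernel vector with vanishing first coordinate must be identically zero because the nonzero superdiagonal entries let one propagate $x_0=0\Rightarrow x_1=0\Rightarrow\cdots$ through the three-term recurrence, and derives a contradiction by normalizing and subtracting; for $\tilde M_N^{(N,\epsilon)}$, where the top superdiagonal entry $\tilde\gamma^-_m$ (resp.\ $\tilde\gamma^+_{m+1}$) vanishes, it passes to the transpose and runs the same propagation down the nonzero subdiagonal $\tilde\beta$'s. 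You instead certify $\rank\ge N$ directly by exhibiting a nonvanishing corner minor of the tridiagonal matrix: the lower-triangular minor with determinant $\prod_k\gamma^{\pm}_{\bullet}=N!\,x^N$ for $M$, and the upper-triangular minor with determinant $\prod_k\tilde\beta^{\pm}_{\bullet}=(N!)^2$ for $\tilde M$. The two arguments hinge on exactly the same structural facts --- which full off-diagonal of each tridiagonal matrix is free of zeros, and the need to switch from the superdiagonal to the subdiagonal for $\tilde M$ --- so neither is more general, but yours is more quantitative (it names the minors and their values, and for $\tilde M$ it even works independently of $x$), while the paper's kernel formulation delivers in the same breath the uniqueness up to scalar of the eigenvector in $\mathbf{F}_N$, which is the form in which the proposition is used later. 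One small point of agreement rather than discrepancy: both proofs implicitly need $x=(2g)^2\neq 0$ for part (1) (your $N!\,x^N$, the paper's $\gamma\neq 0$), which is harmless since only positive roots are relevant, and you state this restriction explicitly.
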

\begin{proof} 
Suppose $\rank M_N^{(N,\epsilon)}(x)<N$.  Then there are two linearly independent vectors $\mathbf{x}={}^t(x_0,x_1,\ldots, x_{N})$ and $\mathbf{y}={}^t(y_0,y_1,\ldots, y_{N})$ in $\ker M_N^{(N,\epsilon)}(x)$. We notice that $x_0\not=0$ (and $y_0\not=0$).  Otherwise, since $\gamma_m^+\not=0\,(N=2m+1)$ (resp. $\gamma_{m-1}^-\not=0\,(N=2m)$), we have $x_1=0$. It follows that $\mathbf{x}=0$ because the matrix is tridiagonal, whence the contradiction. Moreover, we may assume that $x_0=y_0$. Since $\mathbf{x}-\mathbf{y} \in \ker M_N^{(N,\epsilon)}(x)$, the same reasoning shows that $\mathbf{x}=\mathbf{y}$. This contradicts the assumption. Hence we observe $\rank M_N^{(N,\epsilon)}=N$.  
Although $\tilde\gamma^-_m=0 \,(N=2m)$ (resp. $\tilde\gamma^+_{m+1}=0 \,(N=2m+1)$) in the case $\tilde M_N^{(N,\epsilon)}(x)$, it is  is similarly proved because $\rank \tilde M_N^{(N,\epsilon)}(x)= \rank {}^t\tilde M_N^{(N,\epsilon)}(x)$ and 
$\tilde\beta^-_{m-1}\not=0\,(N=2m)$ (resp. $\tilde\beta^+_{m}\not=0\,(N=2m+1)$). Hence the proposition follows.\end{proof}

\section{Degeneracies of eigenstates for $\epsilon \in \frac12 \Z$}

The existence of the level crossing of eigenvalues was empirically observed in \cite{B2011PRL} when $\epsilon = \frac12$.  Recently, Li and Batchelor \cite{LB2015JPA} gave explicit evidence based on numerical computation on roots of polynomials $\tilde{P}^{(N+1,\frac12)}_{N+1}(x)$ and $P^{(N,\frac12)}_{N}(x)$, asserting  level crossings for $\epsilon=\frac12$, and claimed also their presence 
in the general case $\epsilon \in \frac12 \Z$. 

\begin{conjecture}[Li-Batchelor \cite{LB2015JPA}]\label{BLB-conjecture}
The positive roots of $P^{(N,\frac12)}_N(x)=0$ and 
$\tilde{P}^{(N+1,\frac12)}_{N+1}(x)=0$ coincide for all values of $N$ when $\epsilon=\frac12$.
\end{conjecture}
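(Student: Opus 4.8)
The plan is to deduce the conjecture from an exact polynomial factorization. The starting point is a reciprocity visible already in the two recursions of \S5: the sequences $P^{(N,\epsilon)}_k$ and $\tilde P^{(N,\epsilon)}_k$ differ only in the sign of the term $2k\epsilon$, so $\tilde P^{(N,\epsilon)}_k = P^{(N,-\epsilon)}_k$ for every $k$. Thus the conjecture asserts that $P^{(N,\frac12)}_N(x)$ and $P^{(N+1,-\frac12)}_{N+1}(x)$ share the same positive roots, and I would prove the sharper statement
\begin{equation*}
\tilde P^{(N+1,\frac12)}_{N+1}(x) \;=\; \bigl((N+1)x+\Delta^2\bigr)\,P^{(N,\frac12)}_N(x).
\end{equation*}
Granting this, the conjecture is immediate: the extra linear factor vanishes only at $x=-\Delta^2/(N+1)<0$ (for $\Delta\neq0$), so it contributes no positive root and the two positive root sets coincide. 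This is also the expected representation-theoretic content, since by Proposition~\ref{RedEigenProblem} a common positive root $x=(2g)^2$ yields two eigenvectors in $\mathbf{F}_{N+1}$ attached to one and the same eigenvalue $\lambda=2m-g^2\pm\tfrac12$ through the $\mathcal{K}$- and $\tilde{\mathcal{K}}$-pictures, i.e. a level crossing.

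The core of the argument is an interpolating family of identities relating the two sequences term by term, not merely at the top index. Write $P_k:=P^{(N,\frac12)}_k$, $\tilde P_k:=\tilde P^{(N+1,\frac12)}_k$ and abbreviate
\begin{equation*}
B_k:=kx+\Delta^2-k^2-k,\quad \tilde B_k:=kx+\Delta^2-k^2+k,\quad C_k:=k(k-1)(N-k+1),\quad \tilde C_k:=k(k-1)(N-k+2),
\end{equation*}
so that $P_k=B_kP_{k-1}-C_k\,xP_{k-2}$ and $\tilde P_k=\tilde B_k\tilde P_{k-1}-\tilde C_k\,x\tilde P_{k-2}$. I claim that for $1\le k\le N+1$,
\begin{equation*}
(\star_k)\qquad \tilde P_k \;=\; (kx+\Delta^2)\,P_{k-1}\;-\;C_k\,x\,P_{k-2}\qquad(P_{-1}:=0),
\end{equation*}
the decisive feature being that the coefficient of $xP_{k-2}$ is exactly the coefficient $C_k$ occurring in the recursion for $P$. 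Since $C_{N+1}=(N+1)N\cdot0=0$, the identity $(\star_{N+1})$ is precisely the factorization above. I would establish $(\star_k)$ by induction on $k$: the cases $k=1,2$ follow at once from $P_0=\tilde P_0=1$, $\tilde P_1=x+\Delta^2$, and for $k\ge3$ I substitute $(\star_{k-1})$ and $(\star_{k-2})$ into the $\tilde P$-recursion, and $P_{k-1}=B_{k-1}P_{k-2}-C_{k-1}\,xP_{k-3}$ into the claimed formula for $\tilde P_k$.

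Both sides then become combinations of $P_{k-2},P_{k-3},P_{k-4}$. Because three consecutive constraint polynomials are linearly dependent, I would first remove $P_{k-4}$ using the recursion once more in the form $C_{k-2}\,xP_{k-4}=B_{k-2}P_{k-3}-P_{k-2}$, leaving each side a combination of the two consecutive polynomials $P_{k-2}$ and $P_{k-3}$; equating their coefficients then suffices. This reduces the inductive step to the two scalar identities
\begin{align*}
(kx+\Delta^2)B_{k-1}-C_k\,x &= \tilde B_k\bigl((k-1)x+\Delta^2\bigr)-\tilde C_k\,x,\\
C_{k-1}\bigl[(kx+\Delta^2)-\tilde B_k\bigr] &= \tilde C_k\bigl[((k-2)x+\Delta^2)-B_{k-2}\bigr].
\end{align*}
In the first, the $x^2$-, $\Delta^2x$- and $\Delta^4$-terms cancel automatically and what remains is $\tilde C_k-C_k=k(k-1)$; in the second, using $(kx+\Delta^2)-\tilde B_k=k(k-1)$ and $((k-2)x+\Delta^2)-B_{k-2}=(k-1)(k-2)$, the identity collapses to $k\,C_{k-1}=(k-2)\tilde C_k$. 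Both hold identically from the definitions of $C_k,\tilde C_k$, which closes the induction. It is exactly here that $\epsilon=\tfrac12$ is used: the relation $\tilde C_k-C_k=k(k-1)$ is forced by $\tilde B_k-B_k=2k$, which in turn expresses that the shift $\epsilon\mapsto-\epsilon$ between the two sequences is compensated by the unit shift $N\mapsto N+1$.

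The only genuine obstacle is discovering the interpolating identity $(\star_k)$ with the correct coefficient $C_k$; once it is written down, the verification is the routine two-line algebra above. I would arrive at $(\star_k)$ experimentally, by computing $\tilde P^{(N+1,\frac12)}_k-(kx+\Delta^2)P^{(N,\frac12)}_{k-1}$ for small $N$ and recognizing the remainder as $-C_k\,xP^{(N,\frac12)}_{k-2}$. The one subtlety to keep in mind during the induction is the linear dependence of three consecutive constraint polynomials, which must be eliminated (by a single application of the recursion) before coefficients are compared.
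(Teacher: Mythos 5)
Your proposal is correct and follows essentially the same route as the paper: your interpolating identity $(\star_k)$ is, after the index shift $k\mapsto k+1$, exactly the paper's key identity \eqref{Identity1/2}, proved by the same induction on $k$ and specialized at the top index where the coefficient $C_{N+1}=0$ kills the second term, yielding the factorization $\tilde P^{(N+1,\frac12)}_{N+1}=((N+1)x+\Delta^2)P^{(N,\frac12)}_N$ of Corollary \ref{CommonRoots}. The only (cosmetic) difference is that you organize the inductive step by eliminating $P_{k-4}$ and comparing coefficients of $P_{k-2},P_{k-3}$, which reduces the verification to two clean scalar identities, whereas the paper expands everything directly; your base-case computation also quietly corrects a small typo in the paper's $k=1$ step.
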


\begin{thm} The following formulas hold for any $k\,  (0\leq k \leq N)$.
\begin{equation}\label{Identity1/2}
\tilde{P}^{(N+1,\frac12)}_{k+1}(x)
=[(k+1)x+\Delta^2]P^{(N,\frac12)}_k(x) -k(k+1)(N-k)xP^{(N,\frac12)}_{k-1}(x)
\end{equation}
\end{thm}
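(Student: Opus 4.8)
The plan is to prove \eqref{Identity1/2} by induction on the index, feeding in the two three-term recursions for $P^{(N,\epsilon)}_k$ and $\tilde P^{(N,\epsilon)}_k$ established in the two preceding Propositions, specialized to $\epsilon=\frac12$. Writing $x=(2g)^2$, those specializations read
\[
P^{(N,\frac12)}_k=[kx+\Delta^2-k^2-k]P^{(N,\frac12)}_{k-1}-k(k-1)(N-k+1)x\,P^{(N,\frac12)}_{k-2},
\]
\[
\tilde P^{(N+1,\frac12)}_k=[kx+\Delta^2-k^2+k]\tilde P^{(N+1,\frac12)}_{k-1}-k(k-1)(N-k+2)x\,\tilde P^{(N+1,\frac12)}_{k-2},
\]
together with the initial data $P^{(N,\frac12)}_0=\tilde P^{(N+1,\frac12)}_0=1$, $P^{(N,\frac12)}_1=x+\Delta^2-2$ and $\tilde P^{(N+1,\frac12)}_1=x+\Delta^2$. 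These four facts are the only inputs the argument needs.

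Before inducting, I would first recast \eqref{Identity1/2} into the equivalent but more symmetric form
\[
\tilde P^{(N+1,\frac12)}_{n}=P^{(N,\frac12)}_{n}+n(n+1)\,P^{(N,\frac12)}_{n-1}\qquad(0\le n\le N+1),
\]
where $n=k+1$ and $P^{(N,\frac12)}_{-1}:=0$. The two forms are interchangeable by a single application of the $P$-recursion: substituting $P^{(N,\frac12)}_n=[nx+\Delta^2-n(n+1)]P^{(N,\frac12)}_{n-1}-n(n-1)(N-n+1)x\,P^{(N,\frac12)}_{n-2}$ into the symmetric form makes the coefficient of $P^{(N,\frac12)}_{n-1}$ collapse to $nx+\Delta^2$, which is exactly \eqref{Identity1/2} after setting $k=n-1$. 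Working with the symmetric form is what makes the algebra tractable, since both sides then carry the same pair of $P$-indices. The base cases $n=0,1$ are checked directly against the initial data above.

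The heart of the matter is the inductive step. Assuming the symmetric identity at $n$ and at $n-1$, I would compute $\tilde P^{(N+1,\frac12)}_{n+1}$ from the $\tilde P$-recursion, substitute the two induction hypotheses for $\tilde P^{(N+1,\frac12)}_{n}$ and $\tilde P^{(N+1,\frac12)}_{n-1}$, and in parallel expand the target $P^{(N,\frac12)}_{n+1}+(n+1)(n+2)P^{(N,\frac12)}_{n}$ by applying the $P$-recursion once to $P^{(N,\frac12)}_{n+1}$. After subtracting, the difference is an explicit $\C[x]$-combination of $P^{(N,\frac12)}_{n}$, $P^{(N,\frac12)}_{n-1}$ and $P^{(N,\frac12)}_{n-2}$; collecting the coefficients, it factors as $-n(n+1)$ times precisely the $P$-recursion at index $n$ (written with all terms on one side), hence vanishes identically. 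This final cancellation is the step I expect to be the main obstacle: applying the induction hypothesis to $\tilde P^{(N+1,\frac12)}_{n-1}$ produces a spurious $P^{(N,\frac12)}_{n-2}$-term that is absent from the target, and the crux is to verify that, once all coefficients are simplified, the leftover is a scalar multiple of the $P$-recursion and therefore zero — so no linear independence of the $P^{(N,\frac12)}_j$ over $\C[x]$ need be invoked. With the step closed, \eqref{Identity1/2} follows by reading it off from the symmetric form via the equivalence recorded above.
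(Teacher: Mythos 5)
Your proposal is correct and follows essentially the same route as the paper: induction on the index, feeding in the two three-term recursions at $\epsilon=\tfrac12$ and observing that the residual combination collapses to a scalar multiple of the $P$-recursion (your check that this multiple is $-n(n+1)$ is right, and the base cases match the stated initial data). The only difference is organizational — you first rewrite \eqref{Identity1/2} in the equivalent form $\tilde P^{(N+1,\frac12)}_{n}=P^{(N,\frac12)}_{n}+n(n+1)P^{(N,\frac12)}_{n-1}$ (which tacitly extends $P^{(N,\frac12)}_{N+1}$ by the recursion, a harmless and consistent move), whereas the paper runs the induction directly on the two-term form; your normalization shortens the algebra but the underlying argument is identical.
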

\begin{proof} 
We prove the theorem by induction on $k$. Let $\epsilon=\frac12$. 
Since we may assume $P^{(N,\frac12)}_{-1}(x)=0$, it is true for $k=0$. It is easy to see that 
\begin{align*}
&[2x+\Delta^2]P^{(N,\frac12)}_1(x) -2(N-2)xP^{(N,\frac12)}_{0}(x)\\
=& (2x+\Delta^2)(x+\Delta^2-2) -2(N-2)x 
=(2x+\Delta^2-2)(x+\Delta^2)-2Nx \\
=&\tilde{P}^{(N+1,\frac12)}_{2}(x),
\end{align*}
whence  (\ref{Identity1/2}) holds for $k=1$. Now we assume that (\ref{Identity1/2})  is true for less than and equal to $k$. Then we observe 
\begin{align*}
&\tilde{P}^{(N+1,\frac12)}_{k+2}(x)-[(k+1)x+\Delta^2]P^{(N,\frac12)}_{k+1}(x)\\
=& \big\{(k+2)x+\Delta^2-(k+2)^2+(k+2)\big\}\tilde{P}^{(N+1,\frac12)}_{k+1}(x)
-(k+2)(k+1)(N+1-(k+2)+1)x\tilde{P}^{(N+1,\frac12)}_{k})\\
& \qquad -\{(k+2)x+\Delta^2\}P^{(N, \frac12)}_{k+1}(x)\\
=& \big\{(k+2)x+\Delta^2-(k+2)(k+1)\big\}\big\{[(k+1)x+\Delta^2]P^{(N, \frac12)}_{k}(x)-k(k+1)(N-k)xP^{(N, \frac12)}_{k-1}(x) \big\}\\
& -(k+2)(k+1)(N-k)x\big\{[kx+\Delta^2]P^{(N, \frac12)}_{k-1}(x)-k(k-1)(N-k+1)x P^{(N, \frac12)}_{k-2}(x)\big\}\\
& -[(k+2)x+\Delta^2]\big\{[(k+1)x+\Delta^2-(k+1)(k+2)]P^{(N, \frac12)}_{k}(x)-(k+1)k(N-k)xP^{(N, \frac12)}_{k-1}(x) \big\} \\
&= (k+2)(k+1)x P^{(N,\frac12)}_{k}(x) -(k+1)(k+2)(N-k)[kx+\Delta^2-k(k+1)]P^{(N,\frac12)}_{k-1}(x)\\
&-(k+2)(k+1)(N-k)xk(k-1)(N-k+1)x P^{(N, \frac12)}_{k-2}(x)\\
=& (k+2)(k+1)xP^{(N,\frac12)}_{k}(x)\\
& - (k+2)(k+1)(N-k)x\big\{[kx+\Delta^2-k(k+1)]P^{(N, \frac12)}_{k-1}(x)
-k(k-1)(N-k+1)xP^{(N, \frac12)}_{k-2}(x)\big\}\\
=& -(k+1)(k+2)(N-k)xP^{(N,\frac12)}_{k}(x).
\end{align*}
Hence the assertion of the theorem follows. 
\end{proof}

It follows immediately that 
\begin{cor}\label{CommonRoots}
\begin{equation}
\tilde{P}^{(N+1,\frac12)}_{N+1}(x)=[(N+1)x+\Delta^2]P^{(N,\frac12)}_N(x).
\end{equation}
In particular, Conjecture \ref{BLB-conjecture} is true. 
\end{cor}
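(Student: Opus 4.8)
The plan is to obtain Corollary~\ref{CommonRoots} as the specialization of the identity \eqref{Identity1/2} at the extreme value $k=N$; since the substantive recursion work has already been done in proving \eqref{Identity1/2}, no further manipulation of the constraint polynomials is needed, and the only remaining content is the passage from a polynomial factorization to a statement about positive roots.

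First I would put $k=N$ in \eqref{Identity1/2}. The general identity is
\[
\tilde{P}^{(N+1,\frac12)}_{k+1}(x)=[(k+1)x+\Delta^2]P^{(N,\frac12)}_k(x)-k(k+1)(N-k)xP^{(N,\frac12)}_{k-1}(x),
\]
and the key observation is that the coefficient of the lower term $P^{(N,\frac12)}_{k-1}(x)$ carries the factor $(N-k)$. At $k=N$ this factor vanishes, so the entire second summand drops out and one is left with
\[
\tilde{P}^{(N+1,\frac12)}_{N+1}(x)=[(N+1)x+\Delta^2]\,P^{(N,\frac12)}_N(x),
\]
which is exactly the claimed factorization. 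This is the whole computational content of the corollary: the vanishing of $(N-k)$ at $k=N$ converts the three-term identity into a clean product of $P^{(N,\frac12)}_N(x)$ with a single linear factor.

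It then remains to read off Conjecture~\ref{BLB-conjecture} from this factorization. Every positive root of $P^{(N,\frac12)}_N(x)$ is visibly a root of the product $\tilde{P}^{(N+1,\frac12)}_{N+1}(x)$. Conversely, a positive root of the product must annihilate one of the two factors; but the linear factor $(N+1)x+\Delta^2$ vanishes only at $x=-\Delta^2/(N+1)$, and since $x=(2g)^2\ge 0$ and $\Delta^2\ge 0$ this is never positive (in the degenerate case $\Delta=0$ it yields only the excluded value $x=0$). Hence every positive root of $\tilde{P}^{(N+1,\frac12)}_{N+1}(x)$ is in fact a root of $P^{(N,\frac12)}_N(x)$, and the two sets of positive roots coincide, which is the assertion of the conjecture. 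The only point demanding any care—and thus the nearest thing to an obstacle—is precisely this verification that the spurious linear factor contributes no positive root; everything else is a direct substitution into the already-established identity.
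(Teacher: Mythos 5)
Your proposal is correct and matches the paper's (implicit) argument exactly: the paper derives the corollary by setting $k=N$ in the identity \eqref{Identity1/2}, where the factor $N-k$ kills the second term, and the coincidence of positive roots then follows since $(N+1)x+\Delta^2$ has no positive zero. You have simply made explicit the "It follows immediately" step, including the one point worth checking, namely that the linear factor contributes no positive root.
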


Based on numerical computation of roots of the two polynomials, in \cite{LB2015JPA}, Li and Batchelor demonstrated  that  a crossing occurs when $\epsilon \in \frac12\Z$ in general. More precisely, we may state the following:
\begin{conjecture}\label{EpsilonConjecture}
For each $\epsilon =\ell/2 \in \Z_{\geq0}$  and $N \in \Z_{>0}$, 
there exists polynomial $A_N^\ell(x, \Delta^2)\in \Z[x,\Delta^2]$ satisfying the following conditions. 
\begin{enumerate}
\item $A_N^\ell(x, \Delta^2)>0$ \; \text{for all} \; $x>0$, \\
\item $\tilde{P}^{(N+\ell,\ell/2)}_{N+\ell}(x, \Delta^2)=A_N^\ell(x, \Delta^2)P^{(N,\ell/2)}_N(x,\Delta^2)$.
\end{enumerate}
\end{conjecture}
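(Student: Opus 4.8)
The plan is to follow the template that already succeeds for $\ell=1$ in the proof of \eqref{Identity1/2} and Corollary~\ref{CommonRoots}: for each fixed $\ell$ produce a \emph{master identity} valid for all $k$ that expresses the higher polynomial as an explicit $\mathbb{Z}[x,\Delta^2]$-combination of the lower ones, and then specialize at $k=N$. Concretely, I would look for polynomials $c^{(\ell)}_{0},\dots,c^{(\ell)}_{\ell}$ (in $k,x,\Delta^2$) with
\begin{equation*}
\tilde{P}^{(N+\ell,\ell/2)}_{k+\ell}(x)=\sum_{j=0}^{\ell}c^{(\ell)}_{j}(k)\,P^{(N,\ell/2)}_{k-j}(x),
\end{equation*}
where $c^{(\ell)}_{0}$ has $x$-degree $\ell$ and each $c^{(\ell)}_{j}$ with $j\ge1$ carries a factor vanishing at $k=N$ (for $\ell=1$ this factor is $(N-k)$, visible in the coefficient $-k(k+1)(N-k)x$ of \eqref{Identity1/2}). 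Granting this, setting $k=N$ kills every $j\ge1$ term and leaves $\tilde{P}^{(N+\ell,\ell/2)}_{N+\ell}=c^{(\ell)}_{0}(N)\,P^{(N,\ell/2)}_{N}$, so that $A^{\ell}_{N}:=c^{(\ell)}_{0}(N;x,\Delta^2)$. A leading-coefficient check is already consistent: each family satisfies $\deg_x P_k=k$ with top coefficient $k!$ (immediate from the recursion formulas for $P^{(N,\ell/2)}_k$ and $\tilde P^{(N,\ell/2)}_k$), forcing $\deg_x A^{\ell}_N=\ell$ with leading term $\bigl(\prod_{i=1}^{\ell}(N+i)\bigr)x^{\ell}$; this matches $A^{1}_N=(N+1)x+\Delta^2$ and, as a direct check for $\ell=2,\,N=1$, the computed value $A^{2}_{1}=6x^2+5\Delta^2x+\Delta^4+\Delta^2$.

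I would prove the master identity for fixed $\ell$ by induction on $k$, exactly as \eqref{Identity1/2} was proved: substitute the three-term recursions for $\tilde{P}^{(N+\ell,\ell/2)}$ and for $P^{(N,\ell/2)}$ into both sides and match coefficients of each $P^{(N,\ell/2)}_{k-j}$. This produces a linear recursion in $k$ for the unknown $c^{(\ell)}_{j}(k)$, to be pinned down by the base cases $k=0,1$ together with the constraint $\deg_x c^{(\ell)}_0=\ell$. The delicate point is that the linear recursion coefficients, which carry opposite signs of the $\ell$-term on the two sides, together with the two distinct confluence factors $(N-k+1)$ and $(N+\ell-k+1)$, must conspire so that the right-hand side keeps the prescribed shape; verifying this is where the bookkeeping is heaviest.

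The main obstacle is obtaining the $c^{(\ell)}_{j}$ in closed form (equivalently, guessing the correct vanishing factors of the $c^{(\ell)}_{j}$ for $j\ge1$) for general $\ell$. It is tempting to assemble $A^{\ell}_N$ by iterating the $\ell=1$ relation through intermediate values of $\epsilon$, writing $\tilde{P}^{(N+\ell,\ell/2)}_{N+\ell}=P^{(N+\ell,-\ell/2)}_{N+\ell}$ and peeling off one unit of $\epsilon$ at a time; but this fails, since already for $\ell=2$ one checks that $P^{(2,0)}_{2}$ is \emph{not} divisible by $P^{(1,1)}_{1}$ (the division leaves remainder $-2$). Thus the factorization is genuinely an ``all-$\ell$-at-once'' phenomenon and cannot be reduced to a chain of one-step divisibilities; a secondary induction on $\ell$ at the level of the master identities, or a generating-function reformulation of the two recurrences as an ODE in a formal variable, seems necessary to organize the coefficients.

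Finally, for the positivity requirement~(1) I would show $A^{\ell}_N\in\mathbb{Z}_{\ge0}[x,\Delta^2]$, i.e.\ that $c^{(\ell)}_0(N;x,\Delta^2)$ has nonnegative coefficients; this holds in the computed cases above and should propagate through the inductive construction of $c^{(\ell)}_0$. Nonnegativity of the coefficients together with the nonzero top term $\prod_{i=1}^{\ell}(N+i)\,x^{\ell}$ yields $A^{\ell}_N(x,\Delta^2)>0$ for all $x>0$, completing~(1) once~(2) holds. As an alternative to the explicit identity, one could instead prove the divisibility $P^{(N,\ell/2)}_N \mid \tilde{P}^{(N+\ell,\ell/2)}_{N+\ell}$ in $\mathbb{Z}[\Delta^2][x]$ by Gauss's lemma, reducing (using that $P^{(N,\ell/2)}_N$ is square-free in $x$, since its discriminant is not identically zero by the distinct-roots lemma of \cite{LB2015JPA}) to showing that every root in $x$ of $P^{(N,\ell/2)}_N$ is also a root of $\tilde{P}^{(N+\ell,\ell/2)}_{N+\ell}$; but that statement is precisely the representation-theoretic reciprocity (the coincidence of $\varpi_{1}(\mathcal{K})$- and $\varpi_{2}(\tilde{\mathcal{K}})$-eigenvalues on the finite-dimensional modules $\mathbf{F}_{\bullet}$) that the paper defers, so I expect the elementary recursion route above to be the more self-contained one.
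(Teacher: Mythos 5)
First, a point of comparison that matters: the paper does \emph{not} prove this statement. It is stated as Conjecture~\ref{EpsilonConjecture} precisely because the author defers its proof, establishing only the case $\ell=1$ via the identity \eqref{Identity1/2} and Corollary~\ref{CommonRoots}. So there is no proof in the paper to measure yours against, and your attempt has to stand on its own. Judged that way, it is a sensible and well-motivated \emph{plan}: the proposed master identity $\tilde{P}^{(N+\ell,\ell/2)}_{k+\ell}=\sum_{j=0}^{\ell}c^{(\ell)}_{j}(k)\,P^{(N,\ell/2)}_{k-j}$ is the natural generalization of \eqref{Identity1/2}; your leading-coefficient bookkeeping is correct (both families have $\deg_x P_k=k$ with top coefficient $k!$, forcing $\deg_x A^{\ell}_N=\ell$ with top coefficient $(N+1)\cdots(N+\ell)$); your computed $A^{2}_{1}=6x^2+5\Delta^2x+\Delta^4+\Delta^2$ does check out against $\tilde{P}^{(3,1)}_{3}$ and $P^{(1,1)}_{1}$; and the observation that $\tilde{P}^{(N,\epsilon)}_k=P^{(N,-\epsilon)}_k$ does \emph{not} allow one to peel off $\epsilon$ one half-unit at a time (because the intermediate divisibilities fail) is a genuinely useful negative result that closes off the most tempting shortcut.

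Nevertheless the proposal has a real gap, and it is exactly where you locate it yourself. The coefficients $c^{(\ell)}_{j}(k;x,\Delta^2)$ are never constructed for any $\ell\ge2$; the structural claim that every $c^{(\ell)}_{j}$ with $j\ge1$ carries a factor vanishing at $k=N$ is supported only by the $\ell=1$ case and one data point; and the induction on $k$ that is supposed to determine the $c^{(\ell)}_{j}$ is described but not executed --- your own remark that a ``secondary induction on $\ell$'' or a generating-function reformulation ``seems necessary'' is an admission that the scheme as written does not close. The positivity requirement (1) is in the same state: ``should propagate through the inductive construction'' is not an argument, and nonnegativity of all coefficients of $A^{\ell}_N$ is a strictly stronger assertion than positivity for $x>0$, not at all obvious from a three-term recursion whose middle coefficient $kx+\Delta^2-k^2-2k\epsilon$ changes sign. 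Your fallback route (divisibility via square-freeness plus root transfer) is honestly flagged as circular, since the root-transfer statement is the deferred reciprocity itself. In short: the direction is right and consistent with how this conjecture was eventually settled in later work, but what you have written establishes the statement only for $\ell\in\{0,1\}$ and for $(\ell,N)=(2,1)$; for general $\ell$ it remains, as in the paper, a conjecture.
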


Based on this conjectural result, we may find the subsequent observation. 

Before describing the observation, we first remark the infinitesimal character of the irreducible representation. Recall the Casimir element 
\begin{equation}
\Omega := H^2+2EF+2FE \in \mathcal{U}(\frak{sl}_2).
\end{equation}
Notice that $\mathcal{Z}\mathcal{U}(\frak{sl}_2)=\C[\Omega]$, i.e. the element $\Omega$ is the generator of the center $\mathcal{Z}\mathcal{U}(\frak{sl}_2)$ of the universal enveloping algebra  $\mathcal{U}(\frak{sl}_2)$, so that $\Omega$ acts as scalar on each irreducible representation (independent from the realization of the irreducible representation). In particular, $\Omega$ commutes with both $\mathcal{K}$ and $\tilde{\mathcal{K}}$. Also, as a representation of $\mathcal{U}(\frak{sl}_2)$, $\Omega$ infinitesimally distinguishes irreducible representations by the eigenvalues in the most of the cases. 
\begin{equation}
\Omega\big|_{\mathbf{V}_{j,a}}=a(a-2), \quad \Omega\big|_{\mathbf{D}^\pm_n}=n(n-2). 
\end{equation}
In particular, we can  distinguish two irreducible  finite dimensional representations by its eigenvalues. 
$\Omega\big|_{\mathbf{F}_n}=n^2-1$.
\begin{rem} Retain the notation of \S5. Then in particular we observe
\begin{align*}
& \varpi_{1,-2m}(\Omega)\nu_{2m+1}=4m(m+1)\nu_{2m+1}, \quad
\varpi_{2,1-2m}(\Omega)\nu_{2m}=(4m^2-1)\nu_{2m},\\
&\varpi_{2,1-2m}(\Omega)\tilde\nu_{2m}=(4m^2-1)\tilde\nu_{2m}, \quad
\varpi_{1,-2m}(\Omega)\tilde\nu_{2m+1}=4m(m+1)\tilde\nu_{2m+1}.
\end{align*}
Although there is a possibility the dimension of the  eigenspace e.g. of $\varpi_{1,-2m}(\cal{K})$ with eigenvalue $\Lambda_{-2m}$ could be more than one,  by Proposition \ref{Rank},  we may also conclude the identities above. 
\end{rem}
\,

We now describe the structure of the eigenstate of $H_{\rm{Rabi}}^\epsilon$ for $\epsilon=0, \frac12$. 
First, we consider the case $\epsilon=0$ (the QRM case). Then we have $P_N^{(N,0)}(x)=\tilde P_N^{(N,0)}(x)$, whence if $x (>0)$ is a root of $P_N^{(N,0)}(x)$ we have the following discussion. The eigenspace of the eigenvalue $\lambda= 2m-g^2$ is given by a linear span of the vectors $\nu \in \mathbf{F}_{2m+1}$ and $\tilde\nu \in \mathbf{F}_{2m}$. Since the eigenvalue $\nu$ and $\tilde\nu$ of $\Omega$ are different, obviously $\nu(\not=0)$ and $\tilde\nu(\not=0)$ are linear independent. The case where $\lambda= 2m+1-g^2$ is similar. Hence, upon writing $W_\lambda(H_{\rm{Rabi}}^0)$ the eigenspace of $H_{\rm{Rabi}}^0(=H_{\rm{Rabi}})$ of eigenvalue $\lambda$, (under the equivalent picture stated in Theorem \ref{RedEigenProblem}) we conclude that there exist non zero elements $\nu_N, \tilde\nu_N \in \mathbf{F}_N$ such that 
\begin{align*}
W_{2m-g^2} (H_{\rm{Rabi}}^0) &\cong \C\cdot \nu_{2m+1} \oplus \C\cdot \tilde\nu_{2m} \subset \mathbf{F}_{2m+1} \oplus   \mathbf{F}_{2m},\\
W_{2m+1-g^2} (H_{\rm{Rabi}}^0) &\cong \C\cdot \nu_{2(m+1)} \oplus \C\cdot \tilde\nu_{2m+1} \subset \mathbf{F}_{2(m+1)} \oplus   \mathbf{F}_{2m+1}.
\end{align*}
These are the Judd solutions (eigenstate of exceptional degenerate eigenvalues of the quantum Rabi model $H_{\rm{Rabi}}=H_{\rm{Rabi}}^0$) obtained in Ku\'s \cite{K1985JMP}.  The reasoning we described above is the same in \cite{WY2014JPA}. 

Next we consider the case $\epsilon=\frac12$. By Corollary \ref{CommonRoots}, there exists a common root $x(>0)$ of the equations $P_N^{(N,\frac12)}(x)=0$ and $\tilde P_{N+1}^{(N+1,\frac12)}(x)=0$. Then the eigenspace $W_{2m-g^2+\frac12}(H_{\rm{Rabi}}^{\frac12})$ is spanned by $\nu \in \mathbf{F}_{2m+1}\;(P_{2m}^{(2m,\frac12)})$ and $\tilde\nu \in \mathbf{F}_{2m+1}\;(\tilde P_{2m+1}^{(2m+1,\frac12)})$. Thus, we could not see immediately if $\nu$ and $\tilde \nu$ is linearly independent. The same thing happens for $W_{2m-g^2-\frac12}(H_{\rm{Rabi}}^{\frac12})$. 
In order to clarify this problem for the case $\epsilon=\frac12$,  we make use of the following commutation relation of $\mathcal{K}$ and $\tilde{\mathcal{K}}$. The proof is done by a simple computation. 
\begin{lem} The following relation holds. 
\begin{equation}
[K, \tilde{K}]=(\epsilon+\frac32)(H+F)(F+4g^2)+
(\epsilon-\frac12)(8g^2E+HF)-2(\epsilon+\frac12)(\lambda+g^2-\frac12)F. \qed
\end{equation}
\end{lem}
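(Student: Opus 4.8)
The plan is to compute $[\mathcal{K},\tilde{\mathcal{K}}]$ (the $K,\tilde K$ of the statement) by a direct manipulation inside $\mathcal{U}(\mathfrak{sl}_2)$, exploiting the fact that the two elements, as given in Proposition \ref{RedEigenProblem}, share a common shape. Writing $A:=\frac12 H-E$ and $B:=F+4g^2$, one has $\mathcal{K}=(A+\alpha_1)B+\gamma_1(H-\frac12)+C_1$ and $\tilde{\mathcal{K}}=(A+\alpha_2)B+\gamma_2(H-\frac12)+C_2$, where $(\alpha_1,\gamma_1)$ and $(\alpha_2,\gamma_2)$ are the parameter pairs from the two cases of that proposition and $C_1,C_2$ are scalars. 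The crucial observation is that the factor $B$ and the non-scalar part $A$ are identical for both elements, the two differing only by the additive scalars $\alpha_i,\gamma_i$ (and $C_i$). Since scalars are central, every term containing $C_1,C_2$ or the shift $-\frac12$ drops out of the commutator.

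First I would expand by bilinearity. Because $A+\alpha_i$ differ only by a scalar, the purely quadratic part collapses, $[(A+\alpha_1)B,(A+\alpha_2)B]=(\alpha_2-\alpha_1)[AB,B]$, using $[AB,AB]=0$ and $[B,B]=0$. Collecting the remaining terms gives the clean expression
\begin{equation*}
[\mathcal{K},\tilde{\mathcal{K}}]=(\alpha_2-\alpha_1)[AB,B]+(\gamma_2-\gamma_1)[AB,H]+(\gamma_2\alpha_1-\gamma_1\alpha_2)[B,H].
\end{equation*}
Thus the whole problem reduces to the three ``master'' commutators $[AB,B]$, $[AB,H]$ and $[B,H]$, each of which is evaluated from the defining relations $[H,E]=2E$, $[H,F]=-2F$, $[E,F]=H$. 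A short computation, using $[A,B]=-(H+F)$ and $[A,H]=2E$, yields
\begin{equation*}
[B,H]=2F,\qquad [AB,H]=HF+8g^2E,\qquad [AB,B]=-(H+F)(F+4g^2).
\end{equation*}

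It then remains to substitute the actual parameter values and identify the scalar coefficients. Here I would record the relations $\gamma_2=\alpha_1-1$ and $\alpha_2=\gamma_1-1$, together with $\alpha_1-\alpha_2=\epsilon+\frac32$, $\gamma_2-\gamma_1=\epsilon-\frac12$ and $\alpha_1-\gamma_1=\epsilon+\frac12$, all of which follow directly from the formulas of Proposition \ref{RedEigenProblem} (writing $p:=\lambda+g^2$). The first two terms then reproduce $(\epsilon+\frac32)(H+F)(F+4g^2)$ and $(\epsilon-\frac12)(8g^2E+HF)$ at once. The only mildly delicate point is the coefficient of $F$: using $\gamma_2=\alpha_1-1$ and $\alpha_2=\gamma_1-1$ one finds $\gamma_2\alpha_1-\gamma_1\alpha_2=(\alpha_1-\gamma_1)(\alpha_1+\gamma_1-1)$, and since $\alpha_1+\gamma_1-1=\frac12-p$ this equals $-(\epsilon+\frac12)(\lambda+g^2-\frac12)$, so the contribution of $[B,H]=2F$ is exactly $-2(\epsilon+\frac12)(\lambda+g^2-\frac12)F$.

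Substituting the three master commutators and these coefficients reproduces precisely the stated identity. The main (and essentially only) obstacle is bookkeeping: keeping the additive scalars straight through the bilinear expansion and not dropping a sign when assembling the $F$-coefficient. There is no conceptual difficulty, the structural collapse of the quadratic term into $(\alpha_2-\alpha_1)[AB,B]$ being what makes the whole computation short.
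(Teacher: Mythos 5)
Your computation is correct — I checked the parameter identities $\gamma_2=\alpha_1-1$, $\alpha_2=\gamma_1-1$, the three master commutators $[B,H]=2F$, $[AB,H]=HF+8g^2E$, $[AB,B]=-(H+F)(F+4g^2)$, and the assembly of the $F$-coefficient via $\gamma_2\alpha_1-\gamma_1\alpha_2=(\alpha_1-\gamma_1)(\alpha_1+\gamma_1-1)=-(\epsilon+\frac12)(\lambda+g^2-\frac12)$, and all of it reproduces the stated identity. The paper offers no written proof beyond calling it ``a simple computation,'' and your argument is exactly that direct computation in $\mathcal{U}(\mathfrak{sl}_2)$, organized cleanly around the common shape of $\mathcal{K}$ and $\tilde{\mathcal{K}}$.
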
 

From this lemma, we have the following discussion for $\epsilon=\frac12$. Suppose that 
\begin{equation*}
\begin{cases}
\varpi_{j,a}(\mathcal{K})\nu=\Lambda_a\nu,\\
\varpi_{j,a}(\tilde{\mathcal{K}})\tilde\nu=\tilde\Lambda_a\tilde\nu.
\end{cases}
\end{equation*}
Let  $x (>0)$ be a common root of $P_{2m}^{(2m,\frac12)}(x)=0$
and $\tilde P_{2m+1}^{(2m+1,\frac12)}(x)=0$. 
Suppose that $\tilde \nu= c\nu \in \mathbf{F}_{2m+1}$ for some $c \in\C$. Then $\nu$ is a joint eigenvector of  $\varpi_{1,-2m}(\mathcal{K})$ and $\varpi_{1,-2m}(\tilde{\mathcal{K}})$. By the lemma above, we have 
$$
[K, \tilde{K}]=2(H+F)(F+4g^2)-2(\lambda+g^2-\frac12)F.
$$
Namely, $[K, \tilde{K}]$ is a linear combination of the lowerling operators $\,HF, \, F, \,F^2$ and $H$. It follows that $\nu \in \C e_{1,-m}+  \C e_{1,-m+1} \subset \mathbf{F}_{2m+1}$ since $\varpi_{1,-2m}([K, \tilde{K}])\nu=0$. But since $\varpi_{1,-2m}(H)$ preserves a vector $e_{1,-m+1}$ we find that $\nu \in \C e_{1,-m}$. Further, since $\varpi_{1,-2m}(H)$ preserves $e_{1,-m}$, we conclude that $\nu=0$. This contradicts the assumption. Hence it follows that $\dim W_{2m-g^2+\frac12}(H_{\rm{Rabi}}^{\frac12})=2$ for $m>0$. In the same way, we show that $\dim W_{2m-g^2-\frac12}(H_{\rm{Rabi}}^{\frac12})=2$.
  
 \begin{prop} If $x>0$ is a positive root of $P^{(2m,\frac12)}_{2m}(x)=0$ (resp. $P^{(2m-1,\frac12)}_{2m-1}(x)=0$),  there exist linearly independent elements $\nu_N, \tilde\nu_N \in \mathbf{F}_N$ such that 
 \begin{align*}
 W_{2m-g^2+\frac12}(H_{\rm{Rabi}}^{\frac12}) & \cong \C\cdot \nu_{2m+1} \oplus \C\cdot \tilde\nu_{2m+1},\\
 W_{2m-g^2-\frac12}(H_{\rm{Rabi}}^{\frac12}) & \cong \C\cdot \nu_{2m} \oplus \C\cdot \tilde\nu_{2m}.
 \end{align*}
 \end{prop}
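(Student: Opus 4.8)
The plan is to establish the proposition by showing, for each of the two eigenvalues $\lambda=2m-g^2\pm\frac12$, that the corresponding eigenspace $W_\lambda(H_{\rm{Rabi}}^{\frac12})$ is two-dimensional, spanned by linearly independent vectors coming from the $\mathcal{K}$- and $\tilde{\mathcal{K}}$-eigenproblems. The essential input is that, by Corollary \ref{CommonRoots}, a positive root $x$ of $P^{(2m,\frac12)}_{2m}(x)=0$ is automatically a common root of $P^{(2m,\frac12)}_{2m}(x)=0$ and $\tilde{P}^{(2m+1,\frac12)}_{2m+1}(x)=0$; by Proposition \ref{Rank} each of these roots produces (up to scalar) a unique nonzero solution vector, namely $\nu_{2m+1}$ from the spherical $\varpi_{1,-2m}(\mathcal{K})$-problem and $\tilde\nu_{2m+1}$ from the spherical $\varpi_{1,-2m}(\tilde{\mathcal{K}})$-problem, both living in the \emph{same} space $\mathbf{F}_{2m+1}$. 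The crux is then to prove $\nu_{2m+1}$ and $\tilde\nu_{2m+1}$ are linearly independent, which is exactly the computation carried out in the displayed discussion preceding the proposition.

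The central argument I would reproduce is the commutator computation. First I would specialize the commutation relation of Lemma to $\epsilon=\frac12$, obtaining
\begin{equation*}
[K,\tilde{K}]=2(H+F)(F+4g^2)-2(\lambda+g^2-\tfrac12)F,
\end{equation*}
which is a combination of the weight-lowering operators $HF,\,F^2,\,F$ and the weight-preserving operator $H$. Then I would argue by contradiction: if $\tilde\nu_{2m+1}=c\,\nu_{2m+1}$ for some $c\in\C$, then $\nu_{2m+1}$ is a \emph{joint} eigenvector of $\varpi_{1,-2m}(\mathcal{K})$ and $\varpi_{1,-2m}(\tilde{\mathcal{K}})$, hence is annihilated by $\varpi_{1,-2m}([K,\tilde{K}])$. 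Using the explicit action of $H,E,F$ on the basis $e_{1,n}$ from Section 3, one sees that the lowering part of $[K,\tilde{K}]$ pushes $\nu_{2m+1}=\sum_{n=-m}^{m}a_n e_{1,n}$ toward the bottom weights, forcing $\nu_{2m+1}\in\C e_{1,-m}+\C e_{1,-m+1}$, and then the diagonal action of $\varpi_{1,-2m}(H)$ separates these two weights and forces $\nu_{2m+1}=0$, a contradiction. This yields $\dim W_{2m-g^2+\frac12}(H_{\rm{Rabi}}^{\frac12})=2$. The case $\lambda=2m-g^2-\frac12$ is entirely parallel, replacing the spherical $\mathbf{F}_{2m+1}$-problems by the non-spherical $\varpi_{2,1-2m}$ problems on $\mathbf{F}_{2m}$ and using the analogous common-root structure (here one would need the $\epsilon=\frac12$ analogue of Corollary \ref{CommonRoots} pairing $P^{(2m-1,\frac12)}_{2m-1}$ with the relevant $\tilde P$, again via the Theorem).

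I expect the main obstacle to be the careful weight-by-weight tracking in the kernel argument: one must verify that $\varpi_{1,-2m}([K,\tilde{K}])$ genuinely annihilates the joint eigenvector and that its lowering action truly confines $\nu$ to the lowest one or two weight spaces without any surviving contribution at higher weights. Since $[K,\tilde K]$ mixes a second-order lowering term $F^2$ (hidden in $(H+F)F$) with first-order terms, I would compute $\varpi_{1,-2m}([K,\tilde K])e_{1,n}$ explicitly on the basis and check that the coefficient system forces all $a_n=0$ except possibly at the bottom, then close the argument with the diagonality of $\varpi_{1,-2m}(H)$. A subtler point worth flagging is the implicit assumption that the joint eigenvector lies in $\mathbf{F}_{2m+1}$ rather than in a larger invariant subspace; this is legitimate because both $\nu_{2m+1}$ and $\tilde\nu_{2m+1}$ are constructed inside $\mathbf{F}_{2m+1}$ by design, and Proposition \ref{Rank} guarantees each is the unique (up to scalar) solution there, so the only remaining question is precisely their mutual linear independence, which the commutator argument settles.
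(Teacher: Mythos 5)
Your proposal is correct and follows essentially the same route as the paper: the paper's proof is precisely the displayed discussion preceding the proposition, namely the common-root statement from Corollary \ref{CommonRoots}, the uniqueness up to scalar from Proposition \ref{Rank}, and the contradiction argument in which a joint eigenvector of $\varpi_{1,-2m}(\mathcal{K})$ and $\varpi_{1,-2m}(\tilde{\mathcal{K}})$ is annihilated by $\varpi_{1,-2m}([K,\tilde K])=2(H+F)(F+4g^2)-2(\lambda+g^2-\tfrac12)F$ and is thereby forced into $\C e_{1,-m}+\C e_{1,-m+1}$ and then to zero. Your flagged concerns (explicit weight-by-weight verification and the parallel non-spherical case on $\mathbf{F}_{2m}$) are exactly the points the paper treats tersely, so the proposal matches the intended argument.
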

 \;
 
Summarizing the discussion above, we have the following (conjectural)  reciprocity or 
$\Z_2$-sym\-me\-try behind the collection of asymmetric quantum Rabi models $\{H_{\rm{Rabi}}^\epsilon\}_{\epsilon}$. This can be interpreted as follows: the introduction of the parameter $\epsilon$ provides a resolution of singularities (i.e. removing the level crossings) by a similar idea of Riemann surfaces (the reader may find a physical idea described at p.100401-3 in \cite{B2011PRL}).

\begin{prop}
Assuming Conjecture \ref{EpsilonConjecture}, if $x>0$ is a root of polynomial $P^{(N,\epsilon)}_N(x)=0$, we have the reciprocity ($\Z_2$-symmetry) existing between two energy curves $\lambda=N\omega- g^2/\omega+\epsilon$ and $\lambda=N'\omega- g^2/\omega-\epsilon$ for $\epsilon \in \frac12 \Z$ described in 
Table \ref{Symmetry1} and Table \ref{Symmetry2}: 
{
\begin{table}%
\caption{$H_{\rm{Rabi}}^\ell$ for $\epsilon=\ell\geq 0, \; \ell\in \Z,\, m\in \Z_{>0}$}
\label{Symmetry1}
\begin{center}
\begin{tabular}{@{\vrule width1.0pt~}l|cc|%
cc|cc@{~\vrule width 1.0pt}}
\noalign{\hrule height2.0pt}
$\lambda+g^2$ &\multicolumn{2}{p{4em}|}
{$2m+\ell$} & 
\multicolumn{2}{p{4em}@{~\vrule width 1.0pt}}%
{$2m+\ell+1$}\\
\hline
$W_\lambda=\C\nu\oplus \C\tilde\nu$ & $\mathbf{F}_{2m+1}$ & $\oplus \;\;\qquad \mathbf{F}_{2(m+\ell)}$  
& $\mathbf{F}_{2(m+1)}$ & $\oplus \qquad \mathbf{F}_{2(m+\ell)+1}$\\
Common roots & ($P^{(2m, \ell)}_{2m}$, & $\tilde P^{(2(m+\ell), \ell)}_{2(m+\ell)}$) 
& ($P^{(2m+1, \ell)}_{2m+1}$, & $\tilde P^{(2(m+\ell)+1, \ell)}_{2(m+\ell)+1}$) \\
Parity & spherical & $\oplus \;\;$ non-spherical & non-spherical & $\oplus \quad$ spherical\\
\noalign{\hrule height1.0pt}
\end{tabular}
\end{center}
\end{table}

\begin{table}%
\caption{$H_{\rm{Rabi}}^{\ell\pm\frac12}$ for $\epsilon=\ell\pm\frac12>0, \; \ell\in \Z,\, m\in \Z_{>0}$}
\label{Symmetry2}
\begin{center}
\begin{tabular}{@{\vrule width1.0pt~}l|cc|%
cc|cc@{~\vrule width 1.0pt}}
\noalign{\hrule height2.0pt}
$\lambda+g^2$ &\multicolumn{2}{p{4em}|}
{$2m+\ell+\frac12$} & 
\multicolumn{2}{p{4em}@{~\vrule width 1.0pt}}%
{$2m+\ell-\frac12$}\\
\hline
$W_\lambda =\C\nu\oplus \C\tilde\nu$ & $\mathbf{F}_{2m+1}$ & $\oplus \qquad \mathbf{F}_{2(m+\ell)+1}$  
& $\mathbf{F}_{2m}$ & $ \oplus \;\qquad \mathbf{F}_{2(m+\ell)}$\\
Common roots & ($P^{(2m, \ell+\frac12)}_{2m}$, & $\tilde P^{(2(m+\ell)+1, \ell+\frac12)}_{2(m+\ell)+1}$) 
& ($P^{(2m-1, \ell+\frac12)}_{2m-1}$, & $\tilde P^{(2(m+\ell), \ell+\frac12)}_{2(m+\ell)}$) \\
Parity & spherical & $\oplus \qquad$ spherical & non-spherical & $\oplus \quad$ non-spherical\\
\noalign{\hrule height1.0pt}
\end{tabular}
\end{center}
\end{table}
}
\qed
\end{prop}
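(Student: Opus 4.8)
The plan is to promote the two special cases already settled in the text, $\epsilon=0$ and $\epsilon=\frac12$, to arbitrary $\epsilon\in\frac12\Z_{\ge0}$ by organizing the four eigenproblems of \S5.1--\S5.4 into the reciprocal pairs recorded in the tables. I would begin by isolating the two families of levels furnished by Proposition \ref{RedEigenProblem}: the $\mathcal{K}$-picture (\S5.1--\S5.2) produces an eigenvector $\nu$ whenever $\lambda+g^2\in\Z+\epsilon$, while the $\tilde{\mathcal{K}}$-picture (\S5.3--\S5.4) produces $\tilde\nu$ whenever $\lambda+g^2\in\Z-\epsilon$. A single eigenvalue $\lambda$ can lie in both families simultaneously exactly when $(\Z+\epsilon)\cap(\Z-\epsilon)$ is nonempty, i.e. when $2\epsilon\in\Z$; this is precisely the hypothesis $\epsilon\in\frac12\Z$, and it explains why crossings should be expected only there. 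Moreover the involution $\epsilon\mapsto-\epsilon$ carries $\Z+\epsilon$ to $\Z-\epsilon$, hence swaps the two families; this is the reciprocity (the $\Z_2$-symmetry) asserted in the statement.

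Next I would fill in the cells of Tables \ref{Symmetry1} and \ref{Symmetry2} by equating a $\mathcal{K}$-level with a $\tilde{\mathcal{K}}$-level and solving for the index shift. Writing the crossing condition $N+\epsilon=N'-\epsilon$ for the two curves $\lambda=N-g^2+\epsilon$ and $\lambda=N'-g^2-\epsilon$ (with $\omega=1$) forces $N'=N+2\epsilon$, in agreement with the shift in Conjecture \ref{EpsilonConjecture}. Feeding this back through the parameter choices for $a,\alpha,\gamma$ in Proposition \ref{RedEigenProblem} and the spherical/non-spherical dictionary of \S3 then determines, in each case, the pair of finite-dimensional irreducibles $\mathbf{F}_{N_1},\mathbf{F}_{N_2}$ and the pair of constraint polynomials appearing in the corresponding cell. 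For integer $\epsilon=\ell$ this yields the spherical$\,\oplus\,$non-spherical pattern of Table \ref{Symmetry1}, and for half-integer $\epsilon$ the spherical$\,\oplus\,$spherical and non-spherical$\,\oplus\,$non-spherical patterns of Table \ref{Symmetry2}. This step is routine bookkeeping.

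The substantive step is to turn a single positive root into a two-dimensional eigenspace. Given a root $x>0$ of $P^{(N,\epsilon)}_N$, \S5.1--\S5.2 together with the rank computation of Proposition \ref{Rank} supply a one-dimensional space $\C\nu\subset\mathbf{F}_{N_1}$ of $\mathcal{K}$-eigenvectors. Invoking Conjecture \ref{EpsilonConjecture} in the form $\tilde P^{(N+2\epsilon,\epsilon)}_{N+2\epsilon}=A\,P^{(N,\epsilon)}_N$ with $A(x)>0$ for $x>0$, the same $x$ is a positive root of the partner polynomial, so \S5.3--\S5.4 supply $\C\tilde\nu\subset\mathbf{F}_{N_2}$; by Proposition \ref{RedEigenProblem} both give eigenstates of $H^\epsilon_{\rm{Rabi}}$ at the common $\lambda$. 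To conclude $\dim W_\lambda=2$ I would show $\nu$ and $\tilde\nu$ are independent, splitting on whether $N_1=N_2$. Whenever $N_1\ne N_2$ the Casimir acts by distinct scalars $\Omega|_{\mathbf{F}_{N_i}}=N_i^2-1$, so $\nu,\tilde\nu$ lie in different $\Omega$-eigenspaces and are automatically independent; this covers every integer $\epsilon$ (where the two irreducibles even carry opposite parity) and every half-integer $\epsilon>\frac12$ (where they share parity but differ in dimension by $2\epsilon-1\ge1$). The single coincident case is $\epsilon=\frac12$, where both vectors sit in the same $\mathbf{F}_{N_1}$; here I would reuse the commutator identity for $[\mathcal{K},\tilde{\mathcal{K}}]$, whose term $(\epsilon-\frac12)(8g^2E+HF)$ vanishes at $\epsilon=\frac12$, leaving only diagonal and lowering operators, so that a hypothetical common eigenvector is annihilated by $[\mathcal{K},\tilde{\mathcal{K}}]$ and is forced down to a lowest-weight vector and then to $0$, a contradiction --- exactly the computation already carried out in the text.

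The main obstacle is not any individual step but the fact that the whole statement is conditional on Conjecture \ref{EpsilonConjecture}: its positivity and divisibility are what actually furnish the second eigenvector, hence the crossing, and establishing it (left open here) carries the real content. Granting it, the proposition assembles from the bookkeeping of the second paragraph, the Casimir separation for $\epsilon\ne\frac12$, and the already-completed commutator argument at $\epsilon=\frac12$. The one point requiring genuine care is Table \ref{Symmetry2}, where $\mathbf{F}_{N_1}$ and $\mathbf{F}_{N_2}$ have the same parity and independence must rest on the dimension gap $N_2-N_1=2\epsilon-1$ rather than on parity; the Casimir computation confirms these remain inequivalent for all $\epsilon>\frac12$, closing the only potential gap.
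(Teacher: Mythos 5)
Your proposal is correct and follows essentially the same route as the paper: the proposition is presented there as a summary of the \S 5--\S 6 discussion, namely Conjecture \ref{EpsilonConjecture} supplying the common positive root of the paired constraint polynomials, Proposition \ref{RedEigenProblem} converting each root into an eigenvector in the appropriate $\mathbf{F}_N$, the Casimir eigenvalue $N^2-1$ separating the two components whenever the modules are inequivalent, and the commutator identity for $[\mathcal{K},\tilde{\mathcal{K}}]$ handling the one coincident case $\epsilon=\tfrac12$. Your index bookkeeping ($N'=N+2\epsilon$, the spherical/non-spherical assignments, and the dimension gap $2\epsilon-1$ for half-integer $\epsilon>\tfrac12$) reproduces the content of Tables \ref{Symmetry1} and \ref{Symmetry2} correctly.
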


 \begin{rem}
 Since $\dim W_\lambda\leq 2$, all the level crossings can appear only in the cases described in Table \ref{Symmetry1} and Table \ref{Symmetry2}. This is consistent with the fact, e.g. that $\mathbf{V}_{1,2-2m}$ can not have the direct sum decomposition such as $\mathbf{V}_{1,2-2m} \cong \mathbf{D}^{+}_{2m}\oplus \mathbf{D}^{-}_{2m} \oplus \mathbf{F}_{2m-1} $  by (\ref{non-split_SES}) (similarly, this holds for $\mathbf{V}_{2,1-2m}, \mathbf{V}_{1, 2m}$ and $\mathbf{V}_{2, 2m+1}\;(m>0)$).
 \end{rem}
 
\,

\begin{prob}  
The finite dimensional irreducible representation $\mathbf{F}_N$ can be lifted to the representation of the compact Lie group $U(2)$, the unitary group of degree $2$. (Note that the complexified Lie algebras $\frak{u}(2)_{\C}$ and $\frak{sl}_2(\R)_{\C}$ are identified with $\frak{sl}_2(\C)$ so that the finite dimensional irreducible representations are all the same.)Suppose $\epsilon \in \frac12\Z$. 
The question is whether there exists any discrete (= finite) subgroup $\Gamma=\Gamma_{\epsilon,\Delta^2}$  of  $U(2)$ independent from $N$ such that the eigenspace $W_\lambda$ of level crossings (or degenerate) can be realized as subspaces in $L^2(U(2)/\Gamma)$?  In other words, is there any finite subgroup $\Gamma=\Gamma_{\epsilon,\Delta^2}$ which produces 
the following equivalence? 
$$
\varpi_{N-g^2+\epsilon}\big|_{\Gamma}=I  \quad \Leftrightarrow \quad P^{(N, \epsilon)}_N((2g)^2)=0,  \quad (\forall N\in \Z_{>0}).
$$
Here we make an abuse of notation for the representation $\varpi_{N-g^2+\epsilon}$  of Lie algebra $\frak{u}(2)$ also as a representation of Lie group $U(2)$. 
\end{prob}

\section{Non-degenerate exceptional eigenvalues}

Recall the spectral situation of the quantum Rabi model. Denote the spectrum (set of all eigenvalues) of the quantum Rabi model Hamiltonian $H_{\rm{Rabi}}$ by $\Sigma_{\rm{Rabi}}$. Then we have (\cite{MPS2013, WY2014JPA, B2013MfI})
\begin{align*}
\Sigma_{\rm{Rabi}} & = \sigma_{\rm{reg.}} \coprod \sigma_{\rm{deg.excep.}}
\coprod \sigma_{\rm{non-deg. excep.}},
\end{align*}
where  $\sigma_{\rm{reg.}}$ denotes the set of all elements of the regular spectrum, $\sigma_{\rm{deg.excep.}}$ the degenerate exceptional  spectrum and $\sigma_{\rm{non-deg.excep.}}$ the non-degenerate exceptional spectrum. Notice that the non-degenerate spectrum is consisting of $\sigma_{\rm{reg.}} $ and $\sigma_{\rm{non-deg.excep.}}$.
In this section, we consider the non-degenerate spectrum of the (asymmetric) quantum Rabi model. 

If $\phi$ is the eigenfunction of the Hamiltonian $H_{\rm{Rabi}}^\epsilon$, then the corresponding vector 
$\nu_{\phi}$ in Proposition \ref{RedEigenProblem} is an eigenvector of the operator $\varpi_{a}(\mathcal{K})$ (or $\varpi_{a}(\tilde{\mathcal{K}})$). Since $\Omega\in \mathcal{Z}(\mathcal{U}(\frak{sl}_2))$,
we find that $\nu_\phi$ is also an eigenvector of $\varpi_a(\Omega)$ provided the corresponding eigenstate of  $\varpi_{a}(\mathcal{K})$ (or $\varpi_{a}(\tilde{\mathcal{K}})$) is known to be non-degenerate (a priori). 
From this observation, we note that we have either $\varpi_{j,n}(\Omega)|_{\mathbf{D}^\pm_n}= n(n-2){I}$ or $\varpi_a(\Omega) =a(a-2){I} \,(a\not\in\Z)$, ${I}$ being the identity operator of appropriated suggesting space. 
Relating this observation, we have the following remark. 
\begin{lem}\label{IrreducibleRepre}
Let $\lambda$ be a non-degenerate eigenvalue of the Hamiltonian $H_{\rm{Rabi}}^\epsilon$. Then the corresponding eigenvector $\nu$ in $\mathbf{V}_{j, a}$ for $\varpi_{a}(\mathcal{K})$ with eigenvalue $\Lambda_a$ (or $\varpi_{a}(\tilde{\mathcal{K}})$ with eigenvalue $\tilde{\Lambda}_a$) via Proposition \ref{RedEigenProblem} is captured in the space of an irreducible representation, that is, either of  $\mathbf{D}^{\pm}_{N}$ and $\mathbf{V}_{j,a}\,(a\not\in \Z)$. 
\end{lem}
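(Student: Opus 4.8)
The plan is to extract the conclusion from two facts already in play: the centrality of the Casimir element $\Omega$, and the hypothesis that $\lambda$ is non-degenerate. First I would note that non-degeneracy of $\lambda$ means, through Proposition \ref{RedEigenProblem}, that the $\varpi_a(\mathcal{K})$-eigenspace (resp. $\varpi_a(\tilde{\mathcal{K}})$-eigenspace) attached to $\lambda$ is one-dimensional, spanned by $\nu$. Since $\Omega$ generates the center $\mathcal{Z}\mathcal{U}(\mathfrak{sl}_2)$ it commutes with $\mathcal{K}$ and with $\tilde{\mathcal{K}}$, so $\varpi_a(\Omega)$ stabilizes this one-dimensional space; hence $\nu$ is automatically a joint eigenvector of $\varpi_a(\mathcal{K})$ (or $\varpi_a(\tilde{\mathcal{K}})$) and of $\varpi_a(\Omega)$, exactly the mechanism sketched in the paragraph preceding the statement. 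This reduces the problem to identifying the $\mathfrak{sl}_2$-module in which $\nu$ sits, and the argument then branches on whether $a\in\Z$.

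If $a\notin\Z$, then $a\notin 2\Z$ and $a\notin 2\Z-1$, so by the remark in Section 3 the module $\mathbf{V}_{j,a}$ is irreducible and $\nu$ lies in the irreducible representation $\mathbf{V}_{j,a}$ with nothing further to prove. If $a\in\Z$, I would invoke Lemma \ref{Reducible}: the nonzero proper invariant subspaces of $\mathbf{V}_{j,a}$ are, depending on $a$, the irreducible discrete series $\mathbf{D}^{\pm}_N$, the reducible sum $\mathbf{D}^{+}_N\oplus\mathbf{D}^{-}_N$, and the finite-dimensional irreducible $\mathbf{F}_N$. The decisive observation is that $\mathcal{K}$ (resp. $\tilde{\mathcal{K}}$), being in $\mathcal{U}(\mathfrak{sl}_2)$, preserves each invariant subspace, and in particular acts separately on $\mathbf{D}^{+}_N$ and on $\mathbf{D}^{-}_N$. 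Writing a putative eigenvector as $\nu=\nu^{+}+\nu^{-}$ with $\nu^{\pm}$ in the respective summand, each piece again satisfies $\varpi_a(\mathcal{K})\nu^{\pm}=\Lambda_a\nu^{\pm}$; if both were nonzero we would produce two linearly independent eigenvectors sharing the eigenvalue $\Lambda_a$, contradicting non-degeneracy. Thus $\nu$ cannot spread across $\mathbf{D}^{+}_N\oplus\mathbf{D}^{-}_N$ and must belong to a single irreducible piece, which for $\lambda$ in the non-degenerate spectrum I claim to be a discrete series $\mathbf{D}^{\pm}_N$.

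I expect the main obstacle to be the last identification, namely excluding the finite-dimensional module $\mathbf{F}_N$, together with the realization subtlety flagged in Remark \ref{intertwiner}. The space $\mathbf{F}_N$ is precisely where the Juddian (level-crossing) eigenvectors of Section 5 live, so excluding it amounts to asserting that a genuinely polynomial eigenvector forces $\lambda$ to be degenerate; this is established here only for $\epsilon=0$ and $\epsilon=\tfrac12$ (via Corollary \ref{CommonRoots} and the commutator argument), while for general $\epsilon\in\frac12\Z$ it is tied to Conjecture \ref{EpsilonConjecture}. Compounding this, when $a\in 2\Z$ there is no intertwiner $\varpi_a\cong\varpi_{2-a}$ (Remark \ref{intertwiner}), and across the non-split sequences \eqref{non-split_SES} the series $\mathbf{D}^{\pm}_N$ occurs only as a subquotient rather than an honest submodule; so for even $a$ the phrase ``$\nu$ is captured in $\mathbf{D}^{\pm}_N$'' must be read at the level of the quotient $\mathbf{V}_{j,a}\twoheadrightarrow\mathbf{D}^{+}_N\oplus\mathbf{D}^{-}_N$, and the non-degeneracy argument above is applied to the image of $\nu$ there. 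My plan is therefore to present the case $a\notin\Z$ and the no-spreading step rigorously, to use the intertwiner to place $\nu$ inside $\mathbf{D}^{\pm}_N$ whenever $a\notin 2\Z$, and to flag the exclusion of $\mathbf{F}_N$ in the even case as the one delicate point resting on the degeneracy of the Juddian solutions.
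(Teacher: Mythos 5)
Your handling of $a\notin\Z$ and of the splitting across $\mathbf{D}^{+}_{N}\oplus\mathbf{D}^{-}_{N}$ agrees with the paper, but there is a genuine gap precisely at the step you flag as ``the one delicate point'': the exclusion of the finite-dimensional constituent. You propose to rule it out by arguing that a polynomial eigenvector would be Juddian and would therefore force $\lambda$ to be degenerate; but that implication is exactly what is available unconditionally only for $\epsilon=0,\tfrac12$ (Corollary \ref{CommonRoots}) and otherwise rests on Conjecture \ref{EpsilonConjecture}, so your route would make the lemma conditional. The paper's proof is different and unconditional. It works in the realization $\mathbf{V}_{1,2m}$ with $a=2m>0$, where by (\ref{non-split_SES}) the $\mathbf{D}^{\pm}_{2m}$ are honest submodules and $\mathbf{F}_{2m-1}$ is only the quotient (note this is the opposite of the realization you describe, in which $\mathbf{D}^{\pm}_{N}$ would be mere subquotients --- the paper deliberately chooses the side of the non-split extension on which the three-term decomposition $\nu=\nu_m^{+}+\nu_0+\nu_m^{-}$ makes sense with invariant outer pieces). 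It then kills $\nu_0=\sum_{n=1-m}^{m-1}t_n e_{1,n}$ by a direct weight computation: since $\mathcal{K}$ is a combination of $E,F,H,EF$ with nonvanishing $E$- and $F$-coefficients (in particular $\beta=4g^2\neq 0$), the extreme components of the eigenvalue equation force $t_{1-m}=t_{m-1}=0$, and the tridiagonal recurrence of \S5 then propagates inward to give $\nu_0=0$. No appeal to constraint polynomials or to degeneracy of Juddian solutions is made.

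A smaller logical point: you write the no-spreading step starting from ``$\nu=\nu^{+}+\nu^{-}$'', which already presupposes that the $\mathbf{F}$-component has been eliminated. The order of the argument must be reversed: first dispose of the finite-dimensional piece, and only then invoke non-degeneracy to conclude that one of $\nu^{\pm}$ vanishes. Your opening appeal to the Casimir element is consistent with the paragraph preceding the lemma, but it plays no role in the paper's actual proof and does not by itself separate $\mathbf{D}^{+}_{N}$ from $\mathbf{D}^{-}_{N}$, since these have the same infinitesimal character.
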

\begin{proof}
Obviously, it is enough to consider the case where $a \in \Z$.  Let us consider the representation $(\varpi_{1, 2m},\,\mathbf{V}_{1,2m})$. The case  $(\varpi_{2, 2m-1},\,\mathbf{V}_{2,2m-1})\,(m>0)$ is similar. We may assume $a=m>0$ by the discussion in \S6. Suppose $\nu\in V_{1,2m}$ satisfies $\varpi_{1,2m}(\mathcal{K})\nu=\Lambda_{2m}\nu$. Write $\nu=\nu_m^+ + \nu_0 + \nu_m^-$, where $\nu^\pm_m\in \mathbf{D}_{2m}^{\pm}$ and $\nu_0\in \mathbf{F}_{2m-1}$. Since $\varpi_{1,2m}(\mathcal{K})\nu_m^\pm \in \mathbf{D}_{2m}^\pm$ we have 
$\varpi_{1,2m}(\mathcal{K})\nu_m^\pm= \Lambda_{2m}\nu_m^\pm$. It follows that $\varpi_{1,2m}(\mathcal{K})\nu_0=\Lambda_{2m}\nu_0$. Notice that the space $\mathbf{F}_{2m-1}$ is not invariant under the representation $\varpi_{1,2m}$ of $\frak{sl}_2$ (see (\ref{non-split_SES})). Therefore, to prove the assertion, it is sufficient to show that the vector $\nu_0=\sum_{n=1-m}^{m-1}t_n e_{1,n}\,\in \mathbf{F}_{2m-1}$ cannot be the eigenvector of this eigenvalue problem.  
Note that $\mathcal{K}$ is a linear combination  of $E,F, H, EF$ with non-zero coefficients and constant. By the existence of $E$ and $F$, it is immediate to see that $t_{1-m}=t_{m-1}=0$. Then, similarly to the computation we have done in \S5, we have $t_{2-m}=t_{m-2}=0$ and so on, and conclude that $\nu_0=0$.
Hence the eigenvector $\nu=\nu_m^+ + \nu_m^- \in \mathbf{D}^{+}_{2m}\oplus \mathbf{D}^{-}_{2m}$. Since  $\mathbf{D}^{+}_{2m}\oplus \mathbf{D}^{-}_{2m}$ is the direct sum, obviously either of 
$\nu_m^\pm$ is zero, otherwise the eigenspace becomes two dimensional. This proves the lemma.
\end{proof}

We will restrict ourselves to the representation theoretical discussion to the case $\epsilon=0$, i.e. the quantum Rabi model. We leave the general case to future investigation. 
 
We discuss the non-degenerate exceptional spectrum of the quantum Rabi model. The level crossing is captured by the finite dimensional irreducible modules of $\mathfrak{sl}_2(\mathbb{R})$, while there is another possibility of the existence of non-degenerate eigenvalues of the form $\lambda=N-g^2\, (N\in \mathbb{Z})$ associated with the discrete series representations $\mathbf{D}^\pm_N$ of $\mathfrak{sl}_2(\mathbb{R})$ in the case of pairs $(g, \Delta)$ satisfying $P^{(N, \epsilon)}_{N}(4g^2, \Delta^2)\not=0$.
The observation given here agrees with the investigation and numerical observation
given in the recent paper \cite{MPS2013} (see also \cite{B2013JPB}), especially for a large $\Delta$ in view of the results in the preceding subsection.

Let us recall the discussion  in \cite{B2013MfI} for the non-degenerate exceptional eigenvalues. Then we may find that the non-degenerate exceptional eigenstate is obtained in the discrete series representation
$\mathbf{D}^{+}_N$. Recall the notation and convention in Section \ref{CHP}.　We put $f_\pm(z)=e^{-gz}\phi_{1,\pm}(x)$ for $x=(z+g)/(2g)$. (Note that  $\phi_1=\phi_{1,+}$) Put $a=-(\lambda+g^2)$. The  equations (\ref{SDE1}) for $f_\pm$ turn to be the following:
\begin{equation}
\begin{cases}\label{SDE2}
x\frac{d}{dx} \phi_{1,+}(x) = -a \phi_{1,+}(x) -\Delta \phi_{1,-}(x),\\
(x-1)\frac{d}{dx} \phi_{1,-}(x)= (-a-4g^2+4g^2 x)  \phi_{1,-}(x) -  \Delta \phi_{1,+}(x) .
\end{cases}
\end{equation}
Let $a=-N\in \Z_{>0}$. The exponents of the function $\phi_{1,-}(x)$ (resp. $\phi_{1,+}(x)$) at $x=0$ is $\{0, N+1\}$ (resp. $\{0, N\}$) (Lemma \ref{Exponents}). Since the difference of the two exponents at $x=0$ (and $x=1$) is a positive integer $N$, the local analytic Frobenius solutions around $x=0$ will develop a logarithmic branch-cut at $x=1$ in general. However, since the largest exponent of $\phi_{1,-}(x)$ at $x=0$ is $N+1$, there exists always a solution for $\phi_{1,-}(x)$ analytic at $x=0$ of the form
\begin{equation}
\phi_{1,-}(x) = \sum_{n=N+1}^\infty K_n(N; g, \Delta) x^n.
\end{equation}
Put $K_n=K_n(N; g, \Delta)$. Keeping $a=-N$ and integrating the first equation of (\ref{SDE2}), we observe that
\begin{equation}
\phi_{1,+}(x) = c x^N - \Delta \sum_{n=N+1}^\infty \frac{K_n}{n-N} x^n
\end{equation}
for some constant $c$. Plugging these expressions for $\phi_{1,\pm}(x)$ into (\ref{SDE2}), we observe 
the recurrence equation 
\begin{equation}\label{RecRel_K}
(n+1)K_{n+1}=\big(4g^2+n-N+ \frac{\Delta^2}{N-n}\big)K_n -4g^2 K_{n-1} \quad (n>N).
\end{equation}
with the initial conditions
$$
(N+1)K_{N+1} =c\Delta, \quad K_N=0.
$$
Hence, defining $K_{N+1}=1$, we have 
\begin{equation}
\phi_{1}(x) = \phi_{1,+}(x) = \frac{N+1}\Delta x^N - \Delta \sum_{n=N+1}^\infty \frac{K_n}{n-N} x^n. 
\end{equation}
In other words, by Proposition \ref{RedEigenProblem}, when $N=2m(=-a)$ we have 
$$
\nu_{\phi_1}:= x^{\frac12(a-\frac12)} \phi_{1}(x)= x^{-m-\frac14}\phi_{1}(x)
= \frac{2m+1}\Delta e_{1,m} - \Delta \sum_{n=2m+1}^\infty \frac{K_n}{n-2m} e_{1,n-m} \in \mathbf{D}^+_{2m}.
$$
If $N=2m-1(=-a)$, similarly we have
$$
\tilde\nu_{\phi_1}:= x^{\frac12(a-\frac12)} \phi_{1}(x)= x^{-m+\frac14}\phi_{1}(x)
= \frac{2m}\Delta e_{2,m-1} - \Delta \sum_{n=2m}^\infty \frac{K_n}{n-2m+1} e_{1,n-m} \in \mathbf{D}^-_{2m-1}.
$$

We have the following
\begin{prop} Let $N\in \Z_{>0}$. Let  $G_+^{(N)}(g,\Delta)$ be the $G$-function \cite{B2013MfI} given by 
\begin{equation}
G_+^{(N)}(g,\Delta)=-\frac{2(N+1)}\Delta +\sum_{n=N+1}^\infty K_n(N; g\Delta)\Big(1+\frac{\Delta}{n-N}\Big)g^{n-N-1}.
\end{equation}
If $g$ is a zero of $G_+^{(N)}(g,\Delta)$ (whence $P^{(N, 0)}_{N}(4g^2, \Delta^2)\not=0$), we have a non-degenerate exceptional eigenvalue $\lambda= N-g^2$ such that 
$\nu_{\phi_1}$ is contained in (the space of ) $\mathbf{D}^+_N$. The same statement holds for 
$G_-^{(N)}(g,\Delta)$.  
\qed
\end{prop}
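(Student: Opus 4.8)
The plan is to read $G_+^{(N)}(g,\Delta)=0$ as exactly the analyticity (connection) condition at the second regular singular point, which promotes the locally constructed solution to a genuine Bargmann eigenstate, and then to extract the representation-theoretic location and the multiplicity from the explicit shape of $\nu_{\phi_1}$ together with Lemma~\ref{IrreducibleRepre}. First I would verify that, for $a=-N$, the coefficients $K_n$ produced by the recurrence~(\ref{RecRel_K}) with $K_{N+1}=1$ grow slowly enough that $\phi_{1,-}(x)=\sum_{n\ge N+1}K_nx^n$ and the integrated $\phi_{1,+}(x)$ are holomorphic on $|x|<1$, the radius being fixed by the nearest finite singularity $x=1$; this is a routine estimate on the three-term recurrence. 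Since \S\ref{CHP} shows that condition $\mathbf{BI}$ is automatic (the rank-one irregular point at $\infty$ forces the $e^{cz}z^\rho$ asymptotics), the only obstruction to $f_\pm(z)=e^{-gz}\phi_{1,\pm}(x)$ lying in $\mathcal B$ is condition $\mathbf{BII}$ at $z=g$, i.e.\ holomorphy of $\phi_{1,\pm}$ at $x=1$.

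By Lemma~\ref{Exponents} the exponents at $x=1$ differ by the positive integer $N+1$, so this is a resonant connection problem: the continuation of the $x=0$ solution to $x=1$ is a combination of the exponent-$(N+1)$ Frobenius solution, which is holomorphic, and the exponent-$0$ solution, which generically carries a logarithmic branch. The function $G_+^{(N)}(g,\Delta)$ is the exceptional specialization at $\lambda+g^2=N$ of the Braak $G$-function, and I would prove that it equals a nonzero scalar multiple of the coefficient of that logarithmic branch, equivalently the Wronskian matching functional at the intermediate point $z=0$ between the solution analytic at $z=-g$ and the one analytic at $z=+g$. Concretely this is obtained by substituting $\phi_{1,\pm}$ into the system~(\ref{SDE2}) and collecting the contribution at the second singular point: the factor $K_n(1+\tfrac{\Delta}{n-N})$ is produced by the two components $\phi_{1,-}$ and $\phi_{1,+}$, the powers $g^{n-N-1}$ by the evaluation at $z=0$, and the constant $-2(N+1)/\Delta$ by the boundary term. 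Once $G_+^{(N)}(g,\Delta)=0$ the logarithmic branch is absent, $\phi_{1,\pm}$ is holomorphic at $x=1$ and hence entire, so $f_\pm\in\mathcal B$ and $\lambda=N-g^2$ is a genuine eigenvalue.

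It remains to locate the eigenstate and count its multiplicity. The expansion already computed shows that $\nu_{\phi_1}=x^{\frac12(a-\frac12)}\phi_1(x)$ is supported on the lowest-weight vector and its upward $\varpi_a(E)$-translates, so it lies in the discrete-series module $\mathbf D^+_N$ (the spherical realization for $N$ even, the non-spherical one for $N$ odd); the validity of $\mathbf{BII}$ makes it an honest convergent vector there rather than a formal one. By Lemma~\ref{IrreducibleRepre} a non-degenerate eigenvector is confined to a single irreducible module, and inside the lowest-weight module $\mathbf D^+_N$ the recurrence forces the $\varpi_a(\mathcal K)$-eigenequation with eigenvalue $\Lambda_a$ to have a one-dimensional solution space. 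The hypothesis $P^{(N,0)}_N(4g^2,\Delta^2)\neq0$ excludes the degenerate Juddian solutions living in the finite-dimensional module $\mathbf F_N$, which occur precisely when the series $\sum K_nx^n$ terminates, i.e.\ when $P^{(N,0)}_N=0$; hence no degenerate partner is present, $\dim W_\lambda=1$, and $\lambda=N-g^2$ belongs to the non-degenerate exceptional spectrum. The statement for $G_-^{(N)}$ and $\mathbf D^-_N$ follows verbatim after the parity reflection $z\mapsto -z$ (equivalently, using $\tilde{\mathcal K}$ and the non-spherical realization).

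The main obstacle is the identification used in the second paragraph: rigorously matching the logarithmic connection coefficient of the confluent Heun equation to the closed-form series $G_+^{(N)}$. Because of the resonance $N+1\in\Z_{>0}$, the delicate points are to justify the termwise continuation to $x=1$, to show that the logarithm's coefficient equals a nonvanishing multiple of $G_+^{(N)}$ (so that $G_+^{(N)}=0$ is equivalent to, not merely sufficient for, analyticity), and to confirm the parenthetical assertion that no zero of $G_+^{(N)}$ coincides with a root of $P^{(N,0)}_N$, i.e.\ that the non-degenerate and degenerate exceptional loci are disjoint.
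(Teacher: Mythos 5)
Your proposal is correct and follows essentially the same route the paper takes (the paper leaves the proof implicit in the preceding construction of the Frobenius solution with exponent $N+1$, the explicit expansion of $\nu_{\phi_1}$ in $\mathbf{D}^{+}_{N}$, and the citation of \cite{B2013MfI} for the $G$-function as the analyticity condition at the second singular point). The only inessential slip is your aside that the Juddian case corresponds to the $K_n$-series terminating — the degenerate solutions come from the exponent-$0$ Frobenius solution truncating, not from the $K_n$-series — but your non-degeneracy argument via $P^{(N,0)}_N\neq0$ and Lemma \ref{IrreducibleRepre} stands regardless.
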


\begin{rem}
Since $G_-^{(N)}(g,\Delta)=G_+^{(N)}(g,-\Delta)$, there is no chance that they have a common root $g$. 
\end{rem}

\begin{rem}
Although there is no irreducible submodule other than $\mathbf{F}_{2m+1}$ in $\mathbf{V}_{1,-2m}$, it follows from the discussion above that $\nu_{\phi_1}$ is a non-zero eigenvector of the equation 
$$
\varpi_{1,-2m}(\mathcal{K})\nu_{\phi_1}= \Lambda_{-2m}\nu_{\phi_1}. 
$$
We may also confirm this equation directly by the recursion relation (\ref{RecRel_K}) for $K_n$.  
The case $N=2m-1$ is similarly observed. 
\end{rem}

\begin{rem}
None of the limit of discrete series $\mathbf{D}^{\pm}_1$ contributes any level crossing. 
\end{rem}

\begin{rem}
As we discussed in \cite{WY2014JPA}, the regular eigenvalues $\lambda$ are considered to be captured in the non-unitary principal series $\mathbf{V}_{j,a}\, (j=1,2,\, a\not \in \Z)$ (Lemma \ref{IrreducibleRepre}). The "parity" $j$ is determined by the fact that $x=\lambda+g^2$ is either a zero of $G_+(x)$  or $G_-(x)$ \cite{B2011PRL, B2013AP}. 
\end{rem}

\section*{Acknowledgements}
The author wishes to thank Daniel Braak for his helpful comments and suggestions, particularly  from a physical perspective. The present paper would have never appeared without his encouragement. This work is partially supported by Grand-in-Aid for Scientific Research (C) No. 16K05063 of JSPS
and by CREST, JST, Japan.


\begin{flushleft}
\bigskip

Masato Wakayama \par
Institute of Mathematics for Industry,\par
Kyushu University \par
744 Motooka, Nishi-ku, Fukuoka 819-0395, Japan \par
\texttt{wakayama@imi.kyushu-u.ac.jp}

\end{flushleft}


\begin{thebibliography}{99}

\bibitem{Bar1961}
V.~Bargmann, 
\textit{On a Hilbert space of analytic functions and an associated integral transform part I}, 
Comm. Pure Appl. Math. \textbf{14} (1961), 187--214. 

\bibitem{BLZ2015JPALetter}
M.T.~Batchelor, Z.-M.~Li and H.-Q.~Zhou
\textit{Energy landscape and conical intersection points of the driven Rabi model}, 
J. Phys. A: Math. Theor. \textbf{49} (2015), 01LT01 (6pp). 

\bibitem{B2011PRL}
D.~Braak:
\textit{Integrability of the Rabi Model}, 
Phys. Rev. Lett. \textbf{107} (2011), 100401--100404. 

\bibitem{B2013P}
D.~Braak:
\textit{Continued fractions and the Rabi Model}, 
J. Phys. A: Math. Theor. \textbf{46} (2013), 175301 (10pp). 

\bibitem{B2013AP}
D.~Braak:
\textit{A generalized $G$-function for  the quantum Rabi Model}, 
Ann. Phys. \textbf{525} No.3 (2013), L23--L28. 

\bibitem{B2013JPB}
D.~Braak:
\textit{Solution of the Dicke model for $N=3$}, 
J. Phys. B: At. Mol. Opt. Phys. \textbf{46} (2013), 224007. 

\bibitem{B2013MfI}
D.~Braak:
\textit{Analytical solutions of basic models in quantum optics}, 
in ``Applications + Practical Conceptualization + Mathematics = fruitful Innovation, Proceedings of the Forum of Mathematics for Industry 2014" (eds. R.~Anderssen, et al.), 75--92, Mathematics for Industry \textbf{11}, Springer, 2016.

\bibitem{bcbs2016}
D.~Braak, Q.H.~Chen, M.T.~Batchelor and E.~Solano:
\textit{Semi-classical and quantum Rabi models: in celebration of 80 years},
J. Phys. A: Math. Theor. \textbf{49} (2016), 300301


\bibitem{HR2008}
S.~Haroche and J.~M.~Raimond:
Exploring the Quantum. Atoms, Cavities and Photons, Oxford University Press, 2008.




\bibitem{Hej1976}
D. ~Hejhal: 
\textit{The Selberg trace formula and the Riemann zeta function}, Duke Math.
J. \textbf{43} (1976), 441--482.


\bibitem{H2009IUMJ}
M.~Hirokawa,
\textit{The Dicke-type crossing among eigenvalues of differential operators in a class of non-commutative oscillators}, Indiana Univ. Math. J. \textbf{58} (2009), 1493--1536.

\bibitem{HH2012}
M.~Hirokawa and F.~Hiroshima: 
\textit{Absence of energy level crossing for the ground state energy of the Rabi model},
Comm. Stoch. Anal. \textbf{8} (2014) 551-560.

\bibitem{HS2013B}
F.~Hiroshima and I.~Sasaki:
\textit{Spectral analysis of non-commutative harmonic oscillators: The lowest eigenvalue and no crossing}, 
J. Math. Anal. Appl.  \textbf{105}  (2014)  595--609.

\bibitem{HT1992}
R.~Howe and E.~C.~Tan:
Non-Abelian Harmonic Analysis. Applications of $SL(2,\R)$. Springer, 1992.


\bibitem{Ince1956}
E.L.~Ince: 
Ordinary Differential Equations, Dover, N.Y., 1956.

\bibitem{JC1963}
E.T.~Jaynes and F.W.~Cummings:
\textit{Comparison of quantum and semiclassical radiation theories with application to the beam maser}, 
Proc. IEEE  \textbf{51} (1963), 89-109.

\bibitem{J1079}
B.R.~Judd:
\textit{Exact solutions to a class of Jahn-Teller systems}, J. Phys. C: Solid State Phys.,\textbf{12}, (1979), 1685.

\bibitem{K1985JMP}
M.~Ku\'s: 
\textit{On the spectrum of a two-level system}, J. Math. Phys., \textbf{26} (1985) 2792--2795.

\bibitem{LB2015JPA}
Z.-M.~Li and M.T.~Batchelor:
\textit{Algebraic equations for the exceptional eigenspectrum of the generalized Rabi model}, 
J. Phys. A: Math. Theor. \textbf{48} (2015), 454005 (13pp). 

\bibitem{LB2016JPA}
Z.-M.~Li and M.T.~Batchelor:
\textit{Addendum to ``Algebraic equations for the exceptional eigenspectrum of the generalized Rabi model"}, 
J. Phys. A: Math. Theor. \textbf{49} (2016), 369401 (5pp). 


\bibitem{La2013}
J.~Larson:
\textit{Integrability versus quantum thermalization}, 
J. Phys. B: At. Mol. Opt. Phys. \textbf{46}  (2013), 224016-. 


\bibitem{L}
S.~Lang:
$SL_2(\mathbb R)$, Addison-Wesley,  1975.

\bibitem{MPS2013}
A.J.~Maciejewski, M.~Przybylska and T. ~Stachowiak:
\textit{Full spectrum of the Rabi model}, Phys. Letter A \textbf{378}, (2014), 16--20.

\bibitem{Ni2010}
T.~Niemczyk {\it et al.}:
\textit{Beyond the Jaynes-Cummings model: circuit QED in the ultrastrong coupling regime},
Nature Physics \textbf{6} (2010), 772-776. 

\bibitem{P2010}
A.~Parmeggiani:
Spectral Theory of Non-commutative Harmonic Oscillators: An Introduction.
Lecture Notes in Math. \textbf{1992}, Springer, 2010.

\bibitem{PW2001}
A.~Parmeggiani and M.~Wakayama: 
\textit{Oscillator representations and systems of ordinary differential equations},
Proc. Natl. Acad. Sci. USA \textbf{98} (2001), 26--30.

\bibitem{PW2002}
A.~Parmeggiani and M.~Wakayama: 
\textit{Non-commutative harmonic oscillators-I, II, Corrigenda and remarks to I},
Forum.\ Math. \textbf{14} (2002), 539--604, 669--690, ibid \textbf{15} (2003), 955--963.



\bibitem{Heun2008}
A.~Ronveaux (eds.): 
Heun's Differential Equations, Oxford University Press, 1995.


\bibitem{SL2000}
S.~Y.~Slavyanov and W.~Lay:
A Unified Theory Based on Singularities, Oxford Mathematical Monographs, 2000.

\bibitem{Sugi2016}
S.~Sugiyama:
\textit{Spectral zeta functions for the quantum Rabi models}, Nagoya Math. J., DOI 10.1017/nmj.2016.62 (2016), 1-47.


\bibitem{W2014MfI}
M.~Wakayama: 
\textit{Remarks on quantum interaction models by Lie theory and modular forms via non-commutative harmonic oscillators},
in ``Mathematical Approach to Research Problems of Science 
and Technology -- Theoretical Basis and Developments in Mathematical Modelling" 
eds. R.~Nishii, et al.,  Mathematics for Industry \textbf{5}, Springer, 2014, 17-34. 

\bibitem{W2015IMRN}
M.~Wakayama: 
\textit{Equivalence between the eigenvalue problem of non-commutative harmonic oscillators and existence of holomorphic solutions of Heun differential equations,  eigenstates degeneration and the Rabi model}, International Mathematics Research Notices, 759--794, DOI:10.1093/imrn/rnv145, 2016 (first published online May 2015).

\bibitem{WY2014JPA}
M.~Wakayama and T. Yamasaki:
\textit{The quantum Rabi model and Lie algebra representations of $\mathfrak{s}{{\mathfrak{l}}_{2}}$}, 
J. Phys. A: Math. Theor. \textbf{47} (2014), 335203 (17pp). 



\bibitem{Le2016}
Q.-T.~Xie, H.-H.~Zhong, M.T.~Batchelor and C.-H.~Lee:
\textit{The quantum Rabi model: solution and dynamics},
arXiv:1609.00434




\bibitem{ZXBL2013}
H.~Zhong, Q.~Xie, M.T.~Batchelor and C.~Lee: 
\textit{Analytical eigenstates for the quantum Rabi model},
J. Phys.  A: Math. Theor., \textbf{46}, no. 41, (2013), pp. 14. 


\end{thebibliography}
\end{document}